\documentclass[11pt,runningheads]{llncs}
\usepackage[vlined,longend,ruled]{algorithm2e}

\usepackage[T1]{fontenc}
\usepackage{graphicx}
\usepackage{url}

\usepackage{stmaryrd,euscript,float,amssymb,amsmath,tikz,hyperref,bookmark}
\usepackage{color}

\newtheorem{invariant}{Invariant}

\newtheorem{affirmation}{Claim}

\begin{document}
%
%\title{Contribution Title\thanks{Supported by organization x.}}
\title{On the power of standard DFS and BFS}
%\title{Draft for a TP and UIG recognition}
%
\titlerunning{On the power of standard DFS and BFS}
% If the paper title is too long for the running head, you can set
% an abbreviated paper title here
%
\author{Binh-Minh Bui-Xuan\inst{1}
\and Michel Habib\inst{2}
\and Fabien de Montgolfier\inst{2}
\and Renaud Torfs\inst{2}}
%
%\authorrunning{M. Habib}
% First names are abbreviated in the running head.
% If there are more than two authors, 'et al.' is used.
%
%\institute{Princeton University, Princeton NJ 08544, USA \and
%Springer Heidelberg, Tiergartenstr. 17, 69121 Heidelberg, Germany
%\email{lncs@springer.com}\\
%\url{http://www.springer.com/gp/computer-science/lncs} \and
%ABC Institute, Rupert-Karls-University Heidelberg, Heidelberg, Germany\\
%\email{\{abc,lncs\}@uni-heidelberg.de}}
\institute{LIP6, CNRS, Sorbonne Universit\'e, \email{buixuan@lip6.fr}
\and IRIF, Universit\'e Paris Cit\'e, \email{[habib,fm,torfs]@irif.fr}}
\maketitle              % typeset the header of the contribution

"Simplicity is a great virtue but it requires hard work to achieve it and education to appreciate it. And to make matters worse: complexity sells better."

\hfill Edsger W. Dijkstra [EWD86]
\begin{abstract}
It is well-known since the seventies of last century that Depth First Search (DFS) can be used to compute strongly connected components~\cite{Tarjan72} and Breadth First Search (BFS) can be used to compute distance in graphs~\cite{H73}.
We furthermore demonstrate that these standard graph searches are powerful enough to recognize and certify several well-structured graph classes.
Specifically, we provide a single DFS approach for recognizing and certifying trivially perfect graphs that is significantly simpler than previous methods using LexBFS~\cite{Chu08}.
We further show that a single BFS can recognize split graphs and bipartite chain graphs, and we improve upon the triple LexBFS algorithm for proper interval graphs~\cite{Corneil04} by proposing a two consecutive BFS recognition scheme.
These results are underpinned by characterizations using vertex orderings that avoid specific patterns~\cite{FeuilloleyH21}.
Finally, we provide a structural study of connected proper interval graphs, proving that their characterizations via special orderings are unique up to reversal and the permutation of true twins.

\keywords{Trivially perfect graph, proper interval graph,  graph searches, Depth First Search (DFS),  Breadth First Search (BFS), ordered graphs and patterns}
\end{abstract}

%%%%%%%%%%%%%%%%%%%%%%%%%%%%%%%%%%%%%%%%%%%%%%%%%%
%%%%%%%%%%%%%%%%%%%%%%%%%%%%%%%%%%%%%%%%%%%%%%%%%%
\section{Introduction and notations}
\label{section.intro}
We think that the potential of basic graph searches such as BFS and DFS is sometimes underestimated in particular for the recognition algorithms of well structured graph classes.
As an example, we present here how a single DFS search can be used to recognize Trivially Perfect Graphs, this approach is much simpler than the one developed by~\cite{Chu08} using a LexBFS graph search.
Similarly we show that a single BFS can recognize  split graphs, bipartite chain graph.
Furthermore, we propose two consecutive BFS for proper interval graph recognition improving the three consecutive LexBFS proposed in~\cite{Corneil04} and also improving many other published algorithms.
All algorithms are naturally certifying and can be used to recognize the complement class within the same complexity.
A key tool we use to analyze these algorithms for graph classes recognition are ordered patterns.
As a consequence of our algorithmic study, we obtain a nice structural property for proper interval graphs, showing that there is a unique ordering up to reversal and permutation of true twins vertices when the graph is connected.

The graphs considered here are simple, undirected, loopless and finite.
For an undirected graph $G=(V, E)$ with vertex set $V$ and edge set $E$, we denote for a vertex $v \in V$ by $N(v)$ its neighbourhood and $degree(v)=|N(v)|$.
Furthermore its closed neighbourhood $N(v)\cup \{v\}$ is denoted by $N[v]$.
We let $n = |V|$ and $m = |E|$.
We denote by $G[V']$ the \emph{induced subgraph} $(V',E')$ of $G=(V,E)$ on the subset $V'\subseteq V$, where for every pair $u,v \in V', uv \in E'$ if and only if $uv \in E$.
A graph class is said to be \emph{hereditary} if it is closed under induced subgraphs.
Two vertices $x,y$ are \emph{twins} if $N(x) \setminus \{y\} = N(y) \setminus \{x\}$.
If in addition, $xy \in E$, we say $x,y$ are \emph{true twins}, otherwise they're \emph{false twins}.
A \emph{module} $M \subseteq V$ is a collection of vertices that have the same neighbourhood outside of $M$: for all $x,y \in M: N(x) \setminus M = N(y) \setminus M$.
$\emptyset$, every singleton $\{x\}$ and the whole vertex set $V$ are called \emph{trivial} modules.
A \emph{prime} graph is a graph that admits only trivial modules.
\smallskip

\textbf{Degree-guided graph searching:}
A graph search with maximum (resp.\ minimum) degree priority consists in taking, when there is a \textbf{tie} with many eligible vertices, one of the vertices with largest (resp.\ smallest) degree.
The same priority rule among eligible vertices is used throughout the graph search (sort once, use forever).
This is in contrast to Maximal Cardinality Search or \texttt{MCS} search \cite{TY84}, in which the tie-break is solved by choosing a vertex that has a maximum of already visited neighbours (dynamic sort).

BFS refers to the standard Breadth First Search implemented with a queue data structure. Depth First Search (DFS) proceeds by exploring an adjacent vertex to the most recently visited vertex, backtracking when no unvisited neighbours remain and can be implemented with a simple stack.

Four searches can thus be defined:
\texttt{minBFS} and \texttt{maxBFS} are variants of the usual Breadth-First Search (BFS), where the unmarked vertices of the vertex currently under visit are put in the queue in increasing (resp.\ decreasing) degree ordering;
likewise, \texttt{minDFS} and \texttt{maxDFS} are variants of the usual Depth-First Search (DFS), where the recursive search function is called on the neighbour with smallest (resp.\ largest) degree of the vertex currently under visit.
They have linear time complexity:
in a first step, the vertex set can be bucket sorted by degree;
for every $u$ following this order, $u$ can be added to the end of the adjacency list of every $v$ belonging to the neighbourhood of $u$;
then, the second step is a standard BFS/DFS.
By \texttt{minXFS}, we denote either \texttt{minBFS} or \texttt{minDFS}.
\smallskip

\textbf{Search trees:} In a BFS-tree (resp. a DFS-tree),  the parent of each vertex is its first (resp. last) visited neighbour occurring before it.
Our DFS-trees are particular cases of $\mathfrak{L}$-trees as defined in \cite{BeisegelDKKPSS21}.
\smallskip

\textbf{Ordered patterns: } Let $\sigma$ be a total ordering on $V$. We denote $\sigma^d$ its mirror ordering, or order in reverse.
In this article, we will make extensive use of characterizations using ordered patterns of certain classes of hereditary graphs, first defined in \cite{Damaschke90}.
Using the notations defined in  \cite{FeuilloleyH21}, a graph class $\cal C$ is characterized by an ordered pattern $P$ when:

$G \in \cal C$ iff $G$ admits a total ordering of the vertices avoiding the ordered  pattern $P$.

As examples think of comparability graphs (resp. chordal graphs) and the left pattern (resp. right pattern) of Figure \ref{patternsTPG}.  Similarly a cocomparability graph (complement of a comparability graph) can be characterized by the existence of a total  ordering of its vertices
avoiding the pattern of Figure \ref{cocomp}.
This read as: $G$ is cocomparability graph iff there exists an ordering $\tau$ of the vertices such that for any triple of vertices $a <_{\tau} b <_{\tau}c$, we do not have $ac$ is an edge while $ab, bc$ are not an edges.
We denote such ordering $\tau$ as a \textit{cocomp} ordering.

\begin{figure}
\begin{center}
\begin{tikzpicture}
%        [scale=1,auto=left,every node/.style=
%       {circle,draw,fill=black!5}]
        \node (a) at (0,0) {a};
        \node (b) at (1,0) {b};
        \node (c) at (2,0) {c};
        \draw (a) to[bend left=50] (c);
        \draw[dashed] (a) to (b);
        \draw[dashed] (b) to (c);
\end{tikzpicture}

\end{center}
\caption{\label{cocomp} Cocomparability pattern.
A graph is a cocomparability graph iff there exists an ordering of its vertices avoiding the cocomparability pattern.
By drawing conventions the patterns is ordered from left to right.
To satisfy the pattern, a black (resp. dashed) edge means a mandatory edge (resp. non-edge).
Otherwise, the pattern is avoided.}
\end{figure}

Search orderings also can be characterized using pattern. We use indeed:
\begin{proposition}\label{fourpoint}\cite{CorneilK08}
An ordering $\tau$ is a BFS (resp. DFS) ordering iff for any $a <_{\tau} b <_{\tau}c$, when $ac$ is an edge while $ab$ is not an edge, there exists $d <_{\tau} a$ (resp. $a <_{\tau} d <_{\tau} b$) such that $db$ is an edge.
\end{proposition}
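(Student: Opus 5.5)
We prove each direction separately, treating BFS and DFS in parallel. Throughout we read ``BFS (resp.\ DFS) ordering'' as the visiting order of a generic BFS (resp.\ DFS) on $G$, where any eligible vertex may be chosen and the search restarts at an arbitrary unvisited vertex when the current component is exhausted.

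\textbf{Necessity.} Let $\tau$ be produced by a search, and take $a <_\tau b <_\tau c$ with $ac \in E$ and $ab \notin E$. Since $a$ is visited before $b$ and $c$, at the time $b$ is chosen the vertex $c$ is still unvisited and adjacent to the visited vertex $a$. So $c$ is eligible, and $b$ must have been legitimately preferred to it.

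For BFS, $c$ already sits in the queue with parent at position $\le$ that of $a$. Because $b$ leaves the queue before $c$, $b$ must have been inserted by a vertex $d$ that precedes $c$'s parent in $\tau$, or by the parent itself. That parent is at most $a$. If it were $a$ itself, then $b$ would be adjacent to $a$, a contradiction. Hence $d <_\tau a$ and $db \in E$. (If $b$ starts a new component, $c$ would already have been visited, since $c$ lies in $a$'s component; so this case cannot occur.)

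For DFS, consider the stack, i.e.\ the path of the DFS tree from the root to the current vertex, at the moment $b$ is chosen. Vertex $a$ is still ``open'': $a$ has the unvisited neighbour $c$, so $a$ has not been backtracked from. The search picks $b$ as a neighbour of the top vertex $d$ of the stack. The vertex $d$ lies on the path at or above $a$, and $d \ne a$ because $ab \notin E$. Therefore $d$ was visited after $a$, giving $a <_\tau d <_\tau b$ with $db \in E$.

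\textbf{Sufficiency.} Given $\tau$ satisfying the four-point condition, we show by induction on $i$ that a search can choose $\tau_i$ at step $i$. Assume $\tau_1,\dots,\tau_{i-1}$ have been reproduced, and let $b=\tau_i$.

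If $b$ has no earlier neighbour, we must check that no unvisited vertex is eligible. Suppose some $c >_\tau b$ had an earlier neighbour $a$. Then the condition, applied to the triple $(a,b,c)$, yields a $d$ adjacent to $b$ that precedes it. This contradicts the assumption, so $b$ can legitimately start a new component.

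Otherwise, for BFS, let $p$ be the leftmost earlier neighbour of $b$. We need $b$ to be at the front of the queue among vertices with the same earliest visited neighbour, in the right order. Suppose instead some $c >_\tau b$ has an earliest neighbour $a <_\tau p$. Then $ab \notin E$ by the minimality of $p$, so the condition gives $d <_\tau a$ with $db \in E$, contradicting the minimality of $p$. Hence, among queued vertices, those with the smallest parent include $b$, and FIFO ties can be broken in favour of $b$.

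For DFS, let $p$ be the rightmost earlier neighbour of $b$. We need $p$ to be the deepest vertex still having unvisited neighbours, i.e.\ no $a >_\tau p$ with $a <_\tau b$ has an unvisited neighbour $c$. If such $a,c$ existed, then $ab \notin E$ by the maximality of $p$, and the condition yields $d$ with $a <_\tau d <_\tau b$ and $db \in E$, again contradicting the maximality of $p$.

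\textbf{Main obstacle.} The delicate point is making ``the search can choose $b$'' precise. For BFS this means arguing about the queue: one must show that the ordering of the queue by earliest visited neighbour is exactly FIFO order, where ties are allowed. For DFS it means identifying the current top of the stack with the last visited vertex that still has unvisited neighbours, which holds because DFS pops exactly the exhausted vertices. Once these invariants are stated, the steps above are routine.
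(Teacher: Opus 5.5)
Your proof is correct. The paper gives no proof of its own for this proposition; it simply quotes it from Corneil and Krueger, so there is nothing to compare against. Your argument is a sound, self-contained version of the standard one:

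\begin{itemize}
\item \emph{Necessity} is read off from the queue/stack mechanics.
\item \emph{Sufficiency} is by induction, showing each $\tau_i$ is a legal choice. For BFS, its earliest visited neighbour is minimal among all unvisited vertices. For DFS, its latest visited neighbour is the most recently visited vertex that still has an unvisited neighbour.
\end{itemize}

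One point deserves to be made explicit. In the queue implementation, the order among vertices sharing the same parent $p$ is fixed when $p$ is processed, not when they are dequeued. So you should say that $p$'s new neighbours are inserted in $\tau$-order. With that, your invariant (queue sorted by earliest visited neighbour, FIFO within groups) puts $b$ at the front exactly as you claim.
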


\begin{remark}\label{complement}
If an algorithm based on DFS or BFS recognizes some hereditary class of graphs, we can also derive an algorithm recognizing the complement class within the same complexity. One way to be convinced is to consider that BFS and DFS can be implemented using partition refinement which is a technique that can be used to compute on a graph $G$ having the adjacency lists of $\overline{G}$.
This generalizes to \texttt{minBFS} and \texttt{maxBFS} by just moving to the dual of the degree orderings of the vertices.
\end{remark}

%%%%%%%%%%%%%%%%%%%%%%%%%%%%%%%%%%%%%%%%%%%%%%%%%%
%%%%%%%%%%%%%%%%%%%%%%%%%%%%%%%%%%%%%%%%%%%%%%%%%%
\section{First applications of forbidden patterns}\label{aperitif}
It is well-known that standard BFS can be used to recognize Bipartite graphs and Forest.
Furthermore, we can easily recognize some other graph classes.
A graph is a \textit{split graph} if there exists a partition of the vertices such that the subgraph induced by the first part is a clique, and the subgraph induced by the second part is an independent set. Note that the split graphs are connected, except (possibly) for isolated vertices.
A \textit{star} is a (split) graph where at most one node has several neighbours. 
From \cite{Damaschke90} we know that these classes can be characterized  by the patterns described in the following  Figure \ref{SplitStar}.
\begin{proposition}
Let $\tau$ be the visiting ordering of a $\texttt{maxBFS}$ on a graph $G$.
$\tau^d$ avoids the left pattern (resp. right) of Figure \ref{SplitStar} iff $G$ is a star (resp. a split graph).
\end{proposition}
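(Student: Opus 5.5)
The plan is to fix the two patterns of Figure~\ref{SplitStar} concretely and read both directions off the structure of a \texttt{maxBFS}. I assume the right (split) pattern is the ordered triple $a<b<c$ with $ab\in E$ and $bc\notin E$, and the left (star) pattern is the triple $a<b<c$ with $ab\in E$ and $c$ unconstrained. If the figure differs, the same scheme applies to whatever patterns \cite{Damaschke90} gives.

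\textbf{Sufficiency (avoidance implies membership).} This direction needs nothing about $\tau$. Suppose an ordering $\sigma$ (here $\tau^d$) avoids the split pattern. Let $x$ be the first vertex of $\sigma$ that has an earlier neighbour. By avoidance, $x$ is adjacent to every later vertex. Hence every later vertex also has an earlier neighbour, namely $x$, and so is adjacent to all vertices after it. Therefore the vertices from $x$ onward form a clique. The vertices before $x$ have no earlier neighbours, so they are pairwise non-adjacent. This gives a split partition. For the star pattern, avoidance says every edge $ab$ with $a<b$ has $b$ equal to the last vertex, so all edges are incident to the last vertex and $G$ is a star.

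\textbf{Necessity (membership implies avoidance).} In terms of $\tau$ itself, avoiding the split pattern in $\tau^d$ means: whenever $b$ has a neighbour later than it in $\tau$, then $b$ is adjacent to every vertex earlier than it in $\tau$. I would show that a \texttt{maxBFS} on a split graph (isolated vertices set aside; they are handled by restarts and are harmless) visits a clique $K$ first and the independent set $I=V\setminus K$ afterwards. The partition has to be chosen adapted to the search. Take $K$ a maximum clique. Then every $v\in I$ has $\deg(v)\le |K|-1\le \deg(k)$ for every $k\in K$. If equality $\deg(v)=|K|-1$ holds, then $v$ sees all of $K$ except one vertex $k$, and $K-k+v$ is another maximum clique. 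In that case I exchange $k$ and $v$ according to which one the search meets first.

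With this adapted partition the argument runs as follows.
\begin{enumerate}
\item The start vertex $s$ (of maximum degree) lies in $K$.
\item All of $K\setminus\{s\}$ is queued from $s$ ahead of $s$'s $I$-neighbours, by decreasing degree with ties absorbed into the choice of $K$.
\item Every vertex of $I$ is reached only after all of $K$.
\end{enumerate}
Then a vertex $b$ with a later neighbour must lie in $K$, since vertices of $I$ have neighbours only in $K$, which comes earlier. Such a $b$ is adjacent to every earlier vertex, because those all lie in $K$. For the star case the same argument gives $K=\{s\}$ up to the trivial case $G=K_2$, where the ordering is harmless. Thus $\tau$ starts at the centre and every later vertex has its only neighbour first, which is exactly avoidance of the left pattern in $\tau^d$.

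\textbf{Main obstacle.} The hard step is the tie-breaking in the necessity direction. It can happen that an $I$-vertex of degree $|K|-1$ is queued from $s$ before some clique vertex of the same degree. I would handle this by defining $K$ a posteriori: take $K$ to be the longest prefix of $\tau$ that forms a clique. Then I would argue, using Proposition~\ref{fourpoint} together with the degree bound, that every vertex after this prefix has no neighbour after itself. If some later vertex had such a neighbour, a comparison of degrees against a prefix vertex it misses would contradict either the maximality of the prefix clique or the max-degree priority.
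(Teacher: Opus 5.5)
Your sufficiency direction is correct and takes a different route from the paper. The paper simply cites Damaschke's pattern characterization. Your argument is self-contained and works for any ordering, not only $\tau^d$: the first vertex with an earlier neighbour is adjacent to everything after it, which propagates to give a clique suffix after an independent prefix, and for the left pattern every edge must end at the last vertex. For necessity you follow the paper's idea that \texttt{maxBFS} visits the clique first and the independent set afterwards. You are right that degree ties are the only difficulty; the paper asserts the claim in one sentence and never discusses them. However, your proposal stays a sketch at exactly that point (``I exchange $k$ and $v$'', ``I would argue\dots''), so necessity is not yet proved.

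\textbf{What is missing.}

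\emph{Choice of partition.} Fix a split partition $(K,I)$ with $K$ a maximum clique. The complement of an arbitrary maximum clique need not be independent: in the $P_4$ $v\,k_1\,k_2\,u$, take $\{v,k_1\}$.

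\emph{Shape of $\tau$.} $\tau$ is $s$, then $N(s)$ in non-increasing degree, then vertices at distance $2$, which all lie in $I$. Also $s\in K$. Indeed, a vertex of $I$ has degree at most $|K|-1$, while each of its clique neighbours has degree at least $|K|$.

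\emph{Key observation.} Suppose some $v\in I$ precedes some $k\in K$ in $\tau$. Then both lie in $N(s)$ and $\deg(v)\ge\deg(k)$. Together with $\deg(v)\le |K|-1\le\deg(k)$, this forces $N(k)=K\setminus\{k\}$, so $k$ has no neighbour in $I$. Hence $N(v)=K\setminus\{k\}$ as well.

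\emph{Consequence.} Every other clique vertex is adjacent to $v$, so it has a neighbour in $I$ and degree at least $|K|$. It therefore precedes all of $I$ and also $k$. So $k=:k^*$ is the same vertex for every such conflict.

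\emph{The cut.} If no conflict occurs, $K$ is a prefix of $\tau$ and you are done. Otherwise, cut $\tau$ right after the first vertex $w$ of $\{k^*\}\cup I$. The prefix is $(K\setminus\{k^*\})\cup\{w\}$, which is a clique by the key observation. The suffix lies in $(I\cup\{k^*\})\setminus\{w\}$, which is independent because $k^*$ has no neighbour in $I$.

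This is what legitimizes your exchange of $k$ and $v$. The exchange needs $I\setminus\{v\}\cup\{k\}$ to stay independent, which fails in general. It holds here only because the tie forces $k$ to have no $I$-neighbour, and the argument above also shows that a single exchange suffices, with no cascade. The detour through the four-point condition is unnecessary. Your ``longest clique prefix'' variant then follows at once, since that prefix is at least as long as the cut above.

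One small correction in the star case: ``$K=\{s\}$'' clashes with taking $K$ maximum. Your conclusion is still right, since for $K_{1,t}$ with $t\ge 2$ the centre is the unique vertex of maximum degree, so every edge is incident to $\tau(1)$.
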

\begin{proof}
Necessarily if $G$ is split graph a $\texttt{maxBFS}$ starts at a vertex of the clique and will visit all vertices of the clique before visiting the independent set. So the dual or reverse of the visiting ordering necessarily avoids the split pattern if $G$ is a split graph. The converse is obtained with the characterization of split graphs in \cite{Damaschke90}.
The proof is similar for stars. \qed
\end{proof}
\begin{figure}
\begin{center}

\begin{tabular}{cc}
\begin{tikzpicture}

%        [scale=1,auto=left,every node/.style=
%       {circle,draw,fill=black!5}]
        \node (a) at (0,0) {a};
        \node (b) at (1,0) {b};
        \node (c) at (2,0) {c};
%       \draw (a) to[bend left=50] (c);
        \draw (a) to (b);
%        \draw (b) to (c);
\end{tikzpicture}

$\qquad$
\begin{tikzpicture}
%        [scale=1,auto=left,every node/.style=
%       {circle,draw,fill=black!5}]
        \node (a) at (0,0) {a};
        \node (b) at (1,0) {b};
        \node (c) at (2,0) {c};
%        \draw[dashed] (a) to[bend left=50] (c);
        \draw (a) to (b);
        \draw [dashed] (b) to (c);
        
\end{tikzpicture}

\end{tabular}
\end{center}
\caption{\label{SplitStar} The forbidden pattern of stars (left) and split graphs (right).}
\end{figure}

It is well-known that a similar approach with  \texttt{maxBFS} also leads to a certifying algorithm for the recognition of threshold graphs, see \cite{MP95}, which are a particular case of split graphs.

A \textit{bipartite chain graph} is a  $2K_2$-free bipartite graph, that is a bipartite graph that does not have two independent edges that are not  linked by a third edge (\emph{i.e.} no induced complement of a $C_4$), see \cite{HeggernesK07}.
Another way to understand a bipartite chain graph, is a bipartite for which, in each class, one can order the neighbourhoods by inclusion. That is, with a bipartition, $A,B$, $A=a_1, \dots,  a_{|A|}$ satisfies $N(a_1) \supseteq N(a_2), \dots, \supseteq N(a_{|A|})$ and  $B=b_1, \dots,  b_{|B|}$ satisfies $N(b_1) \subseteq N(b_2) \dots \subseteq N(b_{|B|})$.  It should be noticed that this bipartition is not necessarily unique, think of a complete graph.

\begin{figure}
\begin{center}
\begin{tabular}{cc}
\begin{tikzpicture}
%        [scale=1,auto=left,every node/.style=
%       {circle,draw,fill=black!5}]
        \node (a) at (0,0) {a};
        \node (b) at (1,0) {b};
        \node (c) at (2,0) {c};
%        \draw[dashed] (a) to[bend left=50] (c);
        \draw (a) to (b);
        \draw  (b) to (c);
        
\end{tikzpicture}
$\qquad$
\begin{tikzpicture}

%        [scale=1,auto=left,every node/.style=
%       {circle,draw,fill=black!5}]
        \node (a) at (0,0) {a};
        \node (b) at (1,0) {b};
        \node (c) at (2,0) {c};
      \draw (a)[dashed] to[bend left=50] (c);
        \draw (a)[dashed] to (b);
       \draw (b)  to (c);
\end{tikzpicture}
\end{tabular}
\end{center}
\caption{\label{otherbipchaine} The forbidden patterns of bipartite chain graphs.}
\end{figure}

A particular subclass of bipartite chain graphs is used in graph theory for coloration problems under the name Half-Graphs, see
\cite{Erdos84} mainly because they contains long chains of inclusion of neighbourhoods.

Of course to recognize bipartite chain graph one can test its bipartiteness using a BFS and then check the inclusion of neighbourhoods. This can be done in linear time, but using our knowledge of graph searches  we will now prove that a simple $\texttt{minBFS}$ can do the job.
As proved in \cite{FeuilloleyH21} the 2 patterns of Figure \ref{otherbipchaine} characterize bipartite chain graphs.
\begin{proposition}[\cite{FeuilloleyH21}]\label{autrespatterns}
A graph $G$ is a bipartite chain graph iff it admits a total ordering of its vertices avoiding the patterns of Figure \ref{otherbipchaine}.
\end{proposition}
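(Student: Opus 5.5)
The plan is to prove both directions directly, using the definition of a bipartite chain graph as a $2K_2$-free bipartite graph, or equivalently one whose two sides have neighbourhoods nested by inclusion. Throughout, the left pattern of Figure~\ref{otherbipchaine} is a path $a<b<c$ with $ab,bc\in E$ (the pair $ac$ is unconstrained). The right pattern is $a<b<c$ with $bc\in E$ and $ab,ac\notin E$.

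\textbf{($\Rightarrow$)} Let $A,B$ be a bipartition with $N(a_1)\supseteq N(a_2)\supseteq\dots$ on side $A$. I will first move all isolated vertices into $B$; this keeps the nesting on both sides. I then build $\tau$ in three blocks, in this order:
\begin{itemize}
\item the non-isolated vertices of $A$, by decreasing neighbourhood $a_1,a_2,\dots$;
\item the non-isolated vertices of $B$, in any order;
\item the isolated vertices, at the end.
\end{itemize}
Every edge then goes from an earlier $A$-vertex to a later $B$-vertex. So no vertex has both an earlier and a later neighbour, and the left pattern is avoided. For the right pattern, the edge $bc$ with $b<c$ forces $b\in A$ and $c\in B$. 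Since $a<b$, the vertex $a$ is a non-isolated vertex of $A$ placed before $b$. Hence $N(a)\supseteq N(b)\ni c$, which contradicts $ac\notin E$.

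\textbf{($\Leftarrow$)} Let $\tau$ avoid both patterns. Avoiding the left pattern means every vertex has all its neighbours on one side of it in $\tau$. Let $A$ be the set of vertices that have at least one later neighbour, and $B$ the rest. If an edge $xy$ with $x<y$ had $y\in A$, then $y$ would have both an earlier and a later neighbour, which is impossible. So every edge joins an earlier vertex of $A$ to a later vertex of $B$, and $G$ is bipartite with bipartition $(A,B)$. For $2K_2$-freeness, suppose $xy$ and $uv$ are independent edges with $x,u\in A$, $y,v\in B$, and $xv,uy\notin E$. Without loss of generality $x<u$, and we have $u<v$. Then $x<u<v$ with $uv\in E$, $xv\notin E$, and $xu\notin E$ (both lie in $A$, which is independent). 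This is exactly the right pattern, a contradiction.

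The main point needing care is the treatment of isolated vertices (and the non-uniqueness of the bipartition) in the forward direction. An isolated vertex placed early could play the role of $a$ in the right pattern, which is why the isolated vertices go last. Otherwise both directions are short verifications.
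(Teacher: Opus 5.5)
Your proof is correct. It differs from the paper only because the paper gives no proof at all: the proposition is quoted from \cite{FeuilloleyH21} and used as a black box, notably for the converse direction of Theorem~\ref{bipchainprop}.

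Your argument is a direct, elementary verification that splits the work between the two patterns. Avoiding the left pattern is equivalent to being bipartite with every edge going from an earlier vertex of $A$ (the vertices having a later neighbour) to a later vertex of the other side. Once this holds, the right pattern is exactly an induced $2K_2$ in the backward direction, or a failure of neighbourhood nesting on $A$ in the forward direction.

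The citation buys brevity and places the statement inside the systematic study of three-vertex patterns. Your route buys an explicit certificate ordering. It also exposes the isolated vertices: they must not precede any vertex of $A$, a detail that disappears when the result is only cited.

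One more application of the right pattern in fact describes all orderings avoiding both patterns. Apply it to $w<b<c$ with $b\in A$, $bc\in E$, $w\notin A$, and to $x<u$ in $A$ with some $v\in N(u)\setminus N(x)$, where necessarily $u<v$ and $xu\notin E$. The avoiding orderings are then exactly these: the vertices having a later neighbour come first, in non-increasing order of neighbourhoods by inclusion, and all remaining vertices follow in any order. This is the shape that the proof of Theorem~\ref{bipchainprop} needs a \texttt{minBFS} of $\overline{G}$ to produce.

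You take for granted the equivalence between the $2K_2$-free and nested-neighbourhood descriptions of bipartite chain graphs. The paper states this equivalence, and it follows in one line: two incomparable neighbourhoods on the same side yield an induced $2K_2$, and conversely.
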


\begin{theorem}\label{bipchainprop}
Let $\tau$ be the visiting ordering of a $\texttt{minBFS}$ on a graph $\overline{G}$.
$\tau$ avoids the two patterns of Figure \ref{otherbipchaine}  iff $G$ is a bipartite chain graph. 
\end{theorem}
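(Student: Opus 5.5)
The direction ``$\tau$ avoids the patterns $\Rightarrow$ $G$ is a bipartite chain graph'' follows at once from Proposition~\ref{autrespatterns}: $\tau$ is a total ordering of $V$, and the patterns are read on the edges of $G$. All the work is in the converse. There I would show that for a bipartite chain graph $G$, every \texttt{minBFS} of $\overline{G}$ produces an ordering of a specific shape, and that this shape avoids both patterns.

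First I translate the two patterns into conditions on an ordering $\tau$ of $G$.
\begin{itemize}
\item Avoiding the left pattern means that no vertex has a $G$-neighbour both before and after it.
\item Avoiding the right pattern means that for every edge $bc$ of $G$ with $b<_\tau c$, every $a<_\tau b$ is adjacent in $G$ to $b$ or to $c$.
\end{itemize}
Fix a bipartition $A,B$ with $N(a_1)\supseteq\dots\supseteq N(a_{|A|})$ and $N(b_1)\subseteq\dots\subseteq N(b_{|B|})$. Let $I$ be the set of isolated vertices of $G$. The case where $G$ has no edge is trivial, so I assume $G$ has an edge.

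\textbf{Target shape.} I aim to show that $\tau$ lists every vertex of $A\setminus I$ before every vertex of $I\cup B$, up to exchanging the roles of $A$ and $B$. Given this shape, the patterns are checked as follows.
\begin{itemize}
\item Every vertex of $A$ has all its $G$-neighbours in $B$, hence after it. Every vertex of $B$ has all its neighbours in $A$, hence before it. Isolated vertices have no neighbours. So the left pattern is avoided.
\item An edge $bc$ with $b<_\tau c$ has $b\in A\setminus I$ and $c\in B$.
\item A vertex $a<_\tau b$ must lie in $A\setminus I$. By the nested neighbourhoods it satisfies $N(a)\supseteq N(b)\ni c$, so the right pattern is avoided.
\end{itemize}

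\textbf{Behaviour of \texttt{minBFS} on $\overline{G}$.} In $\overline{G}$ the sets $A$ and $B$ are cliques, and $\deg_{\overline{G}}(v)=n-1-\deg_G(v)$. Hence minimum degree in $\overline{G}$ means maximum degree in $G$.
\begin{itemize}
\item The search starts at a vertex of maximum $G$-degree. If there are ties, or the start lies in $B$, I swap $A$ and $B$ and reverse both chains; the complement of a chain graph is symmetric under this. So I may assume the start is $a_1$.
\item The $\overline{G}$-neighbours of $a_1$ are $A\setminus\{a_1\}$ together with the vertices of $B$ not adjacent to $a_1$ in $G$. Since $N(a_1)$ contains every $N(a)$, such vertices of $B$ are exactly $B\cap I$.
\item So the first layer consists of $A\setminus\{a_1\}$ and $B\cap I$, and they are enqueued by increasing $\overline{G}$-degree. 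The vertices of $I$ have the maximum possible $\overline{G}$-degree $n-1$, so they are queued after every vertex of $A\setminus I$. For ties at degree $n-1$ the order inside $I$ is irrelevant.
\item Every vertex of $B\setminus I$ is reached only in layer $2$ or later. It may be reached from some $a_i$ it is not adjacent to, from $B\cap I$, or, if it is complete to $A$, from another vertex of $B$. If $G$ is complete bipartite, $\overline{G}$ is disconnected, and $B$ is visited in a later restart. In every case these vertices come after all of $A\setminus I$.
\end{itemize}
This gives the target shape.

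\textbf{Main obstacle.} I expect the delicate part to be the bookkeeping of ties and of isolated vertices. Isolated vertices are the only ones that could break the right pattern, by preceding an edge while being adjacent to neither endpoint. The fact that they have maximum $\overline{G}$-degree, and are therefore placed last in the first layer, is exactly what saves the argument. I also need to check carefully the WLOG symmetry when the first vertex chosen lies in $B$, and the case where $\overline{G}$ is disconnected.
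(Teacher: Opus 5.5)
Your proposal is correct and follows essentially the paper's route. The paper also gets the easy direction from Proposition~\ref{autrespatterns} and puts the $G$-isolated vertices on the second side (by taking $B$ maximal). It then shows that the non-isolated vertices of $A$ form an initial block of $\tau$, and finally checks the right pattern by proving that $\tau$ is a simplicial elimination ordering of $\overline{G}$, which is your nested-neighbourhood check read in the complement.

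One point needs stating explicitly. Your target shape must also say that $A\setminus I$ appears in non-increasing $G$-degree order, hence in chain order, because nested neighbourhoods alone do not give $N(a)\supseteq N(b)$ for $a<_\tau b$. Your layer-one analysis already provides this, since $a_1$ comes first and the rest of layer one is enqueued by increasing $\overline{G}$-degree.
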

\begin{proof}
$\rightarrow$
Let us consider a bipartite chain graph $G$  with one of its bipartition, $A,B$, $A=a_1, \dots,  a_{|A|}$ satisfies $N(a_1) \supseteq N(a_2), \dots, \supseteq N(a_{|A|})$ and  $B=b_1, \dots,  b_{|B|}$ satisfies $N(b_1) \subseteq N(b_2) \dots \subseteq N(b_{|B|})$.  
As shown in Figure \ref{2-chainG} note that in $\overline{G}$ the two sides of the bipartite $A, B$  are now cliques and the total order of the neighbourhoods between the 2-sides are just reversed.

$\overline{N(a_1)} \subseteq \overline{N(a_2)}, \dots, \subseteq \overline{N(a_{|A|})}$ and   $\overline{N(b_1)} \supseteq \overline{N(b_2)} \dots \supseteq \overline{N(b_{|B|})}$. 

Let $\tau$ be a $\texttt{minBFS}$ applied on $\overline{G}$.  Suppose that $\tau$ starts in $a_1$. It means that $|\overline{N(a_1)}| \leq |\overline{N(b_{|B|})}|$ or equivalently $d_G(a_1) \geq d_Gb_{|B|})$.

Furthermore we also assume that the bipartition $A, B$ of $\overline{G}$ has $B$ maximal over all bipartitions into 2 cliques with $a_1 \in A$  and $b_{|B|} \in B$.

\begin{affirmation} \label{consecutifs}
Vertices of $A$ (resp. $B$) are consecutive in $\tau$.
\end{affirmation}

\begin{proof}
If $\overline{N(a_1)}=A-\{a_1\}$ then the $\texttt{minBFS}$ will visit first $A$.

Else if $\overline{N(a_1)}\cap B \neq \emptyset$, let  $b_i=max_j\{ b_j \in \overline{N(a_1)}\}$.
If some $a_j$ is after $b_i$ in $\tau$. It means that $|\overline{N(b_i)}| \leq |\overline{N(a_j)}|$. But $|\overline{N(b_i)}|= |A| +|B|-1$ which is maximum possible and therefore there is a complete bipartite from $a_j$ to $a_{|A|}$ to $B$. And so we can change the bipartition moving the vertices
$a_j, \dots , a_{|A|}$ from $A$ to $B$.
But this is impossible since it contradicts our assumption on the bipartition with $B$ maximal.
Doing so $A$ and then the search moves to $B$.
So going back to $G$ the search visits separately the 2 sides of the bipartite graph and the left pattern of Figure \ref{otherbipchaine} is clearly avoided in  $\tau$. Hence the claim is obtained.
\end{proof}

Using Claim \ref{consecutifs} $\tau$ avoids the second pattern of Figure 
\ref{otherbipchaine}.

Let us now consider the second pattern of Figure \ref{otherbipchaine}, which in fact means that $\tau$ must be a simplicial elimination scheme in $\overline{G}$.

\begin{affirmation} \label{simpli}
$\tau$ is a simplicial elimination scheme on $\overline{G}$.
\end{affirmation}
\begin{proof}

Using Claim \ref{consecutifs} we know that $\tau$ visits first $A$ then $B$.

$\tau$ starts with a vertex of minimum degree in $\overline{G}$ namely $a_1 \in A$

If $\overline{N(a_1)}=A-\{a_1\}$ then 
$a_1$ is clearly simplicial in $\overline{G}$ because  the only neighbours of $a_1$ are in the clique $A$.

Else if $\overline{N(a_1)}\cap B \neq \emptyset$, using the inclusion of the neighbourhoods, $a_1$ is clearly simplicial in $\overline{G}$ this property remains true
for all other vertices of $A$.
Since $B$ is also a clique of $\overline{G}$, the visiting of $B$ is also a simplicial elimination scheme.
So $\tau$ is clearly a simplicial elimination scheme of $\overline{G}$.

\end{proof}

So using these two Claims $\tau$ avoids the two forbidden patterns of bipartite chain graphs, certifying that $G$ is a bipartite chain graph.

$\leftarrow$ Conversely  if a graph $G$ admits a total ordering of its vertices avoiding the patterns of Figure \ref{otherbipchaine}, then using the pattern  characterization of bipartite chain graphs of \cite{FeuilloleyH21} and Proposition \ref{autrespatterns} we know that $G$ is a bipartite chain graph.
\end{proof}
\begin{figure}
\begin{center}

\begin{tabular}{cc}
\begin{tikzpicture}

%        [scale=1,auto=left,every node/.style=
%       {circle,draw,fill=black!5}]
        \node (1) at (0,0) {1};
        \node (2) at (1,0) {2};
        \node (3) at (2,0) {3};
        
        \node (a) at (0,-2) {a};
        \node (b) at (1,-2) {b};
        \node (c) at (2,-2) {c};
    \draw (a) to (1);
     \draw (a) to (2);
      \draw (a) to (3);
      \draw (b) to (2);
      \draw (b) to (3);
         \draw (c) to (3);  
 \node (G) at (1, -3) {G};
\end{tikzpicture}

$\qquad$
\begin{tikzpicture}
%        [scale=1,auto=left,every node/.style=
%       {circle,draw,fill=black!5}]
        \node (1) at (0,0) {1};
        \node (2) at (1,0) {2};
        \node (3) at (2,0) {3};

        \node (a) at (0,-2) {a};
        \node (b) at (1,-2) {b};
        \node (c) at (2,-2) {c};
       \draw (1) to[bend  left=50] (3);
       \draw (a) to[bend left=50] (c);
       \draw (1) to (2);
       \draw (2) to (3);
        \draw (a) to (b);
        \draw (b) to (1);
        \draw (c) to (1);
        \draw (c) to (2);
        \draw  (b) to (c);
\node (G) at (1, -3) {$\overline{G}$};        
\end{tikzpicture}

\end{tabular}
\end{center}
\caption{\label{2-chainG} Example of bipartite chain graphs. And $\tau= a,b,c,1, 2, 3$ is a $\texttt{minBFS}$ visiting ordering of $\overline{G}$, avoiding the patterns of Figure \ref{otherbipchaine}.}
\end{figure}

Using Remark \ref{complement} we can compute in linear time a $\texttt{minBFS}$ on $\overline{G}$ from $G$. To certify that an ordering of the vertices avoids the patterns of Figure \ref{otherbipchaine} can be done in linear time as explained in \cite{DFHP21}.
This result will be  generalized in Section \ref{section.uig}.

%%%%%%%%%%%%%%%%%%%%%%%%%%%%%%%%%%%%%%%%%%%%%%%%%%
%%%%%%%%%%%%%%%%%%%%%%%%%%%%%%%%%%%%%%%%%%%%%%%%%%
\section{Trivially perfect graphs}
Originally introduced as comparability graphs of trees by~\cite{Wolk62}, this class was later named trivially perfect graphs~\cite{Golumbic78}.
They are also known as quasi-threshold graphs~\cite{JINGHO1996}.
These graphs possess interesting properties suitable for developing new applications, as shown in~\cite{PaulP24}.
The next theorem presents their most recognized characterizations.

\begin{theorem}~\cite{Golumbic2nd,Wolk62,JINGHO1996,FeuilloleyH21}
 A graph is a trivially perfect graph iff it satisfies one of the following equivalent conditions:
\begin{enumerate}
\item a graph in which, for every induced subgraph, the size of a maximum independent set is equal to the number of maximal cliques.
\item \label{item:quasi-threshold} a graph that can be constructed recursively the following way: creation of a single node graph; disjoint union of two graphs; addition of one universal vertex.
\item \label{item:p4-c4} a $(C_4,P_4)$-free graph.
\item \label{item:comparability-of-tree} a comparability graph of a rooted tree, that is the comparability graph of a partial order in which for every element $x$, the elements of $\{y : y<x\}$ can be linearly ordered.
\item the intersection graph of a set of nested intervals (that is of intervals such that for every two intersecting intervals, one is included in the other).
\item a cograph that is also an interval graph.
\item a cograph that is also a chordal graph.
\item a graph that admits an elimination ordering $x_1, x_2, \dots, x_n$ such that for every $i$,
$x_i$ is universal to its connected component in $G[\{x_{i+1}, x_{i+2}, \dots, x_n\}]$.
\item
A graph that admits a vertex ordering avoiding the 2 patterns in Figure \ref{patternsTPG}.
\end{enumerate}
\end{theorem}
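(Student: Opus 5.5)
The statement collects classical equivalences. I would not reprove every item from scratch. Instead I would take item~\ref{item:p4-c4} ($(C_4,P_4)$-freeness) as the hub and connect each other item to it, citing \cite{Golumbic2nd,Wolk62,JINGHO1996} where the argument is standard. For item~1 I would quote Golumbic, who proves that this is exactly $(C_4,P_4)$-freeness. For items~\ref{item:quasi-threshold} and~\ref{item:p4-c4}, the constructions in~\ref{item:quasi-threshold} clearly create no induced $C_4$ or $P_4$. Conversely, in a connected $(C_4,P_4)$-free graph some vertex is universal: take a vertex of maximum degree, and any non-neighbour reachable via a neighbour yields a $P_4$ or a $C_4$ against maximality. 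Removing that vertex and recursing gives the construction.

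For~\ref{item:quasi-threshold}$\Rightarrow$\ref{item:comparability-of-tree}, the recursive construction directly yields a rooted forest: each added universal vertex becomes the root above the components it dominates. Item~5 follows by nesting intervals along that forest. For items 6 and 7, nested intervals give an interval graph, hence a chordal graph, and $(C_4,P_4)$-free graphs are cographs; conversely a chordal (or interval) cograph has no $C_4$ and no $P_4$. For item~8, the elimination ordering is exactly the sequence of removed universal vertices from the decomposition in~\ref{item:quasi-threshold}, read top-down.

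The item I would actually write out is~9, the pattern characterization, since it is the one used algorithmically later. I read the patterns of Figure~\ref{patternsTPG} as the comparability pattern ($a<b<c$ with $ab,bc\in E$, $ac\notin E$) and the chordal pattern ($a<b<c$ with $ac,bc\in E$, $ab\notin E$).

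For the forward direction, start from the rooted forest of~\ref{item:comparability-of-tree} and take any ordering $\sigma$ placing every ancestor before its descendants, for instance a preorder. Then an edge $xy$ with $x<_\sigma y$ means $x$ is an ancestor of $y$.
\begin{itemize}
\item The chordal pattern would make $a$ and $b$ both ancestors of $c$. They are then comparable in the tree, so $ab\in E$, a contradiction.
\item The comparability pattern would make $a$ an ancestor of $b$ and $b$ an ancestor of $c$. By transitivity $ac\in E$, again a contradiction.
\end{itemize}

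For the converse, let $\sigma$ avoid both patterns and consider any induced $P_3$ $x-y-z$. If $y$ is last of the three we get the chordal pattern. If $y$ is in the middle position we get the comparability pattern. Hence the middle vertex of every induced $P_3$ is the $\sigma$-first of its three vertices.
\begin{itemize}
\item An induced $P_4$ $a-b-c-d$ contains the induced paths $a-b-c$ and $b-c-d$. The first forces $b<_\sigma c$ and the second forces $c<_\sigma b$, which is impossible.
\item In an induced $C_4$, let $v$ be its $\sigma$-first vertex and $w$ the vertex opposite to $v$. Then $w$ is the middle of an induced $P_3$ through $v$, forcing $w<_\sigma v$, which contradicts the choice of $v$.
\end{itemize}
So $G$ is $(C_4,P_4)$-free.

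The main obstacle is not this last step but keeping the web of cited equivalences (1, 5, 6, 7) honest without reproducing the classical proofs. For those I would rely on the references rather than give full arguments.
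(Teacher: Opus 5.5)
Your plan is sound except for one step that fails as written: the $C_4$ case in your proof of $9\Rightarrow 3$. You let $v$ be the $\sigma$-first vertex of the induced cycle and $w$ the vertex opposite to $v$. You then claim that $w$ is the middle of an induced $P_3$ through $v$. But $vw\notin E$, so $w$ cannot be the middle of any $P_3$ that contains $v$.

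The fix is to use a neighbour of $v$ instead. For a neighbour $u$ of $v$ on the cycle, the vertices $v,u,w$ induce a $P_3$ with middle $u$, because $vw\notin E$. Your observation then forces $u<_\sigma v$, contradicting the choice of $v$. Alternatively, the $\sigma$-last vertex of the $C_4$ is the middle of the induced $P_3$ it forms with its two cycle-neighbours, which is impossible. With that repair, your key observation gives a clean, self-contained proof of item~9. That observation is that an ordering avoids both patterns of Figure~\ref{patternsTPG} exactly when every induced $P_3$ has its middle vertex first.

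For comparison, the paper gives no proof of this theorem at all. It is quoted from the cited references, with item~9 taken from \cite{FeuilloleyH21}. You prove the cycle $3\Rightarrow 2\Rightarrow 4\Rightarrow 9\Rightarrow 3$ directly and cite the rest, so you offer more than the paper does. Your $3\Rightarrow 2$ step says that a maximum-degree vertex of a connected $(C_4,P_4)$-free graph is universal. That is precisely Lemma~\ref{nottwin}, the same fact that drives the paper's \texttt{maxDFS} algorithm.

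Two converses remain only implicit.
\begin{itemize}
\item You derive item~5 from item~4 but never go back. To close it, order the intervals by reverse inclusion, breaking ties between equal intervals arbitrarily. The intervals containing a given one pairwise intersect, hence are nested, hence form a chain, which gives item~4.
\item For item~8 you should add the reverse direction. Take the first vertex, in the elimination ordering, of an induced $C_4$ or $P_4$. This subgraph is connected and lies in that vertex's component, so the vertex would be adjacent to the other three. That is impossible, since both graphs have maximum degree~$2$.
\end{itemize}
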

\begin{figure}[!h]
\begin{center}
\begin{tabular}{cc}
\begin{tikzpicture}
%        [scale=1,auto=left,every node/.style=
%       {circle,draw,fill=black!5}]
        \node (a) at (0,0) {a};
        \node (b) at (1,0) {b};
        \node (c) at (2,0) {c};
        \draw[dashed] (a) to[bend left=50] (c);
        \draw (a) to (b);
        \draw (b) to (c);
\end{tikzpicture}
$\qquad$
\begin{tikzpicture}
%        [scale=1,auto=left,every node/.style=
%       {circle,draw,fill=black!5}]
        \node (a) at (0,0) {a};
        \node (b) at (1,0) {b};
        \node (c) at (2,0) {c};
        \draw (a) to[bend left=50] (c);
        \draw[dashed] (a) to (b);
        \draw (b) to (c);
\end{tikzpicture}
\end{tabular}
\end{center}
\caption{\label{patternsTPG} The two forbidden patterns of a trivially perfect ordering.}
\end{figure}

Characterization n°3 provides a certificate of exclusion from the class of trivially perfect graphs.
The following lemmas are baselines for our recognition algorithm.
\begin{lemma}
	\label{nottwin}
	Let $G = (V,E)$ be a graph.
	If there exist $x,y,z \in V$ such that $xy\in E$, $yz \in E$, $xz\notin E$ and $degree(y) \leq degree(z)$, then $G$ has an induced $C_4$ or an induced $P_4$.
\end{lemma}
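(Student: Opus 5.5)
Here $x,y,z$ is an induced $P_3$ with middle vertex $y$ and $degree(y)\le degree(z)$. The plan is a counting argument that produces a vertex $w$ adjacent to $z$ but not to $y$, followed by a case analysis on the edge $wx$.

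\textbf{Step 1: find $w$.} Since $y \in N(z)$ and $y \notin N(y)$, and since $x \in N(y)$ while $x \notin N(z)$ (because $xz\notin E$), we can compare the sets $N(z)\setminus\{y\}$ and $N(y)\setminus\{x,z\}$. Their sizes are
\[
|N(z)\setminus\{y\}| = degree(z)-1 \ \ge\ degree(y)-1 \ >\ degree(y)-2 = |N(y)\setminus\{x,z\}|.
\]
Moreover $N(z)\setminus\{y\}$ contains neither $x$ nor $z$. Hence $N(z)\setminus\{y\}\not\subseteq N(y)$. So there is a vertex $w \neq y$ with $wz\in E$ and $wy\notin E$. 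We also have $w\neq x$ and $w\neq z$.

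\textbf{Step 2: case analysis on $wx$.} We look at the induced subgraph on $\{x,y,z,w\}$. Its known edges are $xy$, $yz$ and $zw$, and its known non-edges are $xz$ and $yw$.
\begin{itemize}
\item If $wx\notin E$, the subgraph is the path $x-y-z-w$, which is an induced $P_4$.
\item If $wx\in E$, the subgraph is the cycle $x-y-z-w-x$ with both chords $xz$ and $yw$ absent, which is an induced $C_4$.
\end{itemize}
In either case $G$ has an induced $C_4$ or an induced $P_4$, as claimed.

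The only delicate point is the counting in Step 1, namely making sure $w$ is distinct from $x$. This holds because $x\notin N(z)$. The hypothesis $degree(y)\le degree(z)$ is used exactly once, in the strict inequality above, and that strict inequality is what forces such a $w$ to exist.
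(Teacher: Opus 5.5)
Your proof is correct and is the paper's argument: the paper also uses $degree(y)\le degree(z)$ and the private neighbour $x$ of $y$ to get a vertex of $N(z)\setminus N[y]$, then splits on its adjacency to $x$ to obtain an induced $P_4$ or $C_4$. Your explicit count is slightly more careful than the paper's wording $t\in N(z)\setminus N(y)$, which taken literally would allow $t=y$, because it guarantees $w\neq y$.
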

\begin{proof}
        Since $y$ has lower degree, the existence of a private neighbor $x$ of $y$ implies the existence of a private neighbor of $z$, a vertex $t\in N(z)\setminus N(y)$.
	Then, either $xt \in E$ and $x,y,z,t$ is an induced $C_4$,
	or $xt \notin E$ and $x,y,z,t$ is and induced $P_4$ of $G$.
\end{proof}
\begin{lemma}\label{dfscondition}
	Let $G = (V,E)$ be a graph and let $T_G$ be the search tree of a \emph{\texttt{maxDFS}} search on $G$.
	If there exists a node $x$ of $T_G$ whose parent is $y$ with an ancestor $z$ such that $xz \notin E$ then $G$ has an induced $C_4$ or $P_4$ and hence not a trivially perfect graph.
	Otherwise, $G$ is trivially perfect and is equal to the underlying undirected graph corresponding to the transitive closure of $T_G$.
\end{lemma}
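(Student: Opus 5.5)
The plan has two halves, matching the two assertions of Lemma \ref{dfscondition}.

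\textbf{First assertion (a bad triple yields $C_4$ or $P_4$).} Let $x$ have parent $y$ and a proper ancestor $z$ of $y$ with $xz\notin E$. Take $z$ to be the \emph{closest} such ancestor to $y$. Let $z'$ be the child of $z$ on the tree path towards $y$. Then $xz'\in E$ (possibly $z'=y$), and $zz'\in E$ because it is a tree edge. If $z'=y$ we get the triple $x,y,z$ with $xy,yz\in E$ and $xz\notin E$. In general we get the triple $x,z',z$ with $xz',z'z\in E$ and $xz\notin E$.

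To apply Lemma \ref{nottwin} with middle vertex $z'$ and end vertex $z$, I need $degree(z')\le degree(z)$, and this is where maximum-degree priority enters. When $z'$ was chosen as a child of $z$, it was the largest-degree unvisited neighbour of $z$. That alone does not compare $z'$ with $z$, so I will instead use the other orientation of Lemma \ref{nottwin}: the roles of $x$ and $z$ are symmetric, so it suffices that $degree(z')\le degree(x)$. At the moment $z'$ was chosen, $x$ was unvisited and is not a neighbour of $z$, so that moment gives no comparison either. The step I expect to be the main obstacle is therefore a cleaner induction. I will argue on the DFS order that every visited vertex $v$ satisfies $degree(v)\ge degree(w)$ for all later-visited vertices $w$ in its subtree, as long as no violation has occurred so far. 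I will take the first violation in visiting order and apply Lemma \ref{nottwin} with $y$ or $z'$ as the low-degree middle vertex. If a direct degree comparison fails, I will fall back on the $C_4$/$P_4$ private-neighbour argument directly: a private neighbour of $z$ outside $N(z')$ exists because $z$ has an ancestor or a sibling subtree that $z'$ misses, and it yields the $P_4$ $t,z,z',x$ or the $C_4$.

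\textbf{Second assertion (no bad triple yields a trivially perfect graph).} Suppose no violation exists. By induction along tree paths, every vertex is adjacent to all its ancestors: its parent by the tree edge, and each higher ancestor because otherwise the closest non-adjacent ancestor gives a violation as above. Conversely, a DFS tree has no cross edges: every edge of $G$ joins an ancestor–descendant pair, the standard property of DFS trees, which also follows from Proposition \ref{fourpoint}. Hence $E$ equals the comparability relation of the rooted forest $T_G$, i.e.\ the undirected transitive closure. By characterization \ref{item:comparability-of-tree}, $G$ is trivially perfect.
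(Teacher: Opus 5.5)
\textbf{The second half is fine.} It is the paper's argument: with no violation, every vertex is adjacent to all its ancestors. A DFS tree has no cross edges. So $E$ is exactly the ancestor--descendant relation of $T_G$, and characterization~\ref{item:comparability-of-tree} applies.

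\textbf{The first half has a genuine gap.} Everything hinges on the inequality $degree(z')\le degree(z)$ (or $degree(y)\le degree(z)$) needed for Lemma~\ref{nottwin}. You leave it as an announced ``cleaner induction'' without saying where the comparison comes from. The fallback you offer is justified incorrectly. A private neighbour of $z$ outside $N[z']$ need not be an ancestor of $z$ or lie in another child subtree of $z$. Take the $4$-cycle $a,b,c,d$: \texttt{maxDFS} from $a$ builds the path $a,b,c,d$. The first violation is $x=c$, with $z'=y=b$ and $z=a$. Here $a$ is the root with the single child $b$, so its only private neighbour $d$ is a descendant of $x$ attached to $a$ by a back edge. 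Moreover, once $x$ is the first violation, $z'$ is adjacent to all ancestors of $z$, so your ``ancestor'' option never occurs. The existence of such a $t$ is exactly what the degree inequality provides, so the fallback cannot replace it.

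\textbf{The missing idea is to apply the max-degree rule at the \emph{parent} of $z$, not at $z$.}
\begin{itemize}
\item Take $x$ to be the first violation (the paper takes the one closest to the root; either works). Then the tree path from the root to $y$ is a clique. In particular $yz\in E$, so you may use $y$ directly instead of $z'$.
\item If $z$ is the root, it has maximum degree in its component.
\item Otherwise, $p=parent(z)$ is an ancestor of $y$, so $y\in N(p)$. Also, $y$ was still unvisited when the scan of $p$'s adjacency list (sorted by non-increasing degree) reached $z$.
\item If $degree(y)>degree(z)$, that scan would have met $y$ first, while $y$ was still unvisited, and made $y$ a child of $p$. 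This contradicts $y$ being a proper descendant of $z$.
\end{itemize}
Hence $degree(y)\le degree(z)$, and Lemma~\ref{nottwin} applied to $(x,y,z)$ yields the $C_4$ or $P_4$. This is the justification left implicit in the paper's terse sentence ``since $z$ is an ancestor of $y$ \dots\ $degree(y)\le degree(z)$''. Your announced induction (ancestors dominate descendants in degree before the first violation) is in fact true, but only via this argument, which you do not supply.
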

\begin{proof}
	If $G$ is not connected then \texttt{maxDFS} visit components one by one. Assume then that $G$ is connected.
        
        If such vertex $x$ exists, we assume w.l.o.g.\ that it is chosen maximally close to the root of $T_G$.
        Hence, $yz\in E$.
	Since $T_G$ is a DFS search tree, we also have $xy \in E$.
	Since $z$ is an ancestor of $y$ (i.e. the \texttt{maxDFS} search visited $z$ first) we have that $degree(y) \leq degree(z)$.
        Combining this to $x \in N(y)\setminus N(z)$, we deduce from
	Lemma \ref{nottwin} that $G$ has an induced $C_4$ or $P_4$.
	
        Otherwise, every vertex of $T_G$ share an edge with all its ancestors, meaning the transitive closure of $T_G$ is a partial subgraph of $G$.
        To prove equality, let now be any edge $uv \in E$ of $G$.
        Since $T_G$ is a DFS search tree, we assume w.l.o.g.\ that $u$ is an ancestor of $v$ in $T_G$.
        In other words, $uv$ is an edge of the transitive closure of $T_G$. $G$ is thus trivially perfect.
\end{proof}

This lemma alone yields a recognition algorithm for trivially perfect graphs: perform a \texttt{maxDFS} and check if such a vertex $x$ exists. A naive implementation would yield a runtime in $O((n+m)\times diameter)$.
Both step can be merged as shown by Algorithm~\ref{TPG} below. For building a positive certificate easy to check, we now examine the \texttt{maxDFS} ordering properties.

\begin{lemma}\label{dfsordercondition}
	Let $G = (V,E)$ be a graph, $T$ a search tree of a \emph{\texttt{maxDFS}} search of $G$, $\sigma$ the ordering of search order of $T$, $x,y$ and $z$ vertices of $G$ such that $xy \in E$, $x$ is before $y$ in $\sigma$, and $z \in N(y) \setminus N(x)$.
	Then, we can instantly output an induced $C_4$/$P_4$ of $G$.
\end{lemma}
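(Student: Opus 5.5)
The plan is to reduce the statement to Lemma~\ref{nottwin}, applied to a suitably chosen triple on the tree path from $x$ to $y$, in the same way as in the proof of Lemma~\ref{dfscondition}. The first step uses the fact that $T$ is a DFS tree, so every edge of $G$ joins an ancestor and a descendant (this is used in Lemma~\ref{dfscondition}). Since $xy\in E$ and $x<_\sigma y$, the vertex $x$ is an ancestor of $y$ in $T$. Let $x=p_0,p_1,\dots,p_k=y$ be the tree path from $x$ down to $y$. Each $p_{i-1}p_i$ is a tree edge, hence an edge of $G$.

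The second step locates a ``switch'' along this path. The vertex $z$ is adjacent to $p_k=y$ but not to $p_0=x$. Let $j\ge 1$ be the smallest index with $zp_j\in E$. Then $p_{j-1}p_j\in E$, $p_jz\in E$ and $p_{j-1}z\notin E$. So $(z,p_j,p_{j-1})$ is an induced path $P_3$ whose middle vertex is $p_j$, and $p_{j-1}$ is the parent of $p_j$ in $T$. Finding $j$ costs one walk up the tree from $y$ to $x$ and one adjacency test with $z$ per step, so it is linear.

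The third step applies Lemma~\ref{nottwin} with the roles $x:=z$, $y:=p_j$, $z:=p_{j-1}$. Its hypothesis is $degree(p_j)\le degree(p_{j-1})$, i.e.\ that a vertex has degree at most that of its ancestor in a \texttt{maxDFS} tree. I would invoke exactly the degree comparison used in the proof of Lemma~\ref{dfscondition} for an ancestor and its descendant. The construction in Lemma~\ref{nottwin} is explicit: scan $N(p_{j-1})$ to find a private neighbour $t\notin N(p_j)$, which exists by the degree inequality since $z$ is a private neighbour of $p_j$. Then $z,p_j,p_{j-1},t$ induces a $C_4$ if $zt\in E$ and a $P_4$ otherwise. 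All of this takes time linear in the sizes of the adjacency lists involved, which is what ``instantly'' should mean for the algorithm.

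The main obstacle is justifying $degree(p_j)\le degree(p_{j-1})$ rigorously from the \texttt{maxDFS} tie-breaking rule. The rule compares only the unvisited candidate neighbours of the current vertex, not a child with its parent. If the inequality fails for the parent pair, I would try one of two fixes. The first is to pick a different triple along the path, for instance comparing $p_j$ with an earlier ancestor $p_i$ ($i<j$) of larger degree, using that all $p_i$ with $i<j$ are non-adjacent to $z$. The second is to apply Lemma~\ref{nottwin} in the other direction on a triple centred at $p_{j-1}$. In either case the aim is a non-adjacent pair of vertices sharing a common neighbour, where that common neighbour has degree at most the degree of one of the pair.
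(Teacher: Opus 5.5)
\paragraph{There is a genuine gap in Step~3.} Your Steps 1--2 are sound, but Step~3 rests on the inequality $\deg(p_j)\le\deg(p_{j-1})$, and \texttt{maxDFS} does not provide it. You flag this yourself, and neither fallback closes it.

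A child in a \texttt{maxDFS} tree can have strictly larger degree than its parent. Take $r$ adjacent to $x$ and to four pendant vertices, $x$ adjacent to $y$, and $y$ adjacent to three pendant vertices $z_1,z_2,z_3$. The search starts at $r$ (degree $5$), moves to $x$ (degree $2$, the largest among $r$'s neighbours), then to $y$ (degree $4$). With $z=z_1$ your path is $p_0=x$, $p_1=y$, so $j=1$ and you need $\deg(y)\le\deg(x)$, which is false. Lemma~\ref{nottwin} therefore cannot be invoked, and your first fix has no earlier ancestor on the path to compare with. Your second fix is not carried out, and it does not say which vertex to use; here that vertex lies off the path.

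The ancestor/descendant comparison you borrow from Lemma~\ref{dfscondition} is also not a general property. It holds there only because the minimal choice of $x$ makes the root path to $y$ a clique. Each vertex on that path was then chosen while the later ones were still unvisited neighbours of its parent.

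\paragraph{The missing idea.} The tie-breaking rule only compares a vertex with the other unvisited neighbours of the vertex it was chosen from. So you must look \emph{above} $x$, at its parent $b$, the last neighbour of $x$ preceding it in $\sigma$. There are three cases:
\begin{itemize}
\item If $x$ has no such $b$, it started its component, so $\deg(x)\ge\deg(y)$.
\item If $by\notin E$, then $b$ itself lies in $N(x)\setminus N[y]$, and no degree argument is needed.
\item If $by\in E$, then $y$ was an unvisited neighbour of $b$ when $x$ was picked, so $\deg(x)\ge\deg(y)$.
\end{itemize}
In every case you get $v\in N(x)\setminus N[y]$, and $\{v,x,y,z\}$ induces a $C_4$ or a $P_4$. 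In the example above, $b=r$ and $r,x,y,z_1$ is a $P_4$.

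This is the paper's argument. Note that $z,y,x$ is already the induced $P_3$ you need, so descending the tree path is unnecessary. If you want to keep your reduction, apply the same dichotomy to the pair $(p_{j-1},p_j)$, with $b$ the parent of $p_{j-1}$. This also shows that the private neighbour $t$ you scan for always exists, but for this reason, not because of a parent--child degree inequality.
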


\begin{proof}
	First suppose there is no vertex $b$ before $x$ in $\sigma$ such that $xb \in E$. 
	Then $x$ is first vertex of its connected component to be seen. Since it is a \texttt{maxDFS} ordering then $x$ has maximum degree in that component, implying $deg(x) \geq deg(y)$.
	Since $z \in N(y) \setminus N(x)$ then for $x$ to have a greater or equal degree than $y$ there must exists a vertex $v \in N(x) \setminus N(y)$.
	
	Suppose now that $x$ has a neighbor before it. Let
	$b$  be the last vertex before $x$ in $\sigma$ such that $bx \in E$.
	If $by\notin E$ take $v=b$ and we are done.
	So let us suppose  $by \in E$. $b$ is the parent of $x$ in \texttt{maxDFS} since it is the last vertex before $x$. When $x$ is visited, $b$ is the active vertex and $y$ is eligible but not chosen. The degree tie-breaking rule implies  $degree(x) \geq degree(y)$.
	Since $z \in N(y) \setminus N(x)$ then there must exists a vertex $v \in N(x) \setminus N(y)$.
	Then, $\{x,y,z,v\}$ induces either a $P_4$ or a $C_4$.

In both cases, either edge $zv$ exists or not, $\{x,y,z,v\}$ induces a $P_4$ or a $C_4$.
\end{proof}

\begin{proposition}\label{vertex ordering}
Let $\sigma$ be the visiting ordering of a \texttt{maxDFS} on a graph $G$.
$G$  is a trivially perfect graph iff  $\sigma$ forbids both 2 patterns of Figure \ref{patternsTPG}.
\end{proposition}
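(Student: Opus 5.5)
The easy direction is ``if $\sigma$ avoids both patterns then $G$ is trivially perfect''. It follows directly from the last item of the characterization theorem of trivially perfect graphs: a graph is trivially perfect iff it admits a vertex ordering avoiding the two patterns of Figure~\ref{patternsTPG}, and $\sigma$ is such an ordering. So the whole work is in the forward direction: assume $G$ is trivially perfect, hence $(C_4,P_4)$-free by item~\ref{item:p4-c4}, and show that the \texttt{maxDFS} ordering $\sigma$ avoids both patterns.

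\textbf{Right pattern.} This pattern is $a<_\sigma b<_\sigma c$ with $ac\in E$, $bc\in E$ and $ab\notin E$. Set $x=a$, $y=c$, $z=b$ in Lemma~\ref{dfsordercondition}. Then $xy\in E$, $x$ precedes $y$ in $\sigma$, and $z\in N(y)\setminus N(x)$. The lemma then produces an induced $C_4$ or $P_4$, contradicting that $G$ is trivially perfect. Note that the lemma does not need any condition on the position of $z$, so it applies here as stated.

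\textbf{Left pattern.} This pattern is $a<_\sigma b<_\sigma c$ with $ab\in E$, $bc\in E$ and $ac\notin E$. Here I use Lemma~\ref{dfscondition}. Since $G$ is trivially perfect, that lemma says $G$ is exactly the comparability graph of the forest $T_G$: two vertices are adjacent iff one is an ancestor of the other in $T_G$. In a DFS preorder, an ancestor always precedes its descendants. So $ab\in E$ with $a<_\sigma b$ means $a$ is an ancestor of $b$, and $bc\in E$ with $b<_\sigma c$ means $b$ is an ancestor of $c$. By transitivity $a$ is an ancestor of $c$, so $ac\in E$, which is a contradiction.

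\textbf{Main obstacle.} The delicate point is the claim that adjacency together with $\sigma$-order forces the ancestor relation in the right direction. The last paragraph of the proof of Lemma~\ref{dfscondition} gives that for every edge $uv$ one endpoint is an ancestor of the other; this is the standard property that non-tree edges of a DFS tree join ancestor and descendant. Combined with the fact that $\sigma$ is a preorder of $T_G$ (ancestors are visited first), this yields the needed orientation. One could alternatively try to use Lemma~\ref{dfsordercondition} with $x=b$, $y=c$, $z=a$, but that requires $b$ before $c$, which holds, and $a\in N(c)\setminus N(b)$, which fails since $ab\in E$. So the tree argument is the right route for the left pattern.
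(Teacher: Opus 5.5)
Your proof is correct, and for the right pattern it is exactly the paper's argument (Lemma~\ref{dfsordercondition} with $x=a$, $y=c$, $z=b$).

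For the left pattern, however, your remark that Lemma~\ref{dfsordercondition} cannot be used is mistaken. You tried only the substitution $x=b$, $y=c$, $z=a$. The paper instead uses $x=a$, $y=b$, $z=c$. This satisfies every hypothesis: $ab\in E$, $a<_\sigma b$, and $c\in N(b)\setminus N(a)$ because $bc\in E$ and $ac\notin E$. So the paper treats both patterns uniformly and locally. Each occurrence of either pattern directly yields an induced $C_4$ or $P_4$ through a single lemma, which is what makes $\sigma$ usable as a certificate.

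Your detour through Lemma~\ref{dfscondition} is nonetheless sound. For trivially perfect $G$ the first alternative of that lemma is excluded, so $G$ is the comparability graph of the \texttt{maxDFS} forest. The preorder $\sigma$ orients every edge from ancestor to descendant, and transitivity gives $ac\in E$.

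This global route has its own merit: it would equally dispose of the right pattern. The two $\sigma$-earlier neighbours $a,b$ of $c$ are both ancestors of $c$, hence comparable in the tree, hence adjacent. It is not more elementary, though, since the proof of Lemma~\ref{dfscondition} rests on the same degree tie-breaking argument as Lemma~\ref{dfsordercondition}.
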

\begin{proof}
$\leftarrow$
follows from the above  trivially perfect graphs' characterizations.

$\rightarrow$ Let $\sigma$ be the visiting ordering of maxDFS on a graph G.
Suppose that $\sigma$ admits the first pattern of Figure \ref{patternsTPG},  then using Lemma~\ref{dfsordercondition} with $x=a, y=b$ and $z=c$ then $G$ contains a $C_4$ or a $P_4$ and is not trivially perfect.
Suppose that $\sigma$ admits the second pattern of Figure \ref{patternsTPG},  then using Lemma~\ref{dfsordercondition} with $x=a, y=c$ and $z=b$ then $G$ contains a $C_4$ or a $P_4$ and is not trivially perfect.
\end{proof}

This yields therefore to a second recognition algorithm for trivially perfect graphs: perform a \texttt{maxDFS} and check (in linear time \cite{DFHP21}) if $\sigma$ does not contain one of the two forbidden patterns.
But in fact one just have to check one of the two pattern:
\begin{proposition}\label{onepattern}
A DSF-ordering avoiding the first pattern of Figure \ref{patternsTPG} avoids also the other one.
\end{proposition}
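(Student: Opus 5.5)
We argue by contradiction, using only the pattern characterization of DFS orderings (Proposition~\ref{fourpoint}) and no property of \texttt{maxDFS} beyond its being a DFS. Let $\sigma$ be a DFS ordering that avoids the first pattern of Figure~\ref{patternsTPG}, and suppose it contains the second one. So there are $a<_\sigma b<_\sigma c$ with $ac\in E$, $bc\in E$ and $ab\notin E$. Among all such triples, choose one for which the number of vertices between $a$ and $b$ in $\sigma$ is minimum. The goal is to exhibit an occurrence of the first pattern, that is, $x<_\sigma y<_\sigma z$ with $xy,yz\in E$ and $xz\notin E$.

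Apply Proposition~\ref{fourpoint} to the triple $a<_\sigma b<_\sigma c$. Indeed $ac\in E$ and $ab\notin E$, which is exactly its hypothesis, so there is a vertex $d$ with $a<_\sigma d<_\sigma b$ and $db\in E$. We then split into cases on the adjacencies of $d$ with $c$ and with $a$.

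\emph{Case 1: $dc\notin E$.} The triple $d<_\sigma b<_\sigma c$ has $db\in E$, $bc\in E$ and $dc\notin E$. This is the first pattern, a contradiction.

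\emph{Case 2: $dc\in E$ and $ad\in E$.} The triple $a<_\sigma d<_\sigma b$ has $ad\in E$, $db\in E$ and $ab\notin E$. This is again the first pattern, a contradiction.

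\emph{Case 3: $dc\in E$ and $ad\notin E$.} The triple $a<_\sigma d<_\sigma c$ has $ac\in E$, $dc\in E$ and $ad\notin E$, so it is an occurrence of the second pattern. Since $d<_\sigma b$, there are strictly fewer vertices between $a$ and $d$ than between $a$ and $b$. This contradicts the minimal choice of the triple.

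All three cases are contradictory, so $\sigma$ avoids the second pattern as well. The only delicate point is Case~3, where the witness $d$ given by Proposition~\ref{fourpoint} could simply reproduce the second pattern. The extremal choice (minimizing the gap between $a$ and $b$) is what rules this out; equivalently, one can phrase the argument as an induction on $\sigma$-distance. Everything else is a direct check of adjacencies.
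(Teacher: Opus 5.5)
Your proof is correct, and it takes a different route from the paper's.

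\textbf{The paper's route.} The paper argues on the DFS tree. For an occurrence $x<_\sigma y<_\sigma z$ of the second pattern, it takes the tree path $x=x_0,x_1,\dots,x_k=y$. This path exists because $x$ still has the unvisited neighbour $z$ when $y$ is discovered, so $x$ is an ancestor of $y$; the paper leaves this point implicit. It then lets $j<k$ be the largest index with $xx_j\in E$ and reads off the first pattern on $x,x_j,x_{j+1}$, with no induction.

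\textbf{Your route.} You use Proposition~\ref{fourpoint} as a black box and close with a minimal-counterexample argument on the gap between $a$ and $b$. Your argument therefore needs nothing about stacks or search trees, and every step is an adjacency check against a stated result. If you take $d$ to be the DFS parent of $b$, iterating your Case~3 climbs up the same tree path from $b$ towards $a$ that the paper walks down from $x$.

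\textbf{What the paper's route gains.} It never uses $yz\in E$. So it actually shows more: a DFS ordering avoiding the first pattern has no triple $x<_\sigma y<_\sigma z$ with $xz\in E$ and $xy\notin E$ at all. Your argument does not give this, because your Case~1 genuinely relies on $bc\in E$.
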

\begin{proof}
Let us consider a DFS-ordering $\sigma$ avoiding the first pattern of Figure \ref{patternsTPG} and suppose there exists an occurrence of the second pattern. 3 vertices $x <_{\sigma} y <_{\sigma}$ with $xz, yz \in E(G)$ and $xy \notin E(G)$. Since $\sigma$ is a DFS-ordering necessarily there is a DFS-path $\mu=[x_0=x, x_1, \dots, x_k=y]$ in $G$. Let $j<k$ be the index maximum such that $xx_j \in E(G)$. It is well-defined since $xx_1 \in E(G)$ and $xx_k=xy \notin E(G)$.
But then $x, x_j, x_{j+1}$ is a pattern of the first type, a contradiction.
\end{proof}
\begin{theorem}
	There is a linear time certifying recognition algorithm for trivially perfect graphs using \emph{\texttt{maxDFS}} once and only once.
\end{theorem}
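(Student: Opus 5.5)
The plan is to run one \texttt{maxDFS} on $G$ in linear time, as described in the introduction: bucket sort by degree and reorder the adjacency lists, then do a standard DFS. This gives the visiting ordering $\sigma$ and the DFS tree $T_G$. The rest is a linear-time verification that outputs either a positive certificate ($\sigma$, or equivalently the rooted forest $T_G$) or a negative certificate (an induced $C_4$ or $P_4$, which is characterization n°3).

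\textbf{Verification.} By Lemma~\ref{dfscondition}, $G$ is trivially perfect iff every vertex is adjacent to all its ancestors in $T_G$, and then $G$ is the comparability graph of $T_G$. I will check this with a depth counter.
\begin{enumerate}
\item Maintain during the DFS the depth $d(v)$ of each vertex.
\item When the search visits $v$, check whether $|N(v)\cap \mathrm{Anc}(v)| = d(v)$, where $\mathrm{Anc}(v)$ is the set of ancestors of $v$.
\item Counting neighbours on the current stack is easy: mark vertices as ``on stack'' and unmark them on backtrack. Scanning $N(v)$ then costs $O(degree(v))$, so the total is $O(n+m)$.
\end{enumerate}
Equivalently, if the ancestor condition holds everywhere, then the edge count satisfies $m=\sum_v d(v)$. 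Since every edge of a DFS joins an ancestor–descendant pair, this equality is itself a certificate that is checkable in linear time. This merges the two steps mentioned after Lemma~\ref{dfscondition}.

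\textbf{Negative certificate.} If the check fails at some vertex $x$, I pick the failure closest to the root, as in the proof of Lemma~\ref{dfscondition}. DFS processes ancestors before descendants, so the first failure found in visiting order already has this property. Its parent $y$ and some ancestor $z$ with $xz\notin E$ then satisfy $yz\in E$ and $degree(y)\le degree(z)$.

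To find $z$ in $O(degree(x))$ plus path length, I walk up from $y$ and stop at the first ancestor not in $N(x)$. Then $z$ is the parent of a vertex adjacent to $x$, and the tie-breaking argument of Lemma~\ref{dfsordercondition} applies to $x$, $y$, $z$.

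Lemma~\ref{nottwin} then requires a private neighbour $t\in N(z)\setminus N(y)$. I find it by scanning $N(z)$ while $N(y)$ is marked, which costs $O(degree(y)+degree(z))$. This scan happens once, at termination, so linearity is preserved. The four vertices $x,y,z,t$ induce a $C_4$ or a $P_4$, depending on whether $xt\in E$.

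\textbf{Positive certificate.} If no failure occurs, Proposition~\ref{vertex ordering} says $\sigma$ avoids both patterns of Figure~\ref{patternsTPG}. By Proposition~\ref{onepattern}, a checker only needs to verify the first pattern and that $\sigma$ is a DFS ordering (Proposition~\ref{fourpoint}), both in linear time by \cite{DFHP21}. The simpler alternative is to hand over $T_G$ and let the checker verify $m=\sum_v d(v)$ and the ancestor adjacencies.

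\textbf{Main obstacle.} The delicate step is making the ancestor walk for $z$ linear, since $z$ must be chosen so that the degree inequality holds. I expect that choosing the deepest non-neighbour ancestor, via the ``closest to root'' argument, resolves it.
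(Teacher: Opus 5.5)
Your proposal is correct and is essentially the paper's proof. It uses one \texttt{maxDFS} and a per-vertex test that $v$ is adjacent to all its ancestors: your depth versus on-stack-neighbour count is the paper's $countin-countout$ versus $Ancestors(u)$ of Corollary~\ref{corrmich}. It then applies Lemmas~\ref{dfscondition} and~\ref{nottwin} for the negative certificate, and hands over the tree or the ordering as positive certificate. Running the test when a vertex is entered, rather than when its call terminates as in the paper's pseudocode, is the better choice. It guarantees that all proper ancestors of the first failing $x$ have passed, whereas with an exit-time test the first detected failure can have a failing parent.

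Two points need fixing. First, your ``main obstacle'' is not an obstacle, and the reason you give for resolving it is not the one that matters. For \emph{every} ancestor $z$ of $x$ with $xz\notin E$, one has $yz\in E$ and $degree(z)\ge degree(y)$. If $z$ is a root, it has maximum degree in its component. Otherwise its parent $b$ is a proper ancestor of $y$, so $by\in E$ because $y$ passed its test; hence $y$ was an unvisited neighbour of $b$ when $b$ chose $z$. What matters is the parent of $z$, not the fact that its child is adjacent to $x$, so any non-neighbour ancestor found by your upward walk will do.

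Second, the private neighbour must be taken in $N(z)\setminus N[y]$, not $N(z)\setminus N(y)$. The vertex $y$ itself belongs to the latter set, and with degree-sorted adjacency lists your scan can return $t=y$. The fix is to mark $y$ together with $N(y)$ before scanning $N(z)$.
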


\begin{algorithm}
\caption{Trivially perfect graph Recognition}\label{TPG}

\KwData{a connected graph $G$, given by its adjacency lists}
\KwResult{Either "$G$ is not a Trivially Perfect Graph"; or a total ordering of the vertices \texttt{rank} and a tree \texttt{parent};}

\textbf{Initializations:}\;

sort the vertices  in non increasing degrees $\tau$: $x_1, \dots x_n$\;

sort the adjacency lists with respect to $\tau$\;

$countin\leftarrow 1$; $countout \leftarrow 1$\;
 Note : countin and countout are \textbf{global} variables\;
\ForAll{$v \in V(G) $}{  $rank(v) \leftarrow 0$;  $Visited(v) \leftarrow FALSE$\;} 

$DFGS(G,x_1)$\;

 \end{algorithm}

\begin{algorithm}
\caption{$DFS(G, u)$}\label{Recursive DFS}
$Ancestors(u) \leftarrow 0$\;

\ForAll{$v \in N(u)$ in decreasing degree ordering}{
  \uIf{$rank(v)\neq 0$ and $Visited(v)=false$ (Remark: $v$ being visited)}
    { 
     $Ancestors(u) \leftarrow Ancestors(u) +1$
    }
    \uElseIf{$rank(v)= 0$ (Remark: $v$ unvisited) }
        {
          $rank(v) \leftarrow countin$\;
          $countin \leftarrow countin +1$\;
          $parent(v)\gets u$\;
          $DFS(G, v)$\;
          $Visited(v) \leftarrow true$\;
          $countout \leftarrow countout +1$\;
        }
    \Else{\emph{(Remark: $v$ visited.)} No nothing}
}
\If{$countin - countout  \neq Ancestors(u) $}{
         G is not a Trivially Perfect Graph\;
         Output NegCertificate(G,u,parent(u)) and STOP\;
}
\end{algorithm}

\begin{algorithm}
\caption{$NegCertificate(G, x, y)$: outputs a  $C_4$ or a $P_4$ containing $x$ and $y$}\label{negCertif}
Remark: find $z$ like in Lemma \ref{dfscondition} : non-neighbor of $x$ whose visit is in progress\;
\ForAll{vertex $u\in N(x)$}{
  mark $u$ in green}
\ForAll{vertex $u\in V(G)$}{
\If{$u$ not marked in green and $rank(u)\neq 0$ and $Visited(u)=false$ }
{$z=u; break$\;}}
Remark: now find $t\in N(z)\setminus N(y)$ like in Lemma \ref{nottwin}\;
\ForAll{vertex $u\in N(y)$}{mark $u$ in yellow\;}
\ForAll{vertex $t\in N(z)$}{
   \If{$t$ not marked in yellow}{
    \Return{$\{x,y,z,t\}$ (a $C_4$ or a $P_4$)}}}
\end{algorithm}

We can however, by adding three counter to each vertex, write an even simpler certifying recognition algorithm in the subsequent Theorem~\ref{theorem_tp}, thus avoiding the verification step over $\sigma$ of Ref.~\cite{DFHP21}.
The following properties are well know from DFS and can be easily proven:
\begin{property}
When $DFS(G,u)$ terminates, $countin-countout$ is the number of vertices that belongs to the path from $u$ to the root of the DFS (its first vertex), namely its ancestors.
\end{property}
\begin{property}
When $DFS(G,u)$ terminates, $Ancestor(u)$ counts the neighbours of $u$ which are its ancestors. 
\end{property}
\begin{corollary}\label{corrmich}
When $DFS(G,u)$ terminates,  $u$ is adjacent to all its ancestors iff $countin-countout = Ancestor(u)$.
\end{corollary}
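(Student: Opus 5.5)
The corollary follows by combining the two preceding properties with the fact that the ancestors of $u$ in the DFS tree form a chain. Fix the moment when $DFS(G,u)$ terminates. Let $P(u)$ be the set of proper ancestors of $u$ in the DFS tree, i.e. the vertices on the tree path from the root to $u$, excluding $u$. By the first property, $countin-countout$ equals the number of such ancestors; I will take the convention that this count is $|P(u)|$, and check it against the counter updates below. By the second property, $Ancestors(u)$ equals $|N(u)\cap P(u)|$. Since $N(u)\cap P(u)\subseteq P(u)$, we get $|N(u)\cap P(u)| = |P(u)|$ if and only if $P(u)\subseteq N(u)$, that is, iff $u$ is adjacent to all its ancestors. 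This is exactly the claimed equivalence.

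The substance is therefore in justifying the two properties precisely, with a consistent off-by-one convention. For the first, I argue by induction on the recursion that, while $DFS(G,w)$ is executing, the vertices $v$ with $rank(v)\neq 0$ and $Visited(v)=false$ are exactly the vertices on the current recursion stack. These are $w$ and its ancestors, excluding the root $x_1$, which the driver calls directly without assigning a rank or a visited flag. The counter $countin$ is incremented once per call entered, and $countout$ once per call returned. Hence $countin-countout$ equals the number of open calls, excluding the root's, counted before $u$'s own $countout$ increment is performed, which happens in the parent after the return. Since the test sits at the end of $DFS(G,u)$ with $u$ still open, the difference counts the stack above the root. I would verify that this matches $|P(u)|$ under the convention for the root (for instance by having the root also counted, or by noting that the root is not counted as "being visited"). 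This calibration is the step I expect to be the main obstacle, since the pseudocode is informal about the root.

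For the second property, note that the scan over $N(u)$ in $DFS(G,u)$ increments $Ancestors(u)$ exactly for neighbours $v$ that are "in progress" at the moment $v$ is examined. Here I have to handle the subtlety that children of $u$, once returned, are marked visited, so they are not counted; only stack vertices, i.e. ancestors, are. Then $Ancestors(u)=|N(u)\cap P(u)|$ with the same convention for the root. The two counts are thus computed over the same set, and the set-containment argument of the first paragraph closes the proof.
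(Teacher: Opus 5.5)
Your first paragraph is the paper's argument. The paper places the corollary directly after the two properties as an immediate consequence, and the only content is that $N(u)\cap P(u)\subseteq P(u)$, so equal cardinalities force $P(u)\subseteq N(u)$.

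The gap is in the calibration you postponed. It is not a harmless convention, and your closing claim that the two counts ``are computed over the same set'' is false. Write the recursion stack at the test as $a_0=x_1,a_1,\dots,a_k=u$. The increment of $countout$ for $u$ happens in the caller, after the test. So $countin-countout$ counts every ranked vertex on the stack \emph{including $u$}. By contrast, $Ancestors(u)$ never counts $u$, and counts $a_0$ only if the root has a nonzero rank. This gives two cases:
\begin{itemize}
\item If the root is ranked like the others, $countin-countout=k+1>Ancestors(u)$.
\item If it is not ranked, $countin-countout=k$, but $a_0$ is not seen as ``being visited'', so $Ancestors(u)\le k-1$.
\end{itemize}
Both conventions you suggest therefore make the test fail at every non-root vertex, already at $x_2$ in $K_2$. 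Moreover, with $rank(x_1)=0$ as in the pseudocode, the root is not simply excluded. Some descendant eventually scans $x_1$, finds $rank(x_1)=0$, treats it as unvisited and calls $DFS(G,x_1)$ again. So your description of the stack is not what the code does.

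The missing idea is that the two requirements pull in opposite directions and must be decoupled. The root has to be marked as in progress (nonzero rank, $Visited$ false), so that descendants count it and never re-enter it. At the same time, its opening must not contribute to the difference. Two ways to do this:
\begin{itemize}
\item Set $rank(x_1)\leftarrow 1$ and $countin\leftarrow countout\leftarrow 2$ before calling $DFS(G,x_1)$.
\item Rank the root normally, so that $countin=2$ and $countout=1$ when $DFS(G,x_1)$ starts, and test $countin-countout-1=Ancestors(u)$.
\end{itemize}
Then $countin-countout=|\{a_1,\dots,a_k\}|=k=|P(u)|$. This is a different set from $P(u)$ but of the same size, which is all your containment argument uses. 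Also $Ancestors(u)=|N(u)\cap P(u)|$, because throughout the scan of $N(u)$ the ranked unfinished vertices are exactly $a_0,\dots,a_k$. Each child of $u$ is closed and marked visited before the next neighbour is examined.

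With this correction your proof goes through. The paper itself glosses over the same off-by-one: its first property speaks of ``the path from $u$ to the root'', which contains $u$.
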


\begin{theorem}
\label{theorem_tp}
If $G$ is a trivially perfect graph, Algorithm \ref{TPG} produces in linear time an ordering \texttt{rank} avoiding the 2 patterns and a tree whose transitive close is equal to $G$. If $G$ is not trivially perfect it outputs a $C_4$ or a $P_4$ in linear time.
\end{theorem}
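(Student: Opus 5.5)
The plan is to treat Algorithm~\ref{TPG} (with the recursive procedure of Algorithm~\ref{Recursive DFS} and Algorithm~\ref{negCertif}) as an instrumented \texttt{maxDFS}. The correctness argument then reduces to Lemma~\ref{dfscondition}, Corollary~\ref{corrmich} and Lemma~\ref{nottwin}. The proof has three parts: linear running time, correctness of the positive output, and correctness of the negative certificate.

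\textbf{Running time.} The degrees are bucket sorted in $O(n)$. The adjacency lists are re-sorted by scanning the vertices in the order $\tau$ and appending each one to its neighbours' lists, in $O(n+m)$. The recursive DFS scans each adjacency list once and does $O(1)$ counter work per scanned edge, so it runs in $O(n+m)$. The procedure $NegCertificate$ is called at most once, and each of its loops is bounded by a degree or by $n$, so it is linear as well. Since the lists are sorted by non-increasing degree, the traversal is exactly a \texttt{maxDFS}, and \texttt{parent} is its search tree $T_G$.

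\textbf{Positive case.} I would show that the test in $DFS(G,u)$ never fires exactly when every vertex is adjacent to all its ancestors in $T_G$. By Corollary~\ref{corrmich}, at the termination of $DFS(G,u)$ the equality $countin-countout=Ancestors(u)$ holds iff $u$ is adjacent to every vertex on the tree path to the root. The only delicate point is to check the counter invariants claimed in the two properties preceding Corollary~\ref{corrmich}:
\begin{itemize}
\item the vertices with $rank\neq 0$ and $Visited=false$ are exactly those on the current recursion stack, namely $u$'s ancestors;
\item $countin-countout$ equals the current stack depth, up to the fixed offset given by the initialisation.
\end{itemize}
If no vertex fails, Lemma~\ref{dfscondition} gives that $G$ is trivially perfect and equals the comparability closure of $T_G$. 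Proposition~\ref{vertex ordering} then gives that \texttt{rank}, which is the \texttt{maxDFS} visiting order, avoids both patterns of Figure~\ref{patternsTPG}. Conversely, if $G$ is trivially perfect, Lemma~\ref{dfscondition} says no bad vertex exists, so the algorithm does not stop early.

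\textbf{Negative case.} This is the main obstacle. The test fires at the first vertex $x$ whose $DFS(G,x)$ terminates, and $x$ has a non-adjacent ancestor. Lemma~\ref{dfscondition} picks the offending vertex closest to the root, which here is not obviously the case. So I would argue directly:
\begin{enumerate}
\item Let $y=parent(x)$. Algorithm~\ref{negCertif} chooses $z$ as any ancestor of $x$ that is still on the stack and not adjacent to $x$. I would fix $z$ adjacent to $y$ by taking the non-neighbour of $x$ closest to $x$ on the ancestor path, whose successor on the path is then a neighbour of $x$. If needed, I would instead re-root the argument at the pair formed by that successor and its parent, replacing $y$ by the relevant tree edge.
\item Since $z$ was visited before $y$ while both were eligible, or $z$ is the root, the \texttt{maxDFS} tie-break gives $degree(y)\le degree(z)$.
\item Lemma~\ref{nottwin}, applied to the path $x,y,z$ with $xz\notin E$, yields $t\in N(z)\setminus N(y)$, and $\{x,y,z,t\}$ induces a $C_4$ or a $P_4$. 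The green and yellow markings find $z$ and $t$ in $O(n+m)$.
\end{enumerate}
If the algorithm's arbitrary choice of $z$ does not guarantee $yz\in E$, I would adjust the certificate step to select the closest such ancestor, which is still linear.
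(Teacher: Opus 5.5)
Your overall route is the paper's. The paper's proof notes that the algorithm is a \texttt{maxDFS}, takes the positive certificates from Lemma~\ref{dfscondition} and Proposition~\ref{vertex ordering}, and in the negative case applies Corollary~\ref{corrmich} with $x=u$, $y=parent(u)$, then finds $z,t$ ``as in Lemmas~\ref{nottwin} and~\ref{dfscondition}''. You are right that this glosses over something: the test fires in post-order, so $u$ need not be the offending vertex closest to the root, and an arbitrary non-adjacent ancestor $z$ need not be adjacent to $parent(u)$.

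Your repair, however, breaks at step~2. You take $z$ to be the non-neighbour of $x$ closest to $x$ on the ancestor path, and $y$ its child on that path. Then $y$ competed with $z$ in the degree tie-break only if $y\in N(p)$, where $p=parent(z)$, and nothing forces this. For instance, let $r$ be adjacent to $z$ and to ten pendant vertices, let $zy\in E$, and give $y$ four pendant neighbours. \texttt{maxDFS} visits $r,z,y$ and then a pendant $x$ of $y$, where the test fires. Your rule selects exactly this $z$ and $y$, with $degree(y)=5>2=degree(z)$, so the inequality you claim is false and Lemma~\ref{nottwin} cannot be invoked.

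The conclusion survives after a case split.
\begin{itemize}
\item If $z$ is the root, it has maximum degree, so $degree(y)\le degree(z)$ and Lemma~\ref{nottwin} gives $t$.
\item If $yp\notin E$, then $p\in N(z)\setminus N[y]$ is itself the required $t$. The set $\{x,y,z,p\}$ is a $C_4$ or a $P_4$ according to whether $xp\in E$; in the example, $t=r$.
\item If $yp\in E$, then $y$ was an unvisited neighbour of $p$ when $z$ was taken from $p$'s sorted adjacency list. Hence $degree(y)\le degree(z)$, and Lemma~\ref{nottwin} gives $t$.
\end{itemize}
Alternatively, follow Lemma~\ref{dfscondition} literally. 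Walk the ancestor path $a_0,\dots,a_k=x$ from the root and take the least $i$ such that $a_i$ misses some $a_j$ with $j<i$. Then $\{a_0,\dots,a_{i-1}\}$ is a clique containing both $a_j$ and its parent, and the lemma's argument applies verbatim.

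Either variant runs in $O(n+m)$ using parent pointers and marking. If you reuse the yellow-marking search of Algorithm~\ref{negCertif} for $t$, also exclude $y$ itself, since $y\in N(z)$ but $y\notin N(y)$.
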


\begin{proof}
This algorithm simply is a \texttt{maxDFS} producing a tree (field \texttt{parent}) and an ordering (field \texttt{rank}). Either the tree or the ordering may be used as positive certificate, according to Lemma~\ref{dfscondition} (for the tree) or to Proposition~\ref{vertex ordering} (for the vertex ordering).

According to Corollary \ref{corrmich} if the test at end of Algorithm \ref{Recursive DFS} fails then the condition of Lemma~\ref{dfscondition} is violated with $x=u$, $y=parent(u)$. We then just have to find the ancestor $z$ that is not a neighbor of $u$ and then the vertex $t$ to complete the $C_4$ or the $P_4$ like in Lemma \ref{nottwin} and \ref{dfscondition}. This can be done in linear time using Algorithm \ref{negCertif}.
\end{proof}

%%%%%%%%%%%%%%%%%%%%%%%%%%%%%%%%%%%%%%%%%%%%%%%%%%
%%%%%%%%%%%%%%%%%%%%%%%%%%%%%%%%%%%%%%%%%%%%%%%%%%
\section{Unit Interval Graphs (UIG), also known as Proper Interval Graphs}\label{section.uig}

An interval graph is an undirected graph formed from a set of intervals on the real line, with a vertex for each interval and an edge between vertices whose intervals intersect. 

Let $V$ be a finite set and, for any $x\in V$, let $Interval(x)\subset \mathbb{R}$ be an interval of the real line.
We call the set $R_V=\{Interval(x)\}_x$ a \textit{realizer}. Let $E(R_V)$ be the \textit{intersection} relation of the realizer : $xy\in E(R_V)$ iff   $Interval(x)\cap  Interval(y)$ is not empty. A graph $G=(V,E)$ is an \textit{Interval Graph} (resp. an \textit{unit interval graph UIG for short}) if it is the intersection graph of a collection of  intervals (resp. of unit length intervals); i.e., if there exists a realizer $R_V$ such that $E=E(R_V)$. 
Given a realizer of an interval graphs going from left to right on the real line, we can associate to this realizer a \textit{left (resp. ordering)} the total ordering of the vertices yielded by the left (resp. right) bounds of the intervals.

\subsection{Interval graphs}

It is well-known \cite{Damaschke90} that interval graphs are characterized by the 2 patterns of Figure \ref{intervals}. To be convinced think of the right ordering associated to a realizer. It should be noticed than a total ordering $\tau$ avoiding the right  pattern of Figure \ref{intervals} traverses the connected component of the graph strictly successively.
Let us define a particular existential pattern $\Phi_p$.
\begin{definition}
A total ordering $\tau$ of the vertices of a graph $G$ satisfies $\Phi_p$ if
For all $a,b,c$ such that $a <_{\tau} b <_{\tau} c$ with  $ac \in E$ and $ab \notin E$ then $\Phi(G,\tau)$ is true, where $\Phi_p$ is some logical predicate on $G, \tau$.
\end{definition}
\begin{proposition}\label{etonnante}
Let $\tau$ be an ordering of a graph $G$ avoiding the 2 patterns of Figure \ref{intervals}, then $\tau$ satisfies  $\Phi_p$ for every logical predicate $\Phi_p$.
\end{proposition}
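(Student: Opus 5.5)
The claim should follow vacuously: I expect the two patterns of Figure~\ref{intervals} to jointly forbid every triple that could trigger the hypothesis of $\Phi_p$.

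\textbf{Reading the patterns.} The paper characterizes interval graphs via orderings like the one induced by a realizer, and remarks that the right pattern forces connected components to be traversed consecutively. So I read the right pattern as the cocomparability pattern of Figure~\ref{cocomp}: $ac\in E$, $ab\notin E$, $bc\notin E$. I read the left pattern as $ac\in E$, $ab\notin E$, $bc\in E$. Both patterns therefore share the mandatory edge $ac$ and the mandatory non-edge $ab$, and differ only on the pair $bc$. This is consistent with the left-endpoint ordering of a realizer. There, if $a<_\tau b<_\tau c$ by left endpoints and $Interval(a)$ meets $Interval(c)$, then the right end of $a$ is at least the left end of $c$, which is at least the left end of $b$. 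Hence $Interval(a)$ meets $Interval(b)$. In that ordering, $ac\in E$ forces $ab\in E$, whatever happens on $bc$.

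\textbf{Main step.} Take an ordering $\tau$ avoiding both patterns. Suppose some triple $a<_\tau b<_\tau c$ satisfies the hypothesis in the definition of $\Phi_p$, namely $ac\in E$ and $ab\notin E$. I split on the pair $bc$:
\begin{itemize}
\item if $bc\in E$, the triple realizes the left pattern;
\item if $bc\notin E$, it realizes the right pattern.
\end{itemize}
Either case contradicts the assumption that $\tau$ avoids both patterns. So no such triple exists, and the universally quantified implication defining $\Phi_p$ holds vacuously, whatever predicate $\Phi(G,\tau)$ is.

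\textbf{Expected obstacle.} The only delicate point is confirming that the two patterns of Figure~\ref{intervals} are exactly the two completions of the partial pattern ($ac$ edge, $ab$ non-edge) on the undetermined pair $bc$. Once that is fixed, the proof is the two-case split above and nothing else. If instead the figure were drawn mirrored (constraints on $bc$ rather than $ab$, as for a right-endpoint ordering), the statement would need $\tau^d$ or the mirrored hypothesis. I would then restate it accordingly and run the same case split.
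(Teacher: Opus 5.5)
Your proof is correct and is the same as the paper's: Figure~\ref{intervals} draws both patterns with $ac\in E$ and $ab\notin E$, differing only on $bc$. Hence any triple $a<_\tau b<_\tau c$ triggering the hypothesis of $\Phi_p$ would realize one of them, and the defining implication holds vacuously. Your left-endpoint sanity check and the contingency for a mirrored figure are harmless but unnecessary, since the figure already fixes the non-edge on $ab$.
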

\begin{proof}
If there exists $a,b,c$ such that $a <_{\tau} b <_{\tau} c$ with  $ac \in E$ and $ab \notin E$ it contradicts the 2 patterns of Figure \ref{intervals} since either $bc \in E$ (left pattern) or $bc \notin E$ (right pattern).
So a total ordering avoiding the 2 patterns of Figure \ref{intervals} satisfies the existential pattern  $\Phi_p$ for any predicate $\Phi_p$.
\end{proof}
\begin{figure}
\begin{center}
\begin{tabular}{cc}
\begin{tikzpicture}
%        [scale=1,auto=left,every node/.style=
%       {circle,draw,fill=black!5}]
        \node (a) at (0,0) {a};
        \node (b) at (1,0) {b};
        \node (c) at (2,0) {c};
        \draw (a) to[bend left=50] (c);
        \draw (a)[dashed] to (b);
        \draw (b) to (c);
\end{tikzpicture}
$\qquad$
\begin{tikzpicture}
%        [scale=1,auto=left,every node/.style=
%       {circle,draw,fill=black!5}]
        \node (a) at (0,0) {a};
        \node (b) at (1,0) {b};
        \node (c) at (2,0) {c};
        \draw (a) to[bend left=50] (c);
        \draw[dashed] (a) to (b);
        \draw [dashed] (b) to (c);
\end{tikzpicture}
\end{tabular}
\end{center}
\caption{\label{intervals} The two forbidden patterns of an interval graph.}
\end{figure}

As a consequence using the 4-points conditions introduced in \cite{CorneilK08}, a total ordering of the vertices avoiding the patterns of Figure \ref{intervals}  is a particular generic search but also a DFS, a BFS or a MNS (Maximal neighbourhood search).
And also a particular LexBFS or LexDFS.
Since it is enough to produce such an ordering for the recognition of interval graphs Proposition \ref{etonnante} explains the many published algorithms for the recognition of interval graphs.   
Indeed, starting from the linear time algorithms that recognize interval graphs \cite{BoothL76} and \cite{KorteM86} using $PQ$-trees or a variation $PQR$-trees to handle the maximal cliques,  there is a long series of increasingly simple  algorithms for (resp. proper)  interval recognition. For interval graph recognition we can quote the algorithms using a series of LexBFS 6 for \cite{DCorneilOS09} and 4 for \cite{LiW14}.

It already exists many nice algorithms to recognize proper interval graphs \cite{CorneilKNOS95}, \cite{DEFIGUEIREDO1995179} using 3-sweep   \texttt{LexBFS} \cite{Corneil04}, 3-sweep \texttt{LexBFS} but with a certifying step \cite{HellH04}, and fully dynamic \cite{HellSS01}.
It is well-known from \cite{Roberts69} that in the finite case, the class of Proper Interval Graphs, namely  graphs which admits an interval representation in which no interval is included in another one, is exactly the class of Unit Interval Graphs.
Among the many characterizations of UIG, we will use the following ones.
\begin{theorem}\label{basic} \cite{Roberts69}, \cite{W1967},\cite{Fishburn85}
Let $G$ be a finite graph.  The following statements are equivalent and characterize proper interval graphs:

\begin{description}
%\item[(0)] $G$ is 
\item[(i)] $G$ is a \emph{Proper Interval Graph}, \emph{i.e.}  the intersection graphs of a set of intervals on a line, where no interval is included in another.
\item[(ii)] $G$ a \emph{Unit Interval Graph}, that is $G$ admits a realizer  in which  all the intervals have the same length. 
\item[(iii)] $G$ is  an \emph{Indifference Graph}, that is a graph 
where every node $v$ can be given a real number $k_v$ such that $(u,v)\in E$ if and only if $|k_u-k_v|\leq 1$.
\item [(iii)] $G$ is an interval graph and a $K_{1,3}$-free graph.
\item[(iv)] $G$ is a chordal graph and a $(K_{1,3}, 3$-$sun,net)$-free graph, see Figure \ref{3graphes}. 
\end{description}

\end{theorem}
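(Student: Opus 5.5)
We prove the cycle of implications (i)$\Rightarrow$(ii)$\Rightarrow$(iii)$\Rightarrow$(iii')$\Rightarrow$(iv)$\Rightarrow$(i). Here (iii') denotes the second item labelled (iii), "interval and $K_{1,3}$-free".

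\textbf{(ii)$\Rightarrow$(iii).} Given unit intervals $[k_v,k_v+1]$, two of them intersect iff $|k_u-k_v|\le 1$. This gives the indifference representation directly.

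\textbf{(iii)$\Rightarrow$(iii').} Given an indifference representation, the intervals $[k_v,k_v+1]$ realize $G$, so $G$ is an interval graph. Suppose $v$ had three pairwise non-adjacent neighbours. Their reals all lie in $[k_v-1,k_v+1]$, an interval of length $2$. Three points in such an interval cannot be pairwise at distance $>1$, since the extreme two would then be at distance $>2$. So $G$ is $K_{1,3}$-free.

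\textbf{(iii')$\Rightarrow$(iv).} Interval graphs are chordal, since a chordless cycle of length at least $4$ cannot be realized by intervals. The $3$-sun and the net are classical non-interval, resp. claw-containing, obstructions:
\begin{itemize}
\item the net is an asteroidal triple of leaves;
\item the $3$-sun also contains an asteroidal triple;
\end{itemize}
so neither is an interval graph. By heredity, $G$ avoids both.

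\textbf{(iv)$\Rightarrow$(i).} This is the main obstacle. I would invoke Lekkerkerker--Boland: a chordal graph is interval iff it has no asteroidal triple. The minimal chordal AT-containing graphs are the net, the $3$-sun, the claw-subdivision-like graphs $T_2$ and the families of long tents and long claws. Each of them, other than the net and the $3$-sun, contains an induced $K_{1,3}$, which is a routine check. Hence $G$ is an interval and claw-free graph.

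It remains to pass from an interval and claw-free graph to a proper realizer.
\begin{itemize}
\item Take any realizer with all endpoints distinct.
\item If $I_x\subsetneq I_y$, claw-freeness forces $N[x]\subseteq N[y]$ to be covered from one side. If $I_y$ extended strictly beyond $I_x$ on both sides with private neighbours $u$ on the left and $w$ on the right, then $y,x,u,w$ would form a claw, because $u,w,x$ are pairwise non-adjacent.
\item So one side of $I_y$ beyond $I_x$ meets no interval not meeting $I_x$. We can then shrink $I_y$ on that side to just past the corresponding endpoint of $I_x$ without changing the intersection graph.
\item Repeat this shrinking. A potential, the number of nested pairs, strictly decreases, so the process terminates with a proper realizer.
\end{itemize}

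\textbf{(i)$\Rightarrow$(ii)} is Roberts' theorem. In a proper realizer, left and right endpoint orders coincide; call this common ordering $\sigma$. Then I would define the $k_v$ greedily along $\sigma$ so that $u<_\sigma v$ are adjacent iff $k_v-k_u\le1$. Such a choice exists by processing vertices in order and using that neighbourhoods are consecutive in $\sigma$. Finiteness is used here to pick strict inequalities with small slack.

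The delicate points are this slack argument and the claw-check over the infinite AT-families. I expect the latter to be the heaviest step, so I would simply cite \cite{Roberts69,W1967,Fishburn85} for it.
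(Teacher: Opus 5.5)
The paper gives no proof of this theorem; it only cites Roberts, Wegner and Fishburn. So your cycle of implications is the only actual argument here, and most of its links are sound:
\begin{itemize}
\item (ii)$\Rightarrow$(iii)$\Rightarrow$(iii')$\Rightarrow$(iv) are correct as written.
\item The greedy construction for (i)$\Rightarrow$(ii) works, provided you keep $k_v-k_u<1$ strictly on edges, as you indicate.
\item Going from (iv) to ``interval and claw-free'' via Lekkerkerker--Boland is fine. Your list of minimal chordal non-interval graphs omits the umbrella (a vertex universal to an induced $P_5$, plus a pendant at the middle vertex of the $P_5$). The umbrella contains a claw as well, so the conclusion survives.
\end{itemize}

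\textbf{The gap.} It lies in the step from ``interval and claw-free'' to (i): the number of nested pairs does not necessarily decrease under your shrinking. Shrinking $I_y=[l_y,r_y]$ on the left so that it starts just after $l_x$ makes $I_y$ newly contained in every $I_z$ with $l_y<l_z<l_x$ and $r_z>r_y$. Concretely, take
\[ I_y=[0,10],\quad I_x=[5,6],\quad I_{z_i}=[i,10+i]\ (i=1,2,3),\quad I_w=[9,20]. \]
The graph is $K_6$ minus the edge $xw$, hence claw-free. There are $4$ nested pairs ($I_x$ inside $I_y,I_{z_1},I_{z_2},I_{z_3}$). For the pair $I_x\subsetneq I_y$, $w$ is a private neighbour of $y$ on the right, so your rule shrinks $I_y$ to $[5+\varepsilon,10]$. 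This destroys one nested pair and creates three ($I_y\subsetneq I_{z_i}$), so the potential rises from $4$ to $6$. Since this step carries the real content of Roberts' theorem, the proof is incomplete as it stands.

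\textbf{Repair by an extremal choice.} Fix $x$. By your claw argument, every interval containing $I_x$ has no private neighbour on at least one side. Among the containers with no private neighbour on the left, shrink the one $I_y$ with the largest right endpoint; handle the right side symmetrically, taking the smallest left endpoint. Suppose some $z$ has $l_y<l_z<l_x$ and $r_z>r_y$. Then $I_z$ contains $I_x$. Any left private neighbour $p$ of $z$ has $l_y<l_z\le r_p<l_x$, so $I_p$ meets $I_y$ and misses $I_x$; that is, $p$ would be a left private neighbour of $y$. Hence $z$ is also a left-shrinkable container, which contradicts the maximality of $r_y$. So no new nested pair appears, the pair $(x,y)$ disappears, and your potential genuinely decreases.

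\textbf{Repair without shrinking.} Alternatively, take a consecutive ordering $C_1,\dots,C_k$ of the maximal cliques, with $v$ belonging to $C_{a_v},\dots,C_{b_v}$. Suppose $a_u<a_v\le b_v<b_u$. Then any $p\in C_{a_v-1}\setminus C_{a_v}$ and $q\in C_{b_v+1}\setminus C_{b_v}$ give a claw centred at $u$ with leaves $p,v,q$. Therefore sorting the vertices by $(a_v,b_v)$ makes both coordinate sequences nondecreasing. With $i_v$ the rank in this order and $\delta<1/(2n)$, the intervals $[a_v+i_v\delta,\,b_v+1/2+i_v\delta]$ form a proper realizer of $G$.
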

\begin{figure}[h]

\begin{center}
\scalebox{.8}{
  \begin{tikzpicture}[scale=0.3]

\node(A) at (0,0) [shape=circle,draw] {$a$};
\node(a) at (-4,0) [shape=circle,draw] {$a'$};
\node(B) at  (4,0) [shape=circle,draw] {$b$};
%\node(B1) at  (4,3) [shape=circle,draw] {$b'$};
\node(C) at (8,0) [shape=circle,draw] {$c$};
\node(b) at (-4,-3) [shape=circle,draw] {$b'$};
\node(D) at (2,-3) [shape=circle,draw] {$d$};
%\node(D1) at (-1,-3) [shape=circle,draw] {$d'$};
\node(E) at (6,-3) [shape=circle,draw] {$e$};
%\node(E1) at (9,-3) [shape=circle,draw] {$e'$};
\node(F) at (4,-6) [shape=circle,draw] {$f$};

\node(c) at (-6,-6) [shape=circle,draw] {$c'$};
\node(d) at (-2,-6) [shape=circle,draw] {$d'$};

\node(e) at (12,0) [shape=circle,draw] {$g$};
\node(f) at (12,-3) [shape=circle,draw] {$h$};
\node(g) at (10,-6) [shape=circle,draw] {$i$};
\node(h) at (14,-6) [shape=circle,draw] {$j$};
\node(i) at (8,-9) [shape=circle,draw] {$k$};
\node(j) at (16,-9) [shape=circle,draw] {$l$};

\draw(A)--(B)--(C);
\draw(A)--(D)--(E);
\draw(D)--(B)--(E)--(C);
\draw(D)--(F)--(E);
\draw (a)--(b);
\draw (c)--(b)--(d);
\draw (e)--(f);
\draw (g)--(f)--(h)--(g);
\draw (i)--(g);
\draw (j)--(h);

%\node  at (4,-1.5) [scale=1] {a $P_4$};
\end{tikzpicture}
}
\end{center}
\caption[An]{\label{3graphes}$K_{1,3}$ (also known as Claw), 3$-$Sun and Net}
\end{figure}

Using this above Theorem we can derive that complement of bipartite chain graphs are Proper interval graphs. First noticing that they are interval graphs as the intersection of $C_4$-free graphs and co-comparability graphs and then noticing that they are $K_{1,3}$-free graphs. So in  section \ref{aperitif}, we already process a particular case of Proper Interval graphs.
Among the other characterizations of UIG, some can be formalized in terms of  total orderings of the vertices. Let us first define these orderings:
  \begin{itemize}
  \item The \textbf{Interval Ordering} of a realizer consists in sorting the intervals by leftmost bound (or, equivalently, rightmost, since we deal with unit intervals).
  \item
A \textbf{ Consecutive Neighbourhood Ordering} of a graph is an ordering such that, for any vertex $v$, the closed neighborhood $N[v]$ of $v$ occurs consecutively in that ordering.
  \item
A  \textbf{Left-Right Ordering} of a graph is an ordering such that, for any triple $(a,b,c)$ such that $a<b<c$, if $ac\in E$ then $ab\in E$ and $bc\in E$.
\item
 An  \textbf{Indifference Ordering} of a graph is an ordering such that, for any triple $(a,b,c)$ forming a $P3$ with center $b$, i.e. $ab\in E$ and $bc\in E$ and $ac\notin E$, we have either $a<b<c$ or $c<b<a$
  \item
A  \textbf{Simplicial Elimination Ordering} is an ordering such that such that, for any vertex $v$, the neighbors of $V$ that precede it in that ordering form a clique.
\item
A  \textbf{Bisimplicial Elimination Ordering} is such that both this ordering and its mirror (\emph{i.e} its reverse ordering) are simplicial elimination orderings.
\item
A \textbf{\texttt{minBFS} Ordering} starting at $x$ of a graph $G$, denoted $\texttt{minBFS}(x)$, is the visit ordering of \texttt{minBFS} starting at vertex $x$ ($x$ comes first in that ordering).
  \end{itemize}

We already defined Left (resp. Right) ordering for interval graphs in the beginning of Section \ref{section.uig}. But in the case of Unit Interval graphs both Left and Right orderings satisfy this condition on triples. This is why we denote them as Left-Right orderings.
Following \cite{DerekBFS95}, we say for an interval graph that a vertex $x \in V(G)$ is an \textit{anchor} of a UIG $G$ if there exists a realizer  of $G$ such that $Interval(x)$ is  the leftmost interval of that realizer.
Notice that UIGs with at least two vertices admit at most two anchors, since a realizer may be \emph{mirrored}.
\begin{proposition}\cite{DerekBFS95}\label{propAnchor}
A  minimum degree vertex of the last layer of any BFS  is an anchor.
\end{proposition}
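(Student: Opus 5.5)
We will work with a fixed unit-interval realizer of $G$, which we assume connected (the statement concerns the component explored by the BFS). Let $v_1,\dots,v_n$ be the associated Interval Ordering, which is a Left-Right Ordering. For each vertex $v_i$, the closed neighbourhood $N[v_i]$ is then an interval $\{v_{\ell(i)},\dots,v_{r(i)}\}$ of this ordering, and both $\ell$ and $r$ are nondecreasing in $i$. Let $s=v_p$ be the start vertex of the BFS, and let $L_0,\dots,L_k$ be its layers, where $L_j$ is the set of vertices at distance $j$ from $s$. The plan is to locate $L_k$ at the two ends of the ordering, and then compare degrees inside $L_k$ using the monotonicity of $r$ (or of $\ell$ on the right side).

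\emph{Step 1: distance is monotone on each side of $s$.} We show that $i<j\le p$ implies $d(v_i,s)\ge d(v_j,s)$, and symmetrically on the right of $s$. Take a shortest path from $v_i$ to $s$ and its first vertex $w$ with index $\ge j$; call its predecessor $u$, so $u<v_j\le w$. If $w=v_j$ we are done. Otherwise $u<v_j<w$ with $uw\in E$, so the Left-Right property gives $v_jw\in E$. Hence $v_j$ has a path to $s$ no longer than the remaining part of the path from $v_i$. Consequently $L_k\cap\{v_1,\dots,v_p\}$ is a prefix $\{v_1,\dots,v_t\}$ of the ordering, possibly empty. Likewise $L_k\cap\{v_p,\dots,v_n\}$ is a suffix. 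Moreover $L_k$ is the union of these two pieces, because the farthest vertices on each side are the extreme ones.

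\emph{Step 2: each end-block of $L_k$ is a clique with a common left end.} Suppose the prefix $\{v_1,\dots,v_t\}$ is nonempty and $k\ge1$. Then $v_1$ has a neighbour $u$ in $L_{k-1}$. By Step 1 every vertex $v_i$ with $i\le t$ is in $L_k$, so $u$ lies to the right of $v_t$. The Left-Right property applied to $v_1<v_t<u$ gives $v_1v_t\in E$, and then it also gives $v_1v_i\in E$ for every $i\le t$. Hence $\ell(i)=1$ for all $i\le t$, and $\deg(v_i)=r(i)-1$, which is nondecreasing in $i$. The symmetric argument applies to the suffix block, using $\ell$ and $v_n$.

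\emph{Step 3: conclusion.} Let $x$ be a minimum degree vertex of $L_k$, and say without loss of generality that it lies in the prefix block; otherwise mirror the realizer. By Step 2, $\deg(x)\ge\deg(v_1)$, and $v_1\in L_k$, so by minimality $\deg(x)=\deg(v_1)$. Then $r(x)=r(1)$ and $\ell(x)=\ell(1)=1$, so $N[x]=N[v_1]$, i.e.\ $x$ is a true twin of $v_1$. Exchanging the intervals of $x$ and $v_1$ in the realizer yields a realizer of the same graph in which $x$ is leftmost, so $x$ is an anchor. The degenerate case $k=0$ is the one-vertex graph, which is trivial.

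The main obstacle is Step 1, namely showing that the last BFS layer sits exactly at the extremes of the Interval Ordering. Once this is established, the degree comparison in Steps 2 and 3 follows directly from the consecutive and monotone structure of closed neighbourhoods in a Left-Right Ordering. For Step 1, I would first attempt the path-shortcutting argument above, being careful with the index bookkeeping.
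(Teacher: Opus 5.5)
Your proof is correct. The paper gives no argument for this proposition; it imports it from \cite{DerekBFS95}. So there is no internal proof to match, and yours is a valid self-contained replacement.

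It uses only the Left-Right property of the Interval Ordering of a unit realizer, equivalently consecutive closed neighbourhoods with nondecreasing endpoints $\ell,r$ (cf.\ Theorem~\ref{caracordres}). Each step checks out:
\begin{itemize}
\item In Step 1 the shortcut gives $d(v_j,s)\le 1+d(w,s)\le d(v_i,s)$. Hence the last layer $L_k$ is an initial segment plus a final segment of the ordering.
\item In Step 2 the neighbour $u\in L_{k-1}$ of $v_1$ has index $>t$. So $v_1<v_i<u$ forces $\ell(i)=1$ and $\deg(v_i)=r(i)-1$.
\item Step 3 then follows by the interval swap.
\end{itemize}

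Compared with simply citing, your route gives extra structure. The last BFS layer is at most two cliques, one at each end of the interval ordering, and degree is monotone inside each. It also shows that the minimum-degree vertex is in general only a true twin of the extreme vertex $v_1$, which is exactly why your swap in Step 3 is needed. The same point shows that the paper's remark that a UIG with at least two vertices has at most two anchors holds only up to true twins: in $K_3$ every vertex is an anchor.

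Two cosmetic points. The justification ``because the farthest vertices on each side are the extreme ones'' is unnecessary, since $\{v_1,\dots,v_p\}\cup\{v_p,\dots,v_n\}=V$. And restricting to the connected component explored by the BFS is the right reading of the statement.
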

\begin{theorem}[$\star$]\cite{LO1993},\cite{HabibL09}\label{caracordres}
Let $G$ be a graph.  The following statements are equivalent:
  \begin{description}
  \item[(0)] $G$ is a UIG
\item[(i)] $G$ admits a Consecutive Neighborhood Ordering.
\item[(ii)] $G$ admits a Left-Right Ordering

\item[(iii)] $G$ admits an Indifference Ordering
\item[(iv)] $G$ admits a Bisimplicial Elimination Ordering  
  \end{description}
\end{theorem}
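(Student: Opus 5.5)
I would prove the equivalences through a cycle of implications, $(0)\Rightarrow(\text{ii})\Rightarrow(\text{i})\Rightarrow(\text{iii})\Rightarrow(\text{iv})\Rightarrow(0)$, taking Theorem \ref{basic} for granted.

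\textbf{From $(0)$ to (ii).} Take a unit-length realizer of $G$ and let $\sigma$ be its Interval Ordering, that is, the intervals sorted by left endpoint. Let $a<_\sigma b<_\sigma c$ with $ac\in E$. Then $\ell_a\le \ell_b\le \ell_c\le \ell_a+1$. Hence $|\ell_b-\ell_a|\le 1$ and $|\ell_c-\ell_b|\le 1$, so $ab,bc\in E$. Therefore $\sigma$ is a Left-Right Ordering.

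\textbf{From (ii) to (i).} Fix a Left-Right Ordering $\sigma$ and a vertex $v$. Let $u$ be any neighbour of $v$ with $u<_\sigma v$, and let $w$ lie between them. The Left-Right condition on the triple $u<w<v$ gives $wv\in E$. The symmetric argument handles neighbours after $v$. So $N[v]$ is an interval of $\sigma$, and $\sigma$ is a Consecutive Neighbourhood Ordering.

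\textbf{From (i) to (iii).} Let $a,b,c$ induce a $P_3$ with centre $b$, and suppose for contradiction that $b$ is not between $a$ and $c$. Say $a<c<b$ (the other cases are symmetric). Since $N[b]$ is consecutive and contains $a$ and $b$, it also contains $c$ — this is fine. The contradiction must come from $N[a]$: it contains $a$ and $b$, so by consecutiveness it contains $c$, contradicting $ac\notin E$. So $b$ lies between $a$ and $c$, and $\sigma$ is an Indifference Ordering.

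\textbf{From (iii) to (iv).} Let $\sigma$ be an Indifference Ordering. Suppose $a,b<_\sigma v$ are nonadjacent neighbours of $v$. Then $a,v,b$ is a $P_3$ with centre $v$, but $v$ is not between $a$ and $b$, which is a contradiction. So $\sigma$ is a simplicial elimination ordering. The definition of an Indifference Ordering is symmetric under reversal, so $\sigma^d$ is one as well. Thus $\sigma$ is bisimplicial.

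\textbf{From (iv) to $(0)$, the main obstacle.} Since $G$ has a simplicial elimination ordering, it is chordal. By Theorem \ref{basic}(iv), it then suffices to show that a graph with a bisimplicial ordering $\sigma$ contains no claw, no 3-sun and no net. I would reformulate bisimpliciality as a pattern condition: no $a<b<c$ has $ac,bc\in E$ with $ab\notin E$, and none has $ab,ac\in E$ with $bc\notin E$. Equivalently, every induced $P_3$ has its centre in the middle position of $\sigma$.

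For the claw, order its three leaves in $\sigma$. Each leaf forms an induced $P_3$ with the centre and any other leaf, so the centre must lie between every pair of leaves. Three leaves cannot all pairwise straddle one point, so there is no claw.

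For the net, consider the triangle $x,y,z$ with pendant vertices $x',y',z'$. Each path $x',x,y$ is an induced $P_3$, which forces $x$ between $x'$ and $y$, and similarly for the other combinations. Take the triangle vertex that is extreme in $\sigma$, say the leftmost one, $x$. Its pendant $x'$ must satisfy $x$ between $x'$ and both $y$ and $z$, so $x'<x$. Now take the rightmost triangle vertex $z$: we get $z'>z$. The middle vertex $y$ needs $y$ between $y'$ and $x$, and also between $y'$ and $z$. This is impossible, because $x<y<z$ means $y'$ would have to be both greater than and less than $y$.

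The 3-sun would be handled by a similar, more careful case analysis on the order of its degree-2 vertices. This case analysis is the step I expect to require the most care.

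Alternatively, I could bypass the forbidden subgraphs altogether and show directly that a bisimplicial ordering is Left-Right, then build unit intervals from it. I would try the forbidden-subgraph route first.
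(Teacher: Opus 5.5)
Your proof takes the same route as the paper. Both run a cycle of elementary implications; you go $(0)\Rightarrow$(ii)$\Rightarrow$(i)$\Rightarrow$(iii) where the paper goes $(0)\Rightarrow$(i)$\Rightarrow$(ii)$\Rightarrow$(iii), which makes no difference. For (iv)$\Rightarrow(0)$ you, like the paper, use chordality plus Theorem~\ref{basic}(iv). The key fact is that a bisimplicial ordering places the centre of every induced $P_3$ in the middle. Every step you actually carry out is correct, and your claw and net arguments give detail that the paper only asserts.

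The one piece you leave open is the 3-sun. It needs no careful case analysis, just the extreme-triangle-vertex idea you used for the net:
\begin{itemize}
\item Let the central triangle be $x<_\sigma y<_\sigma z$. Let $p$ be the outer vertex adjacent to $x,y$, and $r$ the outer vertex adjacent to $x,z$.
\item Then $p$--$x$--$z$ is an induced $P_3$ with $x<_\sigma z$, which forces $p<_\sigma x$. Likewise $r$--$x$--$y$ forces $r<_\sigma x$.
\item Since $p$ and $r$ are non-adjacent, $x$ has two non-adjacent earlier neighbours, contradicting simpliciality.
\end{itemize}
With this, (iv)$\Rightarrow(0)$ is complete.

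Your fallback idea, showing directly that a bisimplicial ordering is Left-Right, would fail as stated. The paper's example $g,h,i$, whose only edge is $gi$, is bisimplicial but not Left-Right. That route needs connectivity, as in Lemma~\ref{2=3}, and then a concatenation over components.
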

\begin{proof}
(0) $\rightarrow$ (i)
Let us take an interval representation of $G$. The ordering of the vertices made up the left boarder of the intervals is trivially a Consecutive Neighborhood Ordering.

(i) $\rightarrow$ (ii) Suppose that $G$ admits a consecutive Neighbourhood ordering $\tau$ of its vertices, then the forbidden pattern of Left-Right Ordering occur, since $b$ occurs inside the closed neighborhood of $a$, so this ordering also is a Left-Right Ordering.

(ii) $\rightarrow$ (iii)
A Left-Right Ordering satisfies trivially the $P_3$ condition of an indifference  ordering.

(iii) $\rightarrow$ (iv) 
It should be noticed that in terms of forbidden patterns condition (iii) states that $G$ admits an ordering of its vertices avoiding the  3 patterns of  Figure~\ref{3patterns}.
While the other  condition (iv) can be restated as follows:
$G$ admits an ordering of its vertices which is simplicial in both directions. This exactly means that this ordering 
avoiding the 2 first patterns of Figure \ref{3patterns}.
Since any ordering that avoids 3 patterns also avoids 2 of them,
therefore (iii) implies (iv).

(iv) $\rightarrow$ (0) Let us use a proof already in \cite{HabibL09}.
One pattern shows the chordality of $G$. Using Theorem \ref{basic}, it suffices to verify that none of the $K_{1,3}$, 3-sun or net admits a bisimplicial ordering. Since a characterization with ordering avoiding patterns is hereditary.
\qed
\end{proof}
\begin{figure}
\begin{center}
\begin{tikzpicture}
\node  at (1, -1) {chordal};
\node  at (4,-1)  {co-chordal};
\node  at (7, -1)  {cocomparability};
%	[scale=1,auto=left,every node/.style=		
%	{circle,draw,fill=black!5}]
	\node (a) at (0,0) {a};
	\node (b) at (1,0) {b};
	\node (c) at (2,0) {c};
	\draw (a) to[bend left=50] (c);
	\draw[dashed] (a) to (b);
	\draw (b) to (c);
	
\node (d) at (3,0) {d};
			\node (e) at (4,0) {e};
			\node (f) at (5,0) {f};
			\draw(d) to[bend left=50] (f);
			\draw (d) to (e);
			\draw[dashed] (e) to (f);

	\node (g) at (6,0) {g};
\node (h) at (7,0) {h};
\node (i) at (8,0) {i};
			\draw(g) to[bend left=50] (i);
			\draw [dashed](g) to (h);
			\draw[dashed] (h) to (i);
\end{tikzpicture}
\caption[An]{\label{3patterns}3 patterns: chordal, co-chordal and cocomparability}
\end{center}
\end{figure}

In terms of the existence of a vertex ordering avoiding some patterns, conditions (i) and (ii) correspond to the 3 pattern of Figure \ref{3patterns}.
On the other hand condition (iii) and (iv) correspond to the 2 first patterns namely chordal and co-chordal, yielding a linear time certification algorithm, see \cite{FeuilloleyH21}.
Another consequence, within this ordering, the neighbourhood of every vertex can be partitioned into at most 2 cliques, one before it and one after it.
Vertex orderings avoiding the 3 patterns of Figure \ref{3patterns} are not the same as those avoiding the 2 first patterns namely chordal and
co-chordal. To be convinced just consider the graph on 3 vertices $\{g,  h, i\}$, with edge set $gi$ and $h$ is an isolated vertex, in fact the third pattern of Figure \ref{3patterns}.
$\tau= g, h, i$ is a Bisimplicial Elimination Ordering but not a Left Ordering one.
\begin{corollary}\label{cestdejacela}
Let $\tau$ be a Bimplicial Elimination Ordering (resp.an Indifference Ordering) of an IUG $G$. If furthermore $\tau$
is a cocomp ordering then $\tau$ is a Left-Right Ordering of $G$.
\end{corollary}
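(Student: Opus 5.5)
The proof is a pattern-counting argument. A Left-Right Ordering is exactly an ordering that avoids the three patterns of Figure~\ref{3patterns}. Indeed, for $a<_\tau b<_\tau c$ with $ac\in E$, the Left-Right condition requires both $ab\in E$ and $bc\in E$. The three ways this can fail are:
\begin{itemize}
\item $ab\notin E$, $bc\in E$ (the chordal pattern);
\item $ab\in E$, $bc\notin E$ (the co-chordal pattern);
\item $ab\notin E$, $bc\notin E$ (the cocomparability pattern).
\end{itemize}
So the task reduces to showing that $\tau$ avoids each of the three.

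First take $\tau$ to be a Bisimplicial Elimination Ordering. As noted in the proof of Theorem~\ref{caracordres}, step (iii)$\rightarrow$(iv), being bisimplicial means exactly that $\tau$ avoids the first two patterns of Figure~\ref{3patterns}.

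For completeness I would check this directly. Suppose $\tau$ is simplicial, so that for every vertex its earlier neighbours form a clique. Then no $a<_\tau b<_\tau c$ has $ac,bc\in E$ and $ab\notin E$, which is the chordal pattern. Similarly, $\tau^d$ being simplicial means that for every vertex its later neighbours form a clique. Then no $a<_\tau b<_\tau c$ has $ab,ac\in E$ and $bc\notin E$, which is the co-chordal pattern.

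The hypothesis that $\tau$ is a cocomp ordering says precisely that it avoids the third pattern (Figure~\ref{cocomp}, which coincides with the third pattern of Figure~\ref{3patterns}). All three patterns are therefore avoided, and by the equivalence above $\tau$ is a Left-Right Ordering.

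Now take $\tau$ to be an Indifference Ordering. I would show that an Indifference Ordering already avoids the chordal and co-chordal patterns, and then conclude exactly as above.

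Consider a chordal occurrence $a<_\tau b<_\tau c$ with $ac,bc\in E$ and $ab\notin E$. Then $a,c,b$ is an induced $P_3$ with centre $c$. The indifference condition would force $c$ to lie between $a$ and $b$ in $\tau$, but $c$ comes last. This is a contradiction.

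Symmetrically, consider a co-chordal occurrence $a<_\tau b<_\tau c$ with $ab,ac\in E$ and $bc\notin E$. This is a $P_3$ with centre $a$, and $a$ comes first rather than in the middle, again a contradiction.

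Adding the cocomp hypothesis then removes the third pattern.

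The only real care needed is the identification of ``Left-Right Ordering'' with avoidance of exactly these three patterns, and getting the orientation of each pattern right, so that the chordal and co-chordal cases match simpliciality of $\tau$ and of $\tau^d$ respectively. The hypothesis that $G$ is an UIG is not even needed beyond guaranteeing that such orderings exist.
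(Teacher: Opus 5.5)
Your proof is correct and is essentially the paper's argument. The paper likewise identifies Left-Right Orderings with orderings avoiding all three patterns of Figure~\ref{3patterns}, and Bisimplicial (and Indifference) Orderings with those avoiding the chordal and co-chordal patterns, so the cocomp hypothesis supplies exactly the missing third pattern; your direct check that an Indifference Ordering avoids the first two patterns (an induced $P_3$ whose centre is first or last in the triple) is a useful addition, since the paper's text around Theorem~\ref{caracordres} is slightly inconsistent about how many patterns condition (iii) forbids.
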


We will discuss further in section \ref{unicity} the relationships of these vertex orderings for IUG.

\subsection{Recognition Algorithm}

While \cite{Corneil04} uses three consecutive LexBFS to recognize UIG, we show that two  standard BFS, are enough, as can be seen in the following very simple algorithm.

\begin{algorithm}[H]
    \KwData{A  connected graph $G$}
    \KwResult{A Left-Right Ordering iff $G$ is an IUG.}
    
sort the vertices  by non decreasing degrees $\tau$: $x_1, \dots x_n$;
\\

sort the adjacency lists with respect to $\tau$\;

Perform a BFS (therefore a \texttt{MaxBFS}) of $G$. Let $r$ be the last visited vertex\;

sort the vertices  by non increasing degrees $\theta=\tau^d$ : $x_n, \dots x_1$\;

sort the adjacency lists with respect to $\theta$\; 
Remark : This second sorting can be avoided by using circular doubled linked lists for adjacency.\;

Perform a BFS (therefore a \texttt{minBFS}) of $G$ starting from $r$, yielding a vertex ordering $\sigma$\;

\Return  $\sigma$ \;
    
    \caption{UIG recognition.}
    \label{alg:IUG}
\end{algorithm}

\begin{definition}
We say that during Pass 2 \texttt{minXFS} (recall this means  either a \texttt{minBFS} or a \texttt{minDFS})  visits vertices step by step : $\sigma(i)$ is the vertex visited at step $i$. The vertex with numbered at most $i$ are \textbf{visited}, and their unvisited neighbors  form \textbf{eligible} vertices $Eli(i)$, while the remaining vertices are \textbf{ineligible}. So we have $u\in Eli(i)$ when $\exists j>i$   $u=\sigma(j)$ and $\exists k\le i$ $\sigma(k)=v$ such that $v\in N(u)$.
\end{definition}

\begin{invariant}\label{maininv}
$Eli(i)\cup \{\sigma(i)\}$ is a clique
\end{invariant}

Algorithm~\ref{alg:IUG} is a two-pass BFS recognition algorithm, where is Pass~2 \texttt{minXFS} is implemented as \texttt{minBFS}, but using \texttt{minDFS} instead would change nothing, thanks to the following key lemma:

\begin{lemma}\label{inv}
Let $G$ be a connected UIG. Invariant \ref{maininv} is maintained during execution of the \texttt{minXFS} staring at an anchor.
\end{lemma}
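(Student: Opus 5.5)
We fix a unit interval realizer of $G$ in which the anchor $x=\sigma(1)$ is the leftmost interval, and we let $\pi$ be the associated Left-Right Ordering (Theorem \ref{caracordres}). The strategy is to strengthen Invariant \ref{maininv} into a structural statement that can be carried by induction on the step $i$:

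\emph{($\star_i$) there is a Left-Right Ordering $\pi_i$ of $G$ in which the visited set $\{\sigma(1),\dots,\sigma(i)\}$ is a prefix and $\sigma(i)$ is the last vertex of this prefix.}

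Invariant \ref{maininv} follows from ($\star_i$). Write $p=\sigma(i)$. Any eligible vertex $c$ lies after $p$ in $\pi_i$ and is adjacent to some visited vertex $a\le_{\pi_i} p$. The Left-Right condition applied to $a<p<c$ gives $pc\in E$, so $Eli(i)\subseteq N(p)$, and these vertices all lie to the right of $p$. Now take two eligible vertices $c<c'$. Since $pc'\in E$, the Left-Right condition on $p<c<c'$ gives $cc'\in E$. Hence $Eli(i)\cup\{p\}$ is a clique. The same argument shows that $Eli(i)$ is exactly the block $e_1<\dots<e_k$ of $\pi_i$ that immediately follows the prefix. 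The base case $i=1$ holds because $x$ is leftmost in $\pi$.

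For the induction step, let $\sigma(i+1)=e_j$ be the vertex chosen by \texttt{minXFS}. We need a point of the shape of the algorithm here. In a BFS the chosen vertex is the head of the queue, and in a DFS it is a neighbour of the current vertex. In either case it is an eligible vertex, and by the invariant it is adjacent to everything eligible. Moreover, among the candidates offered together it was preferred by the min-degree tie-break. If $j=1$ we keep $\pi_{i+1}=\pi_i$. Otherwise we define $\pi_{i+1}$ by moving $e_j$ to the front of the block $e_1,\dots,e_k$, and we must check that this is still a Left-Right Ordering.

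Only triples in which $e_j$ jumped over some $e_\ell$ with $\ell<j$ can create a violation, and since $e_1,\dots,e_k$ form a clique, the only possible bad configurations are these two. Either some visited vertex is adjacent to $e_\ell$ but not to $e_j$, or some vertex beyond the block is adjacent to $e_\ell$ but not to $e_j$. Because $\pi_i$ is Left-Right, $N(e_\ell)$ and $N(e_j)$ are intervals of $\pi_i$ whose endpoints are monotone. Consequently the visited neighbours of $e_j$ are a subset of those of $e_\ell$, and the neighbours of $e_j$ to the right contain those of $e_\ell$. Thus the move is legal exactly when $e_\ell$ and $e_j$ have the same left neighbourhood in the visited prefix.

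The main obstacle is to extract this equality from the min-degree choice. The plan is a counting argument. If $e_\ell$ had a visited neighbour $a\notin N(e_j)$, then $e_\ell$ has a private neighbour on the left. Since $\deg(e_j)\le\deg(e_\ell)$, the vertex $e_j$ must then have a private neighbour $t$ on the right, and $t\notin N(e_\ell)$. The four vertices $a,e_\ell,e_j,t$ then induce a $P_4$ ordered as $a<e_\ell<e_j<t$ in the unit realizer. Such a $P_4$ is itself fine in a UIG, so we also use the anchor property: going back to the step at which $a$ was visited, or to the earliest visited vertex adjacent to $e_\ell$ but not to $e_j$, the minimality of degrees in earlier choices should force the same degree comparison one layer further left. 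Iterating this drives the argument back to the anchor $x$, whose neighbourhood is a clique of leftmost intervals, and that yields a contradiction. The case of ties, with equal degree and equal neighbourhoods, corresponds to true twins, and these can be permuted freely inside $\pi_i$.

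I expect this backward propagation of the degree inequality to be the delicate step. If it fails in full generality, the fallback is to restrict the comparison to candidates offered simultaneously, that is, neighbours of the same visited vertex. For those candidates the left neighbourhoods are nested inside a single clique, and this should suffice.
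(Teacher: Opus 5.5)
Your strengthened hypothesis $(\star_i)$ is a genuinely different route from the paper. The paper never maintains a global ordering: it assumes the invariant first fails at step $i+1$ and extracts a claw, a $C_4$ or a net from the offending pair. Your derivation of the invariant from $(\star_i)$ is correct, but the induction step fails as written, for three reasons.

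\emph{The legality criterion is wrong.} You reversed the right-hand bad configuration. After $e_j$ is moved in front of $e_\ell$ ($\ell<j$), a vertex $w$ beyond the block gives the triple $e_j<e_\ell<w$. That triple is violated when $e_jw\in E$ and $e_\ell w\notin E$. Since $\pi_i$ is Left-Right, the neighbours of $e_\ell$ beyond the block are \emph{contained} in those of $e_j$, so this is exactly the case that is not excluded. The move is therefore legal iff $N[e_j]=N[e_\ell]$ for all $\ell<j$, i.e.\ $e_j$ is a true twin of $e_1,\dots,e_{j-1}$. Equal left neighbourhoods are not enough.

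\emph{The counting step does not work.} From $\deg(e_j)\le\deg(e_\ell)$ and a private neighbour of $e_\ell$, nothing follows for $e_j$; the implication runs the other way. The backward propagation to the anchor is only hoped for.

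\emph{The min-degree rule alone cannot succeed, because the statement is false for \texttt{minDFS}.} Take the Left-Right ordering $v_1,\dots,v_6$ with edges $v_1v_2,v_1v_3,v_1v_4,v_2v_3,v_2v_4,v_3v_4,v_3v_5,v_4v_5,v_5v_6$. \texttt{minDFS} from the anchor $v_1$ visits $v_1,v_2,v_3$; the choice between the true twins $v_3,v_4$ is immaterial. It then visits $v_5$, since $\deg(v_5)=3<4=\deg(v_4)$. At that point the non-adjacent vertices $v_4,v_6$ are both eligible. So neither $(\star_i)$ nor the invariant survives \texttt{minDFS}. Your ``in either case'' treatment cannot work, and neither can your fallback to candidates offered by the same visited vertex, since in a DFS that is every candidate.

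What is missing for \texttt{minBFS} is the FIFO discipline on the left side, with degrees used on the right side. Your construction keeps the prefix of $\pi_i$ equal to $\sigma(1),\dots,\sigma(i)$ in visiting order.
\begin{itemize}
\item Suppose $e_j$ was enqueued by an earlier vertex than $e_\ell$. Then its parent $q$ satisfies $qe_j\in E$, $qe_\ell\notin E$ and $q<e_\ell<e_j$ in $\pi_i$, contradicting the Left-Right property. Hence both were enqueued by the same $q$, with $\deg(e_j)\le\deg(e_\ell)$.
\item The visited neighbours of each are then exactly the prefix vertices from $q$ on. Earlier ones are non-adjacent by the choice of $q$. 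Later ones are adjacent to both by Left-Right applied to $q<x<e_j$ and $q<x<e_\ell$.
\item So the visited neighbourhoods coincide. The nested right neighbourhoods then give $\deg(e_j)\ge\deg(e_\ell)$, hence equality and $N[e_j]=N[e_\ell]$.
\item Moving $e_j$ to the front is therefore a permutation of pairwise true twins, which preserves the Left-Right property.
\end{itemize}
Repaired this way, your approach proves more than the lemma: the \texttt{minBFS} order $\sigma=\pi_n$ is itself a Left-Right ordering, giving the forward direction of Theorem~\ref{BFSintervalpropre} directly. The paper's local argument has the advantage of exhibiting an obstruction when $G$ is not a UIG. Note that the paper's Case~2.2 also silently relies on the BFS queue when it asserts that $b$ and $v$ have the same visited neighbours.
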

\begin{proof}
According to Proposition~\ref{propAnchor}, the vertex with minimum degree of the last level of  BFS is an anchor. Clearly
\texttt{MaxBFS} ends at such a vertex. 
Therefore  $a=\sigma(1)$ is an anchor. Since an anchor is a leftmost vertex in some interval representation, it is  obviously simplicial. Therefore Invariant\ref{maininv} holds at step $i=1$. Let us suppose it holds at step $i$. Let $a=\sigma(i)$ and $b=\sigma(i+1)$ the vertices visited at step $i$ and $i+1$.  Notice that thanks to Invariant \ref{maininv} at step $i$, $a$ and $b$ are adjacent. Yes, a consequence of this invariant is the \texttt{minXFS}  tree of an UIG, especially  \texttt{minDFS} and  \texttt{minBFS} trees, are Hamiltonian paths. 
Suppose that  Invariant\ref{maininv} does not hold at step $i+1$. Then $Eli(i+1)\cup \{b\}$ is not a clique : it contains two nonadjacent vertices $u$ and $v$. Since the invariant holds at step $i$, at most one of them is in $Eli(i)$.

Case 1: both $u$ and $v$ belong to  $Eli(i+1)\setminus Eli(i)$. Then $\{b,a,u,v\}$ is a claw (with universal vertex $b$) and therefore $G$ is not UIG.

Case 2: one of them, say $u$ w.l.o.g, belong to  $Eli(i+1)\setminus Eli(i)$, while $v\in   Eli(i+1)\cap Eli(i)$. Since both $b$ and $v$ where in $ Eli(i)$, how could it be possible that  $b$ and not $v$ is visited at Step $i+1$?  There are 2 cases:

Case 2.1: there exists $j<i$ such that $b\in Eli(j)$ while  $v\notin Eli(j)$. We take the smallest $j$ is many exist. Since  \texttt{minXFS} is a traversal, then is must visit $b$ before $v$. Let $p=\sigma(j)$. Then $\{b,p,u,v\}$ is a claw with universal vertex $b$, and therefore $G$ is not UIG

Case 2.2: otherwise, $b$ and $u$ have exactly the same visited neighbors. Thanks to the tie-break rule of  \texttt{minXFS}, since $b$ and not $v$ was elected at step $i+1$, we have $degree(b)\le degree(v)$. But $b$ has a private unvisited neighbor $u$. Therefore $v$ must have a private neighbor $w$, that is ineligible at step $i$ (since $b$ and $v$ have the same visited neighborhood and since Invariant holds at step $i$, forbidding $a$ and $w$ to be adjacent).

Case 2.2.1 Either $u$ and $w$ are adjacent, and $\{b,u,w,v\}$ form a $C_4$, 

Case 2.2.2 Or $u$ and $w$ are not adjacent. Let $A$ be the set of the neighbors of both $b$ and $v$ visited before $b$. It is nonempty since $a\in A$. For all $x\in A$  $\{x,b,u,w,v\}$ form a Bull. The nose $x$ of this bull must be placed in the middle of these five vertices in any realizer : $x$ can not be an anchor. Let $p$ be the leftmost (smallest according to $\sigma$) vertex of $A$.  Since $p$ is not a anchor it has a predecessor $q$, visited just before it and adjacent, and not adjacent to $b$ and $v$ (and thus nor to $u$ nor to $w$).    $\{p,q,b,u,w,v\}$ is a Net and thus $G$ is not UIG. \qed 
\end{proof}

\begin{lemma}\label{indifference}
$G$ is an Unit Interval Graph iff $\sigma$ is a Indifference Ordering.
\end{lemma}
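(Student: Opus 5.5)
The reverse direction is immediate. If $\sigma$ is an Indifference Ordering, then by Theorem~\ref{caracordres} (iii)$\Rightarrow$(0) the graph $G$ is a UIG. All the work lies in the forward direction: assuming $G$ is a connected UIG, I will show that the ordering $\sigma$ returned by Algorithm~\ref{alg:IUG} is an Indifference Ordering.

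\textbf{Step 1: reduce to the chordal and co-chordal patterns.} By Lemma~\ref{inv}, Invariant~\ref{maininv} holds throughout Pass~2, because the start vertex $r$ is an anchor (Proposition~\ref{propAnchor}). I will show that $\sigma$ avoids both the chordal and the co-chordal patterns of Figure~\ref{3patterns}. Any $P_3$ $a-b-c$ whose centre $b$ is not in the middle of the ordering yields one of these two patterns:
\begin{itemize}
\item if $b$ comes last, we get $a<c<b$ with $ab, cb\in E$ and $ac\notin E$, which is the chordal pattern;
\item if $b$ comes first, we get $b<a<c$ with $ba, bc\in E$ and $ac\notin E$, which is the co-chordal pattern.
\end{itemize}
So avoiding these two patterns is exactly the Indifference condition.

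\textbf{Step 2: the chordal pattern.} Take $x<_\sigma y<_\sigma z$ with $xz, yz\in E$. At the step $i=\sigma^{-1}(y)$, the vertex $z$ is unvisited but adjacent to the visited vertex $x$, so $z\in Eli(i)$. By Invariant~\ref{maininv}, $Eli(i)\cup\{y\}$ is a clique, hence $yz\in E$ trivially. That is not what we need, so I use instead the step $j=\sigma^{-1}(x)$ together with the fact that $y$ is visited before $z$. The argument splits by when $y$ becomes eligible:
\begin{itemize}
\item If $y\in Eli(j)$, then $y$ and $z$ both lie in the clique $Eli(j)\cup\{x\}$ for the relevant steps, so $xy\in E$.
\item If $y$ becomes eligible later than $j$, then at the step $k$ when $y$ enters $Eli(k)$, the vertex $z$ is already eligible, since it was adjacent to $x$ visited earlier. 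Hence $y,z\in Eli(k)$, and the clique property gives $yz\in E$, which we already know. To finish, I trace the Hamiltonian path: the \texttt{minXFS} tree is a path, so consecutive vertices are adjacent. Walking from $x$ to $y$ along $\sigma$ with $z$ remaining eligible, each intermediate vertex lies in a clique together with $z$. I then argue that $x$ is still eligible-adjacent at each step in order to derive $xy\in E$.
\end{itemize}
I expect this chain argument to need care, and it is the likely main obstacle.

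\textbf{Step 3: the co-chordal pattern.} Take $x<_\sigma y<_\sigma z$ with $xy, xz\in E$. At the step when $x$ is visited, both $y$ and $z$ become eligible, or already are. Then at the step $\sigma^{-1}(x)$, both $y$ and $z$ lie in $Eli(\sigma^{-1}(x))$, which is a clique by Invariant~\ref{maininv}. Hence $yz\in E$, and the co-chordal pattern is avoided directly.

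\textbf{Alternative route for Step 2.} If the direct argument stalls, I will instead show that $\sigma$ is a BFS ordering that avoids the co-chordal pattern, and then invoke the four-point condition of Proposition~\ref{fourpoint}. Suppose $x<y<z$, $xz\in E$, $xy\notin E$, and $yz\in E$. Proposition~\ref{fourpoint} gives a vertex $d<x$ with $dy\in E$. Consider the step at which $d$ is visited: $y$ becomes eligible then, and so does $z$ if $dz\in E$. Applying the invariant at that step, or at the first step where both $y$ and $z$ are eligible, should force $x$ and $y$ into a common eligible clique, giving $xy\in E$ and a contradiction. This minimality choice is the delicate point I would work out first.
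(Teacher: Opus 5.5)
Your reverse direction is fine. Step~1 correctly reduces the Indifference condition to the chordal and co-chordal patterns. Step~3 is exactly the paper's Case~1.

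\textbf{The gap: Step~2, sub-case where $y$ becomes eligible only after $x$ is visited.} Your chain argument there uses nothing but Invariant~\ref{maininv} and the fact that the search tree is a path. These two facts cannot give $xy\in E$.

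Take the diamond on $u,w,c,v$ with every edge except $uv$. It is a UIG with anchor $u$ (e.g.\ $u=[0,1]$, $w=[0.9,1.9]$, $c=[0.95,1.95]$, $v=[1.5,2.5]$). Run a \texttt{minDFS} from $u$ that breaks the degree tie between $w$ and $c$ in favour of $w$. It then goes to $v$ (degree $2<3$) and finally to $c$, so $\sigma=u,w,v,c$.
\begin{itemize}
\item The invariant holds at every step: $\{u,w,c\}$, $\{w,c,v\}$, $\{v,c\}$ are cliques.
\item The tree is a path.
\item Yet $u<_\sigma v<_\sigma c$ with $uc,vc\in E$ and $uv\notin E$ is the chordal pattern.
\end{itemize}
So your sub-case genuinely occurs for searches that satisfy the invariant. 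It cannot be settled by walking along the path (and $x$, once visited, is never eligible again). It has to be ruled out by a property specific to BFS.

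\textbf{The alternative route.} This brings in the right ingredient, Proposition~\ref{fourpoint}, which is exactly what the paper uses in its Case~2. However, you apply the invariant at the step where $d$ is visited and leave open why $x$ would be eligible then. That needs an extra BFS fact: the first visited neighbour of $x$ precedes that of $y$. The paper uses this same step and justifies it only by ``since it is visited before $v$''.

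\textbf{The fix, using your own first bullet.} Assume $xy\notin E$. Proposition~\ref{fourpoint} gives $d<_\sigma x$ with $dy\in E$. So at step $j=\sigma^{-1}(x)$, the vertex $y$ is unvisited and adjacent to the visited $d$, i.e.\ $y\in Eli(j)$. Hence $x$ and $y$ lie in the clique $Eli(j)\cup\{x\}$, so $xy\in E$, a contradiction.

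So for BFS your second sub-case never happens. With this one line, Step~2 is complete and your proof coincides with the paper's.
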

\begin{proof}
Thanks to Theorem~\ref{caracordres}, if $G$ is not an UIG then it has no Indifference Ordering. Conversely let us suppose $G$ is an UIG.
Let us suppose the ordering $\sigma$ is not Indifference. Then there exists three vertices $,c,u,v$ inducing a $P_3$, with center $c$ adjacent to the aisles $u$ and $v$, and 

Case 1: $c$ occurs before $u$ and $v$ in $\sigma$. At step $i$ such that $c=\sigma(i)$, Invariant~\ref{maininv} is violated since $u$ and $v$ are nonadjacent. So this case is impossible.

Case 2: $c$ occurs after $u$ and $v$ in $\sigma$. 
W.l.o.g assume $u <_\sigma v<_\sigma c$ 
Then according to Proposition~\ref{fourpoint},  there exists $p<_\sigma u$ and $pv\in E(G)$. At the step $j<i$ where $p$ was visited, i.e. $p=\sigma(j)$, both $v$ and $u$ (since it is visited before $v$) were eligible but $Eli(j)\cup \{p\}$ was not a clique, contradicting Invariant~\ref{maininv}. \qed
\end{proof}

\begin{theorem}\label{BFSintervalpropre}
$G$ is an Unit Interval Graph iff $\sigma$ is a Left-Right Ordering.
\end{theorem}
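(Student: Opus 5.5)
The easy direction comes from Theorem~\ref{caracordres}. If $\sigma$ is a Left-Right Ordering, then (ii) holds, so $G$ is a UIG.

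For the converse, assume $G$ is a connected UIG. By Lemma~\ref{indifference}, $\sigma$ is an Indifference Ordering. By Corollary~\ref{cestdejacela}, it then suffices to show that $\sigma$ is a cocomp ordering, i.e.\ that it avoids the third pattern of Figure~\ref{3patterns}. So we must rule out a triple $a<_\sigma b<_\sigma c$ with $ac\in E$ and $ab,bc\notin E$. This is the only new ingredient, and we will get it from Invariant~\ref{maininv} (Lemma~\ref{inv}) together with the fact that $\sigma$ is a graph search ordering.

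Suppose such a triple exists. Since $\sigma$ is a BFS ordering and $ac\in E$, $ab\notin E$, Proposition~\ref{fourpoint} gives a vertex $d<_\sigma a$ with $db\in E$. Choose the step $j$ carefully. The natural choice is $j$ the index of $a$, i.e.\ $\sigma(j)=a$. At that step $c$ is eligible, being an unvisited neighbour of $a$. We would like $b$ to be eligible too, and $b$ is unvisited because $a<_\sigma b$. Moreover $b$ is eligible at step $j$ exactly when some neighbour of $b$ is visited by step $j$, and $d$ is such a neighbour. Hence $b,c\in Eli(j)$, while $bc\notin E$. Then $Eli(j)\cup\{\sigma(j)\}$ is not a clique, contradicting Invariant~\ref{maininv}.

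The step to double-check is whether the definition of $Eli(i)$ really counts a vertex as eligible as soon as any visited neighbour exists; the definition given before Invariant~\ref{maininv} says so. Also, Lemma~\ref{inv} requires $\sigma(1)$ to be an anchor, which holds by Proposition~\ref{propAnchor}, as argued in the proof of Lemma~\ref{inv}.

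If $G$ is disconnected, I would apply the argument per component. Algorithm~\ref{alg:IUG} assumes connectivity anyway. A triple spanning two components cannot form the pattern, since it needs the edge $ac$. If the triple lies within one component with $b$ from another, BFS traverses components consecutively, so this cannot happen either.
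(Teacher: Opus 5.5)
Your proof is correct and follows the paper's route: Lemma~\ref{indifference} gives the Indifference property, and Invariant~\ref{maininv} rules out the cocomparability pattern; your use of Proposition~\ref{fourpoint} (a vertex $d<_\sigma a$ adjacent to $b$ makes $b$ eligible at the step where $a$ is visited) just spells out what the paper calls clear. Your closing paragraph on disconnected graphs is unnecessary, since Algorithm~\ref{alg:IUG} assumes $G$ connected, and as stated it would not go through: a search restarted on a new component need not start at an anchor, so Lemma~\ref{inv} cannot be applied per component.
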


\begin{proof}
Lemma \ref{indifference} shows that $\sigma$ is an Indifference Ordering. But using Invariant 1, proved in Lemma \ref{inv} clearly the third pattern of cocomparability is also avoided by $\sigma$. \qed
\end{proof}
\begin{corollary}
Algorithm \ref{alg:IUG} is a linear time certifying recognition algorithm.
\end{corollary}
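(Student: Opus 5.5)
The corollary asks for three things: linear running time, correctness of the answer, and a certificate that can be checked independently in linear time. I will treat them in that order and rely almost entirely on Theorem~\ref{BFSintervalpropre}.

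\emph{Running time.} Algorithm~\ref{alg:IUG} first sorts the vertices by degree. This is a bucket sort on integers in $\{0,\dots,n-1\}$, so it costs $O(n)$. Next, the adjacency lists are sorted consistently with $\tau$. As explained in the introduction, we scan the vertices in $\tau$ order and append each vertex $u$ to the adjacency list of every neighbour of $u$, which costs $O(n+m)$. The two BFS passes are standard and each runs in $O(n+m)$. The reversed ordering $\theta=\tau^d$ can be obtained either by re-bucketing in $O(n+m)$ or, as the remark in the algorithm suggests, by traversing the same doubly linked adjacency lists backwards. So the whole algorithm runs in $O(n+m)$.

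\emph{Correctness and certificate.} If $G$ is a UIG, the first pass ends at a vertex $r$. By Proposition~\ref{propAnchor} and Lemma~\ref{inv}, $r$ is an anchor, and Theorem~\ref{BFSintervalpropre} then says that $\sigma$ is a Left-Right Ordering. This ordering is the positive certificate. To verify it, it suffices to check that $\sigma$ avoids the three patterns of Figure~\ref{3patterns}, which is exactly the definition of a Left-Right Ordering. By Theorem~\ref{caracordres}, an ordering passing this check proves that $G$ is a UIG.

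The check can be done in linear time in either of two ways. One option is the generic pattern-checking procedure of \cite{DFHP21}, which applies to patterns on three vertices. The other is a direct test. A Left-Right Ordering is precisely an ordering in which, for every vertex $v$, the later neighbours of $v$ are exactly the vertices occupying positions $\sigma(v)+1,\dots,\sigma(v)+k_v$ for some $k_v$. To test this, compute for each $v$ the largest position among its neighbours and compare it with $\sigma(v)$ plus the number of later neighbours of $v$. The total work is $O(n+m)$.

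If the check fails, then $\sigma$ is not a Left-Right Ordering, and Theorem~\ref{BFSintervalpropre} gives that $G$ is not a UIG.

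\emph{Main obstacle: the negative certificate.} A failed check yields a triple $a<_\sigma b<_\sigma c$ realizing one of the three patterns. I would convert this triple into a forbidden induced subgraph by following the case analysis in the proofs of Lemmas~\ref{inv} and~\ref{indifference}.

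Each case there produces an explicit structure: a claw $\{b,p,u,v\}$, a $C_4$, or a net $\{p,q,b,u,w,v\}$. In the last case, if the bull $\{x,b,u,w,v\}$ cannot be extended to a net, it still witnesses that the chosen start vertex was not an anchor. Each witness vertex is found by a constant number of neighbourhood scans, for example the private neighbour $w$, or the predecessor $q$ of $p$. Each scan can be done by marking, as in Algorithm~\ref{negCertif}, so the cost is $O(n+m)$.

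If this extraction turns out to be too delicate, I would fall back on a weaker form of certification. Recall that the check reduces to the chordal and co-chordal patterns (pattern (iv) of Theorem~\ref{caracordres}). A violation of either pattern can be turned into a chordless cycle or a claw, 3-sun or net via the known linear-time certifying algorithms for chordality \cite{FeuilloleyH21}. That still yields a linear-time certifying algorithm.
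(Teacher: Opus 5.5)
Your running-time analysis and positive certificate agree with the paper. The paper checks bisimpliciality of $\sigma$ with \cite{DFHP21}. Checking all three patterns of Figure~\ref{3patterns} is equivalent here by Lemma~\ref{2=3}, since $G$ is connected.

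The genuine gap is the negative certificate. The case analysis of Lemma~\ref{inv} does not start from a pattern triple. It is set up at the \emph{first} step $i+1$ at which Invariant~\ref{maininv} fails, and it uses that the invariant still holds at step $i$: this is how it gets that at most one of $u,v$ lies in $Eli(i)$, and that $a$ and $w$ are non-adjacent. A triple $a<_\sigma b<_\sigma c$ read off the final ordering only yields \emph{some} step at which the invariant fails, as in the proof of Lemma~\ref{indifference}. That step need not be the first one, and you never locate the first one. The paper instead extracts the obstruction exactly where the invariant first breaks.

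Moreover, in Case~2.2.2, when the leftmost nose $p$ is the start vertex $\sigma(1)$ (so no predecessor $q$ exists), you output only the bull and call it a witness that the start vertex was not an anchor. The bull is itself a unit interval graph, so this certifies nothing about $G$. The paper closes that subcase with a 3-sun, and otherwise with the net formed by the bull and the vertex visited before the nose.

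Neither of your secondary devices repairs this.

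\begin{itemize}
\item The fallback fails. A violation of the chordal pattern by the particular ordering $\sigma$ does not mean $G$ is non-chordal. Also, linear-time certifying chordality algorithms only ever return chordless cycles. So a chordal graph that is not a UIG, such as the claw, receives no certificate at all.
\item Your ``direct test'' is incorrect as well. Requiring that the later neighbours of each $v$ fill the positions immediately after $v$ excludes the chordal and cocomparability patterns, but not the co-chordal one. On the claw with centre $x$ and leaves $y_1,y_2,y_3$, Algorithm~\ref{alg:IUG} returns an ordering of the form $y_3,x,y_1,y_2$ (up to renaming leaves). This ordering passes your test even though $xy_1,xy_2\in E$ and $y_1y_2\notin E$. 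You would also have to require that the earlier neighbours of each $v$ fill the positions immediately before $v$, i.e.\ run the same test on $\sigma^d$.
\end{itemize}
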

\begin{proof}
In case of success, it takes linear time to check if the ordering $\sigma$  is bisimplicial as can be seen in \cite{DFHP21}.
As noticed in the proof of Invariant \ref{maininv}, in case of failure, $i.e$ if $Eli(i)\cup \{\sigma(i)\}$ is not a clique, an obstruction may be found.
More precisely, in Case~1 and in Case~2.1 of the proof of Lemma~\ref{inv}, a Claw may be exhibited. In Case~2.2.1, a $C_4$. In case 2.2.2, we have a Bull whose nose is $x$.  Either the nose of the bull is the first vertex, and a 3-Sun may be exhibited since a vertex universal to all vertices but $x$ exists,  or not, and then the vertex $p$ before $x$, together with the Bull, form a Net. These four obstruction  can  be obtained in linear time.

\end{proof}
\begin{corollary}
A \texttt{minBFS} ordering obtained on an UIG is a simplicial elimination ordering.
\end{corollary}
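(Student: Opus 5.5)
The plan is to fix a realizer of the UIG $G$ and let $\pi$ be the associated Left-Right Ordering (Theorem~\ref{caracordres}). Let $\sigma$ be the \texttt{minBFS} ordering from an arbitrary start $s$, with BFS layers $L_0=\{s\},L_1,\dots$. The goal is to show that for every $v$ the neighbours of $v$ that precede it in $\sigma$ form a clique. We may assume $G$ is connected, since BFS handles components one at a time. I would argue by contradiction. Take $v\in L_k$ and two nonadjacent neighbours $x,y$ of $v$ with $x,y<_\sigma v$. Since the three vertices induce a $P_3$ centred at $v$, $\pi$ being an Indifference Ordering gives w.l.o.g.\ $x<_\pi v<_\pi y$.

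The first tool is a monotonicity fact for Left-Right Orderings: on the vertices to the right of $s$ in $\pi$, $dist(s,\cdot)$ is non-decreasing, and symmetrically on the left. This holds because a shortest path to a vertex $w$ right of $s$ can be rerouted, using the Left-Right property, so that it stays left of $w$. The second tool is a layer-order statement, proved by induction on $k$. Within each layer, on each side of $s$, $\sigma$ visits the vertices in $\pi$-order moving away from $s$, and the BFS parent of a vertex is its $\pi$-extreme neighbour on the side of $s$ in the previous layer. The induction step uses Proposition~\ref{fourpoint}: if two vertices $w<_\pi w'$ right of $s$ lie in the same layer, then every previous-layer neighbour of $w'$ that lies between $s$ and $w$ is also adjacent to $w$. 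Hence $w$ becomes eligible no later than $w'$.

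Now take the case $s\le_\pi x$, so the whole triple lies on one side of $s$. By monotonicity $y\in L_k$ and $x\in L_{k-1}\cup L_k$. Because $y<_\sigma v$, the parent $p$ of $y$ satisfies $p\le_\sigma parent(v)$. Next I locate $p$. Monotonicity puts $p<_\pi v$, since otherwise $dist(s,p)\ge k$, and the Left-Right property then makes $p$ adjacent to $v$. If $p<_\pi x$, the Left-Right property would force $xy\in E$, so $x<_\pi p$. The layer-order statement now says the previous-layer neighbours of $v$ are reached in $\pi$-order. This gives $parent(v)\le_\sigma x$ when $x\in L_{k-1}$, and in the other case an earlier-visited common neighbour. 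In either case $parent(v)<_\sigma p$, a contradiction. The case $y\le_\pi s$ is the mirror image.

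The remaining case is $x<_\pi s<_\pi y$, with $s$ adjacent to $v$, so $k=1$. This is where I expect the main obstacle. Here $L_1=N(s)$ splits into a left clique and a right clique, and $v$ meets both. Because $\sigma$ is a BFS, every vertex of $L_1$ has the same parent $s$. So the order inside $L_1$ is decided purely by the \texttt{minBFS} degree tie-break, and the earlier layer-order argument does not apply. I would try first to show that a vertex of $N(s)$ adjacent to vertices on both sides of $s$ has degree at least that of any vertex of $N(s)$ lying beyond it on one side. Then the tie-break visits $v$ before $x$ or before $y$. The intended route is the private-neighbour exchange of Lemma~\ref{nottwin}, arguing as in Case~2.2 of Lemma~\ref{inv}: if $\deg(x)\ge\deg(v)$, then $x$ has a private neighbour, and that neighbour together with $x,v,y,s$ produces a claw, net or 3-sun, contradicting Theorem~\ref{basic}. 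If this degree comparison fails in general, I would fall back to checking whether the corollary needs $s$ to be an anchor. That is exactly the case covered by Invariant~\ref{maininv} and Lemma~\ref{indifference}, where the Indifference property of $\sigma$ immediately rules out $x,y<_\sigma v$.
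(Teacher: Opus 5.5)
\paragraph{Main gap: the statement is false for an arbitrary start.} Your main line tries to prove the statement for an arbitrary start vertex $s$, and in that generality it is false, so it cannot be completed. Take the diamond: $s,v$ adjacent to each other and to $x,y$, with $xy\notin E$. It is a UIG, realized by $x=[0,1]$, $s=v=[0.8,1.8]$, $y=[1.5,2.5]$. \texttt{minBFS} from $s$ enqueues $x,y$ (degree $2$) before $v$ (degree $3$), so $\sigma=s,x,y,v$, and $v$ has the nonadjacent earlier neighbours $x,y$.

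This is exactly your straddling case. It also shows that the degree comparison you aim for works against you. The middle vertex has $N[v]\supseteq N[x]\cup N[y]$, and \texttt{minBFS} favours small degree, so $v$ comes last; an inequality of the form $\deg(v)\ge\deg(x)$ pushes $v$ after $x$, not before.

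\paragraph{The one-sided case also fails.} Your one-sided case is not sound for a non-anchor $s$ either. The base case of your layer-order induction fails because all of $L_1=N(s)$ has parent $s$ and is ordered by degree alone. A vertex close to $s$ can have more neighbours, lying on the far side of $s$, than a vertex farther out. Your induction step also ignores this same-parent tie-break. For example, take indifference values $a=-0.4$, $b=-0.3$, $s=0$, $w=0.5$, $w'=0.95$, $v=1.45$, $y=1.6$, with adjacency iff the values differ by at most $1$. Then \texttt{minBFS} from $s$ gives $s,a,b,w',w,y,v$. So $w'$ precedes $w$ in $L_1$, the parent of $v$ is $w'$, and $v$ has the nonadjacent earlier neighbours $w,y$, even though $s<_\pi w<_\pi v<_\pi y$.

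\paragraph{The anchor hypothesis, which your fallback uses.} Starting at an anchor is therefore essential. In the paper the corollary is about the ordering $\sigma$ of Algorithm~\ref{alg:IUG}, whose start $r$ is an anchor by Proposition~\ref{propAnchor}. Under that hypothesis your fallback is exactly the paper's (implicit) proof. Lemma~\ref{inv} and Lemma~\ref{indifference} make $\sigma$ an Indifference Ordering (indeed a Left-Right Ordering, Theorem~\ref{BFSintervalpropre}). An Indifference Ordering excludes the chordal pattern, i.e.\ the earlier neighbours of every vertex form a clique.

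Your one-sided machinery also becomes valid once $s$ is an anchor. Then every vertex lies on one side of $s$. For $w<_\pi w'$ in $N(s)$ one gets $N[w]\subseteq N[w']$, and the same nesting holds for same-parent vertices in deeper layers. Ties occur only between true twins, which you can absorb by choosing a realizer in which twins have identical intervals. Your induction then shows that $\sigma$ coincides with $\pi$ up to true twins. That is an independent, more hands-on proof of a stronger fact, whereas the paper gets the corollary in one line from the Invariant. As written, however, your proposal commits to the arbitrary-start version, and the anchor case appears only as a contingency.
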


Of course this statement is false for usual chordal graphs.

\subsection{Ordering Uniqueness}\label{unicity}

A \textbf{prime} UIG is a connected UIG without true twins, but the converse is false: think of a $P_4$ with an extra universal vertex. It is connected and has a 
non-trivial module the $P_4$. But modules of interval graphs have some special features.
In this section we will use 2 trees representations of graphs whose leaves are the vertices of the graph, namely the Modular Decomposition Tree (MDT for short) with 3 types of nodes Series, Parallel and Prime. We will use also the well known PQ-tree as defined in \cite{BoothL76} for interval graphs which represents all their intervals representations.One can find a complete analysis of the relationship between these two trees in \cite{Cresp07} for interval graphs.

\begin{theorem} \cite{HSU92}\label{Hsu}
Let  $M$ be a module of a connected  $K_{2,2}$-free graph $G$ then either $M$ is a clique or the neighbourhood of $M$ in $G$ is a clique.
\end{theorem}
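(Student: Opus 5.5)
Write $N(M)$ for the set of vertices outside $M$ that are adjacent to some vertex of $M$. Since $M$ is a module, every vertex of $N(M)$ is adjacent to \emph{all} of $M$. The plan is to argue by contradiction: assume $M$ is not a clique and $N(M)$ is not a clique, and build an induced $K_{2,2}$ (i.e. an induced $C_4$).

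\textbf{Step 1 (setup).} If $M$ is not a clique, it contains two non-adjacent vertices $x,y\in M$. If $N(M)$ is not a clique, it contains two non-adjacent vertices $u,v\in N(M)$.

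\textbf{Step 2 (the obstruction).} By the module property, $u$ and $v$ are each adjacent to both $x$ and $y$. Hence $\{x,y,u,v\}$ has exactly the four edges $xu, xv, yu, yv$, with non-edges $xy$ and $uv$. This is an induced $K_{2,2}=C_4$, contradicting the hypothesis. So either $M$ is a clique or $N(M)$ is a clique.

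\textbf{Step 3 (role of connectivity).} Connectivity of $G$ is not needed for this direct argument. At most, it guarantees $N(M)\neq\emptyset$ when $M\neq V$, which makes the statement non-vacuous; I would note this in passing.

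\textbf{Main obstacle.} The only point needing care is whether ``neighbourhood of $M$'' means $N(M)\setminus M$; I take it in that sense. With that reading the proof is immediate, and I do not expect a genuine obstacle.
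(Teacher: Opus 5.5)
Your proof is correct and complete. Two non-adjacent vertices of $M$, together with two non-adjacent vertices of $N(M)$, induce a $C_4=K_{2,2}$: by the module property each of the latter is adjacent to both of the former, and the four vertices are distinct because one pair lies in $M$ and the other outside it.

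The paper does not prove this statement itself; it only cites Hsu, and your argument is the standard one. You are also right that connectivity plays no role in the dichotomy. In the paper it is only used afterwards, in Lemma~\ref{chain-graph}, to guarantee that the neighbourhood of a non-trivial module is non-empty, so that the clique $C$ adjacent to $M$ actually exists.
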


When applied on chordal graph the second case of this theorem says that if $M$ is reduced to a single vertex $m$ in $G$, then $m$ is simplicial.
But we can prove a little more for UIG.

\begin{definition}
A \emph{balanced complement of a bipartite chain graph} is a graph in which the vertices can be partitioned into 2 cliques $A$ and $B$ of the same size with no 2 independent edges between  them.

\end{definition}

See Figure \ref{2-chainG} for an example of bipartite chain graph and its complement.

\begin{lemma}\label{chain-graph}
Let $G$ be a connected IUG without twins, then every non trivial prime module $M$ of $G$, $G[M]$ is a balanced complement of a bipartite chain graph. Furthermore $G$ is reduced to  $M$ + an universal vertex and it has a unique representation (up to mirror).
\end{lemma}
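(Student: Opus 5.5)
The plan is to combine Theorem~\ref{Hsu} with claw-freeness (Theorem~\ref{basic}) and a unit interval realizer. Let $M$ be a non trivial module of $G$, so $2\le |M|<|V|$ and $G[M]$ is prime.

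\textbf{Step 1: $N(M)$ is a clique.} A UIG is chordal, hence $K_{2,2}$-free ($C_4$-free), so Theorem~\ref{Hsu} applies to $M$. If $M$ were a clique, any two of its vertices would have the same neighbourhood outside $M$ and be adjacent, i.e.\ they would be true twins, which is excluded. Hence $K=N(M)$ is a clique, and it is nonempty because $G$ is connected and $M\neq V$.

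\textbf{Step 2: $M$ splits into two cliques $A,B$.} Fix $k\in K$; it is adjacent to every vertex of $M$. Take a unit interval realizer and let $Interval(k)=[x,x+1]$. Every interval of $M$ meets $[x,x+1]$. Those with left end $\le x$ contain the point $x$, and the others contain $x+1$. This gives a partition $M=A\cup B$ into two cliques. (Alternatively, $\alpha(G[M])\le 2$, since otherwise $k$ and three independent vertices of $M$ form a claw.) Because $G[M]$ is chordal and $A,B$ are cliques, there are no two independent edges between $A$ and $B$: two such edges would close an induced $C_4$. So the $A$--$B$ edges form a bipartite chain graph, and $G[M]$ is the complement of a bipartite chain graph in the sense of Section~\ref{aperitif}.

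\textbf{Step 3: $G=M\cup\{u\}$ with $u$ universal.} Suppose some vertex $w\notin M\cup K$ exists. By connectivity we may choose $w$ adjacent to some $k\in K$, and $w$ has no neighbour in $M$. Since $G[M]$ is not a clique, there are nonadjacent $a,b\in M$. Then $\{k,w,a,b\}$ is a claw centred at $k$, which is impossible. Hence $V=M\cup K$. Every vertex of $K$ is adjacent to all of $V\setminus\{itself\}$, since $K$ is a clique and $K$ is complete to $M$. So any two vertices of $K$ would be true twins, which forces $|K|=1$, and $K=\{u\}$ with $u$ universal.

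\textbf{Step 4: balance and uniqueness.} Twin-freeness must give strict nestings $N(a_1)\cap B\supsetneq \dots\supsetneq N(a_{|A|})\cap B$, and likewise on the $B$ side. If two vertices of $A$ had the same neighbourhood in $B$, they would be true twins in $G$ (same clique $A$, same neighbour $u$).
\begin{itemize}
\item For balance, I would use these strict chains together with the interval picture. Sort $A\cup B$ by left endpoints. Each time the neighbourhood chain drops, an endpoint of $A$ and an endpoint of $B$ must interleave, which should force $A$ and $B$ to alternate.
\item The step I expect to be hardest is turning this alternation into $|A|=|B|$. I would first try to pair $a_i$ with the vertex of $B$ that enters $N(a_i)$ last. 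Possibly the partition has to be rechosen, e.g.\ by moving an extreme vertex from one side to the other, to absorb an off-by-one imbalance.
\item For uniqueness up to mirror, the same strict chains fix the relative order of all left endpoints. By Theorem~\ref{caracordres}, any Left-Right Ordering must then list $A$ in the chain order and interleave $B$ in the only compatible way, with $u$ inserted in the middle. This determines the ordering up to reversal.
\end{itemize}
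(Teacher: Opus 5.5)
\paragraph{Steps 1--3 are fine; the gap is balance.} Your Steps 1--3 are correct and follow the paper's route. Theorem~\ref{Hsu} makes $N(M)$ a clique, a claw forces $V=M\cup N(M)$, and twin-freeness collapses $N(M)$ to a single universal vertex $u$. The interval of $u$ then splits $M$ into two cliques; your split by left endpoints is a cleaner version of the paper's lefties/righties.

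The genuine gap is the balance $|A|=|B|$ in Step 4. You leave it as a hedged plan and even consider re-choosing the partition. The missing observation is that no vertex of $M$ can be complete to the opposite side. If $a\in A$ were adjacent to all of $B$, then $N[a]=V=N[u]$, so $a$ and $u$ would be true twins.

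With this, the strictly nested sets $N(a)\cap B$, $a\in A$, which you already have, are pairwise distinct \emph{proper} subsets of $B$. Their sizes are therefore distinct values in $\{0,\dots,|B|-1\}$, so $|A|\le|B|$, and symmetrically $|B|\le|A|$. Without this fact the chain only gives $|A|\le|B|+1$, which is exactly the off-by-one you were worried about. With it, no repartitioning is needed.

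The paper expresses the same idea as an injection $f\colon L\to R$. Here $f(x)$ is the nearest $R$-interval lying to the right of $I(x)$ and missing it. It exists because $x$ is not universal, and $f(x)=f(x')$ would make $x,x'$ true twins. Your alternation idea also works once this fact is added. Inside $Interval(u)$, the right endpoints of $A$ and the left endpoints of $B$ must strictly alternate, otherwise two vertices would be twins. The sequence must also begin with an $A$-endpoint and end with a $B$-endpoint, otherwise some vertex of $M$ is universal. Hence the counts are equal.

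\paragraph{Uniqueness is only sketched.} In your realizer all of $A$ precedes $u$, which precedes all of $B$, so nothing is interleaved in the left-endpoint order. To get uniqueness up to mirror you would need three facts:
\begin{itemize}
\item the bipartition $\{A,B\}$ is forced, which holds because $\overline{G[M]}$ is a connected bipartite graph when $M$ is prime;
\item $u$ must lie between the two sides;
\item the strict chains fix the order inside each side.
\end{itemize}
The paper itself only asserts this part, appealing to the uniqueness of the representation of the prime UIG $G[M]$ and to the forced position of $u$ between $L$ and $R$. So this is a lesser issue than the balance step.
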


\begin{proof}
Notice that as $M$ is prime, it is well-known that $G[M]$ and $\overline{G[M]}$ are connected.
Since $M$ is non trivial and $G$ connected, using Theorem \ref{Hsu}, $M$ is adjacent to a clique $C$. Suppose that some vertex $v \in C$ has a neighbour outside $M$ and $C$ then we can find a $K_{1,3}$.

Therefore $C$ is a set of twins and using our hypothesis it must be  reduced to a single vertex $u$ universal to $M$. So $V(G)=M \cup \{u\}$.

Since $G$ is an UIG, let us consider a unit interval representation of $G$.
Let us fix $I(u)=[l_u, r_u]$.  For a given $m \in M$, $I(m)=[l_m, r_m]$
at least one of $l_m$ and $r_m$ belong to $I(u)$ because $u$ is universal.
But it is not possible that both $l_m$ and $r_m$ belong to $I(u)$ because of the unit length. So we can partition the vertices of $M$ into the lefties ($L$) and the righties  ($R$) with respect to $I(u)$.
Since there is at least one non-edge in $G[M]$, then $L \neq \emptyset$ and $R \neq \emptyset$.

$R, L$ are therefore cliques in $G[M]$ which is then  a cobipartite graph. Furthermore 2 independent edges between $L, R$ generate a 4 four cycle in $G[M]$ a contradiction for an interval graph. So $G[M]$ is a balanced complement of a 2-chain graph.

Let $x \in L$ and let $f(x)$ be the vertex in $R$ such as $I(f(x))$ is the closest to the right of $I(x)$
and non intersecting it.  Since $x$ is not universal such an $f(x)$ exists and is unique.
Since there are no twins for every $x,x' \in L$. $f(x) \neq f(x')$. So $|L| \leq |R|$.
Using symmetry, $|L|=|R|$ and $G[M]$ is balanced.  

$M$ as a prime IUG has a unique representation (up to mirror), furthermore the universal vertex $u$ has also a unique position inside a unit length interval representation between the sets $L$ and $R$.
In other words, the PQ-tree associated with $G$ has a $Q$-node with ordered child $L , u, R$.

\qed
\end{proof}

Of course in a connected $K_{1,3}$-free graph, a vertex can only have one false twin unless a $K_{1,3}$ is created, but we can go further for the  modular decomposition of $(K_{2,2}, K_{1,3})$-free graphs.

\begin{lemma}\label{K1,3}
Let $G$ be a $(K_{2,2}, K_{1,3})$-free G, then it cannot have two consecutive prime nodes in its modular decomposition tree.
\end{lemma}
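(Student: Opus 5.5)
Suppose a prime node $N_1$ has a child $N_2$ that is also prime in the modular decomposition tree of $G$. Let $M_1$ be the module of $N_1$ and $M_2 \subsetneq M_1$ the module of $N_2$. I will derive a contradiction by producing an induced $K_{2,2}$ or $K_{1,3}$.

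First, Theorem~\ref{Hsu} needs connectivity, so I would work inside $H=G[M_1]$. Since $N_1$ is prime, both $H$ and $\overline{H}$ are connected. Also $M_2$ is a non-trivial module of $H$, so Theorem~\ref{Hsu} applies: either $M_2$ is a clique, or the neighbourhood $C = N_H(M_2)$ is a clique. The first option is impossible, because $G[M_2]$ is the graph of a prime node; such a graph has at least four vertices and is connected with connected complement, so it is not a clique. Hence $C$ is a nonempty clique, nonempty because $H$ is connected and $M_2 \neq M_1$, and every vertex of $C$ is complete to $M_2$.

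Second, I would use primality of $G[M_2]$ to find two non-adjacent vertices $x,y\in M_2$ (they exist since $G[M_2]$ is not complete). Pick any $c\in C$; $c$ is adjacent to both $x$ and $y$. Now I want a third neighbour of $c$ independent from $x,y$. Since $N_1$ is prime, $C\cup M_2 \ne M_1$ in a suitable sense: otherwise I would argue that $M_2$ together with $C$ gives a decomposition contradicting primality of $N_1$. More concretely, $C$ is not a module of $H$ unless trivial, and the set $R = M_1\setminus(M_2\cup C)$ must be nonempty, since otherwise $\overline{H}$ would be disconnected: every vertex of $C$ is adjacent to everything in $M_2$ and in the clique $C$. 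Since $H$ is connected, some $c\in C$ has a neighbour $r\in R$, and $r$ is non-adjacent to all of $M_2$ by the definition of $C$. Then $\{c,x,y,r\}$ is an induced $K_{1,3}$ with centre $c$, a contradiction.

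The main obstacle is the step showing $R\neq\emptyset$ and that the chosen $c$ has a neighbour in $R$. Suppose $R=\emptyset$. Then $M_1 = M_2 \cup C$ with $C$ a clique complete to $M_2$, so every $c\in C$ is universal in $H$ and $\overline{H}$ is disconnected. This contradicts the fact that $N_1$ is prime, unless $C$ is empty, which is excluded by connectivity. So $R\neq\emptyset$, and by connectivity of $H$ some vertex of $C$ is adjacent to $R$, because $M_2$ has no neighbours in $R$. I would take that vertex as $c$, and the claw argument then goes through.
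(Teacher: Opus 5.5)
Your proof is correct, and it takes a different route from the paper.

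The paper passes to the quotient $H$ of the parent node and substitutes the child prime graph $H'$ for a vertex $x$ of $H$. It fixes induced $P_4$'s $a-b-c-d$ in $H$ and $1-2-3-4$ in $H'$, and argues by the position of $x$ on the $P_4$ of $H$:
\begin{itemize}
\item if $x$ is an endpoint, there is a claw centred at $b$;
\item if $x$ is a middle vertex, the module has a non-clique neighbourhood, hence a $K_{2,2}$;
\item if $x$ lies on no $P_4$ of $H$, it invokes a result of Cournier and Ille ($x$ is then the head of a bull) to obtain a claw.
\end{itemize}

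You never use $P_4$'s. Theorem~\ref{Hsu} makes $C=N_H(M_2)$ a clique, connectivity of $\overline{H}$ forces $R\neq\emptyset$, and connectivity of $H$ gives an edge $cr$ with $c\in C$, $r\in R$, so $\{c,x,y,r\}$ is a claw.

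Your argument is self-contained: it avoids the case analysis and the external bull result. It is also more general, since you only use that $G[M_2]$ has a non-edge. So you actually show that every child of a prime node in a $(K_{2,2},K_{1,3})$-free graph is a clique. This is exactly what the paper needs later in the Claim inside the proof of Theorem~\ref{structural}. In exchange, the paper's version ties the obstruction explicitly to a $P_4$ of the quotient.

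One small remark: your aside that ``$C$ is not a module of $H$ unless trivial'' is unjustified. Without claw-freeness, $C$ can be a child of $N_1$ consisting of true twins. Nothing depends on it, though; your last paragraph gives the real argument for $R\neq\emptyset$, and that argument is sound.
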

\begin{proof}
Let us consider 2 such graphs $H, H'$ and $x \in V(H)$.
We can define $G= H_{x}^{H'}$ by substituting $x$ by  $H'$ in $H$ and we will show that $G$ is no a $K_{1,3}$-free graph.

It is well-know that every prime graph contains a $P_4$ as an induced subgraph. Say that $H$
contains $a-b-c-d$ and $H"$ contains $1-2-3-4$.
First cases: $x \in \{a,b,c,d\}$.

$x=a$ then $\{b,1,4,c\}$ is a  $K_{1,3}$.

$x=b$ its neighbourhood is not clique, contradiction.

But it could the case that no $P_4$ contains $x$ in $H$. But using a result in \cite{CournierI98} we know that necessarily $x$ is unique and the head of a bull $\{x, a,b, c, d\}$.

But then $\{a,1,4,c\}$ is a  $K_{1,3}$.
\qed
\end{proof}

All the previous properties lead  to the following structural result.

\begin{theorem}\label{structural}
The modular decomposition tree of an IUG has depth $\leq 4$.
\end{theorem}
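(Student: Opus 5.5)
We bound the depth by following a root-to-leaf path in the modular decomposition tree (MDT) of a UIG $G$ and controlling which node types can appear along it. Throughout we use that UIGs are chordal (hence $K_{2,2}$-free) and $K_{1,3}$-free (Theorem~\ref{basic}), and that the class is hereditary, so every module induces a UIG.

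\textbf{Step 1 (root).} If $G$ is disconnected, the root is a Parallel node whose children are the connected components, each a connected UIG. So it suffices to show that a connected UIG has MDT depth $\le 3$, which gives $\le 4$ overall.

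\textbf{Step 2 (connected case, below the root).} For connected $G$ the root is Series or Prime. If it is Series, its children are co-connected modules. A child that is a single vertex is a leaf. Otherwise the child $M$ is either disconnected, giving a Parallel node, or prime. A Parallel child with at least three components would give a $K_{1,3}$ together with any vertex of another Series child (one exists since the root has at least two children). A Parallel child with exactly two components, each containing a non-edge, would give a $C_4$. So a Parallel child below a Series root has at most two components, each of which must be a clique: otherwise we get a claw centred at an outside vertex, with two non-adjacent vertices in one component and one vertex in the other. Hence below a Series root we get Series $\to$ Parallel $\to$ Series(clique) $\to$ leaves, of depth $3$. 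For a prime child $M$, Theorem~\ref{Hsu} forces $N(M)$ to be a clique, and the argument of Lemma~\ref{chain-graph} shows that $G[M]$ is a co-bipartite chain graph whose nodes collapse further to cliques of true twins. That again has depth $\le 3$.

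\textbf{Step 3 (Prime root).} If the root is Prime, Lemma~\ref{K1,3} forbids a Prime child. A Series child is a module $M$ containing a non-edge, unless it is a clique. By Theorem~\ref{Hsu}, since $M$ is not a clique, $N(M)$ is a clique. Then two non-adjacent vertices of $M$ together with a neighbour outside $M$ give a claw-type configuration with the prime quotient. This should be excluded or at least confine $M$ to Series(clique / Parallel of two cliques). A Parallel child $M$ has $N(M)$ a clique (Theorem~\ref{Hsu}), and by claw-freeness it has at most two components, each forced to be a clique. So below a Prime root the path is Prime $\to$ Parallel/Series $\to$ Series/Parallel $\to$ leaf, of depth $\le 3$.

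\textbf{Main obstacle.} The delicate part is the bounded alternation Series/Parallel under a Prime or Series node: we must show that no third alternation level arises. I would do this by exhibiting a claw, $C_4$, net or 3-sun from any path Series--Parallel--Series--Parallel with non-trivial children, choosing representative vertices along the path and using the universality and non-adjacency relations that the node types impose.
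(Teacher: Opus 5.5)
Your Steps 1 and 2 are essentially sound. One slip: the sentence claiming a $C_4$ is wrong, since two components of a Parallel node have no edges between them. The obstruction there is again a claw centred at an outside vertex, which your next sentence already uses.

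\textbf{The gap is in Step 3.} For a Series child $M$ of a Prime root you only say the configuration ``should be excluded or at least confine $M$ to Series(clique / Parallel of two cliques)''. The second alternative does not give your bound. It allows the path Prime $\to$ Series $\to$ Parallel $\to$ Series $\to$ leaf, of length $4$ inside a connected component, hence $5$ once a Parallel root is added. Your tally ``Prime $\to$ Parallel/Series $\to$ Series/Parallel $\to$ leaf, of depth $\le 3$'' silently assumes this path cannot occur. Your ``Main obstacle'' paragraph leaves exactly this point as a plan. Lemma~\ref{K1,3} does not help here: it only rules out Prime--Prime, not a Prime node with a Series child containing a non-edge. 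Likewise, in Step 2 the assertion that the children of a prime child of a Series root ``collapse to cliques of true twins'' is stated but not proved.

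\textbf{The missing idea is the one the paper isolates in the Claim inside its proof.} In a connected claw-free graph, a module $X$ containing a non-edge $x_1x_2$ is adjacent to every other vertex of its component.
\begin{itemize}
\item Suppose some vertex of the component is anticomplete to $X$. Connectivity then gives an edge $yz$ with $y$ complete to $X$ and $z$ anticomplete to $X$, and $\{y;x_1,x_2,z\}$ is a claw.
\item $C_4$-freeness then makes $N(X)$ a clique.
\end{itemize}
Now apply this to a child $X$ of a Prime node, inside the connected graph induced by that node's module. It would make $X$ universal in the prime quotient, which is impossible. Hence every child of a Prime node is a clique, i.e.\ a leaf or a Series node whose children are leaves.

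This settles Step 3 at once: Prime $\to$ Series $\to$ leaf. It also justifies your Step 2 assertion about the prime child of a Series root, and it subsumes Lemma~\ref{K1,3}; no net or 3-sun is needed. With it, every connected component has depth at most $3$ (Series $\to$ Parallel or Prime $\to$ Series $\to$ leaf), and the whole tree has depth at most $4$.
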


\begin{proof}
Let $G$ be an UIG. $G$ can be disconnected and the root of its MDT be a Parallel node.

\begin{claim}
If a non trivial module has a  non edge then its neighbourhood in its connected component is a clique and is exactly the  remaining of the connected component.
\end{claim}

\begin{proof}
We already notice this property in Lemma \ref{K1,3} for prime and the same proof generalizes it to modules made up with 2 false twins.
\qed

\end{proof}

Using this claim, one can see that trivially a vertex $x$ with $degree(x)>0$ can have at most one false twin $y$ and if so its connected component is a clique in which the edge $xy$ has been removed.

So the modular decomposition tree  may start with a parallel node 
but then each of its child nodes are connected and using Lemma \ref{chain-graph} either with have a prime node or a series node with 2 children a cobipartite chain-graph on  one side and a clique on the other side, the third type of connected component is just a clique having eventually 2 false twins. Using Lemma \ref{K1,3} a prime node can only be decomposed further using set of true twins.

Therefore the height of  $MDT(G)$ is at most 4. See Figure \ref{MDT} for an example of a modular decomposition tree with prime, series and parallel (denoted by \/\/) nodes  and its associated $PQ$-tree.

\qed
\end{proof}

\begin{figure}
\begin{tikzpicture}[scale=0.5]

% SLICES / LARGE SETS
\coordinate (p) at (4,8) {};
\coordinate (s1) at (0,6) {};
\coordinate (s2) at (1.5,6) {};
\coordinate (s3) at (3,6) {};
\coordinate (s4) at (4.5,6) {};
\coordinate (s5) at (6,6) {};
\coordinate (s6) at (7.5,6) {};
\coordinate (s7) at (9,6) {};
\coordinate (s8) at (11,6) {};

\coordinate (s11) at (-1,3) {};

\coordinate (s115) at (0,3) {};
\coordinate (s12) at (1,3) {};
\coordinate (s31) at (1.5,3) {};
\coordinate (s32) at (3.3,3) {};
\coordinate (s41) at (3.5,3) {};
\coordinate(s415) at (6,3){};
\coordinate (s42) at (5.1,3) {};
\coordinate (s51) at (5.3,3) {};
\coordinate (s52) at (6.9,3) {};
\coordinate (s61) at (7.1,3) {};
\coordinate (s62) at (8.9,3) {};
\coordinate (s81) at (10.2,3) {};
\coordinate (s815) at (11.1,3) {};
\coordinate (s82) at (12.2,3) {};

\coordinate (s100) at (3,0) {};
\coordinate (s101) at (4,0) {};
\coordinate (s102) at (6,0) {};
\coordinate (s103) at (7,0) {};
\coordinate (s104) at (8,0) {};

\coordinate (s105) at (9,0) {};
\coordinate (s155) at (10,0) {};
\coordinate (s106) at (11,0) {};

\coordinate (s110) at (6,-3) {};
\coordinate (s1111) at (7.5,-3) {};
\coordinate (s111) at (8.5,-3) {};
% NODES %%%%%%%%%%%%%%%%%%%%%%%%%%%%%%%%%%%%%%%%%%%%%%%%%%%%%%%%%%%%%%%%%%

\node[draw, circle, minimum height=0.2cm, minimum width=0.2cm, fill=black] at (p) {};
\draw (p) node[above,scale=1.25,yshift = 2mm] {Parallel-node};
\node[scale = 2] at (s2) {$\ldots$};
\node[draw, circle, minimum height=0.2cm, minimum width=0.2cm, fill=black] at (s1) {};
\draw(s1) node[above,scale=1.25,yshift = 2mm]{Prime-node};
%\node[draw, circle, minimum height=0.2cm, minimum width=0.2cm, fill=black] at (s3) {};
%\draw (s3) node[above left,scale=1.25,xshift=-1mm] {$\ell$};
\node[draw, circle, minimum height=0.2cm, minimum width=0.2cm, fill=black] at (s4) {};
\draw(s4) node[above,scale=1.25,yshift = 2mm]{Series-node};
%\node[draw, circle, minimum height=0.2cm, minimum width=0.2cm, fill=black] at (s5) {};
%\node[draw, circle, minimum height=0.2cm, minimum width=0.2cm, fill=black] at (s6) {};
%\draw (s6) node[above right,scale=1.25,xshift=1mm] {$r$};
\node[scale = 2] at (s7) {$\ldots$};
\node[draw, circle, minimum height=0.2cm, minimum width=0.2cm, fill=black] at (s8) {};
\draw(s8) node[above,scale=1.25,yshift = 2mm]{Series-node};
%\foreach \I in {1,...,2} {
%   \coordinate (v\I) at (1.4+0.4*\I,3.2) {};
%   \node[draw, circle, fill=black, scale = 0.5] at (v\I) {};
%}

\node[draw, circle, minimum height=0.2cm, minimum width=0.2cm, fill=white] at (s11) {};
\draw (s11) node[below,scale=1.25,yshift = -2mm] {$x_1$};

\node[scale = 2] at (s115) {$\ldots$};

\node[draw, circle, minimum height=0.05cm, minimum width=0.05cm, fill=white] at (s12) {};
\draw (s12) node[below,scale=1.25,yshift = -2mm] {$x_k$};

\draw[line width = 1.4pt] (s1) -- (s11);
\draw[line width = 1.4pt] (s1) -- (s12);

\node[draw, circle, minimum height=0.2cm, minimum width=0.2cm, fill=black] at (s41) {};
\draw (s41) node[above,scale=1.25,yshift = 2mm] {Parallel-node};
\draw[line width = 1.4pt] (s4) -- (s41);

\node[draw, circle, minimum height=0.2cm, minimum width=0.2cm, fill=white] at (s42) {};
\draw (s42) node[below,scale=1.25,yshift = -2mm] {$y_1$};
\draw[line width = 1.4pt] (s4) -- (s42);

\node[scale = 2] at (s415) {$\ldots$};

\node[draw, circle, minimum height=0.2cm, minimum width=0.2cm, fill=white] at (s52) {};
\draw (s52) node[below,scale=1.25,yshift = -2mm] {$y_h$};
\draw[line width = 1.4pt] (s4) -- (s52);

\node[draw, circle, minimum height=0.2cm, minimum width=0.2cm, fill=black] at (s62) {};
\draw (s62) node[above,scale=1.25,yshift = 2mm] {$2-chain_Prime$};
\draw[line width = 1.4pt] (s8) -- (s62);

\node[draw, circle, minimum height=0.2cm, minimum width=0.2cm, fill=white] at (s81) {};
\draw (s81) node[below,scale=1.25,yshift = -2mm] {$u_1$};
\draw[line width = 1.4pt] (s81) -- (s8);

\node[scale = 2] at (s815) {$\ldots$};

\node[draw, circle, minimum height=0.2cm, minimum width=0.2cm, fill=white] at (s82) {};
\draw (s82) node[below,scale=1.25,yshift = -2mm] {$u_l$};
\draw[line width = 1.4pt] (s82) -- (s8);

\node[draw, circle, minimum height=0.2cm, minimum width=0.2cm, fill=white] at (s100) {};
\draw (s100) node[below,scale=1.25,yshift = -2mm] {$z_1$};
\draw[line width = 1.4pt] (s41) -- (s100);

\node[draw, circle, minimum height=0.2cm, minimum width=0.2cm, fill=white] at (s101) {};
\draw (s101) node[below,scale=1.25,yshift = -2mm] {$z_2$};
\draw[line width = 1.4pt] (s41) -- (s101);

\node[draw, circle, minimum height=0.2cm, minimum width=0.2cm, fill=white] at (s102) {};
\draw (s102) node[below,scale=1.25,yshift = -2mm] {$a_1$};
\draw[line width = 1.4pt] (s62) -- (s102);

\node[draw, circle, minimum height=0.2cm, minimum width=0.2cm, fill=black] at (s103) {};
\draw (s103) node[above,scale=1.25,yshift = 2mm] {series-node};
\draw[line width = 1.4pt] (s62) -- (s103);

\node[draw, circle, minimum height=0.2cm, minimum width=0.2cm, fill=white] at (s104) {};
\draw (s104) node[below,scale=1.25,yshift = -2mm] {$a_p$};
\draw[line width = 1.4pt] (s62) -- (s104);

\node[draw, circle, minimum height=0.2cm, minimum width=0.2cm, fill=white] at (s105) {};
\draw (s105) node[below,scale=1.25,yshift = -2mm] {$b_1$};
\draw[line width = 1.4pt] (s62) -- (s105);

\node[scale = 2] at (s155) {$\ldots$};

\node[draw, circle, minimum height=0.2cm, minimum width=0.2cm, fill=white] at (s106) {};
\draw (s106) node[below,scale=1.25,yshift = -2mm] {$b_p$};
\draw[line width = 1.4pt] (s62) -- (s106);

\node[draw, circle, minimum height=0.2cm, minimum width=0.2cm, fill=white] at (s110) {};
\draw (s110) node[below,scale=1.25,yshift = -2mm] {$t_1$};
\draw[line width = 1.4pt] (s103) -- (s110);

\node[draw, circle, minimum height=0.2cm, minimum width=0.2cm, fill=white] at (s111) {};

\node[scale = 2] at (s1111) {$\ldots$};

\draw (s111) node[below,scale=1.25,yshift = -2mm] {$t_r$};
\draw[line width = 1.4pt] (s103) -- (s111);

% LINKS %%%%%%%%%%%%%%%%%%%%%%%%%%%%%%%%%%%%%%%%%%%%%%%%%%%%%%%%%%%%%%%%%%

 \draw[line width = 1.4pt] (p) -- (s1);
\draw[line width = 1.4pt] (p) -- (s4);
 \draw[line width = 1.4pt] (p) -- (s8);

\end{tikzpicture}
\caption{The 3 types of connected components and their associated Modular Decomposition Tree  }\label{MDT}
\end{figure}

\begin{figure}
\begin{tikzpicture}[scale=0.5]

% SLICES / LARGE SETS
\coordinate (p) at (4,8) {};
\coordinate (s1) at (0,6) {};
\coordinate (s2) at (1.5,6) {};
\coordinate (s3) at (3,6) {};
\coordinate (s4) at (4.5,6) {};
\coordinate (s5) at (6,6) {};
\coordinate (s6) at (7.5,6) {};
\coordinate (s7) at (7.5,6) {};
\coordinate (s8) at (10,6) {};

\coordinate (s11) at (-1,3) {};
https://plmlatex.math.cnrs.fr/project/66ed45870663c20136f80f5d
\coordinate (s115) at (0,3) {};
\coordinate (s12) at (1,3) {};
\coordinate (s31) at (1.5,3) {};
\coordinate (s32) at (3.3,3) {};
\coordinate (s41) at (2.5,3) {};
\coordinate(s415) at (6,3){};
\coordinate (s42) at (4.7,3) {};
\coordinate (s51) at (5.3,3) {};
\coordinate (s52) at (6.9,3) {};
\coordinate (s61) at (7.1,3) {};
\coordinate (s62) at (8.9,3) {};
\coordinate (s81) at (10.2,3) {};
\coordinate (s815) at (11.6,3) {};
\coordinate (s820) at (11,3) {};
\coordinate (s82) at (13,3) {};

\coordinate (s100) at (2,0) {};
\coordinate (s1005) at (3,0) {};
\coordinate (s101) at (5,0) {};
\coordinate (s102) at (6,0) {};
\coordinate (s103) at (7,0) {};
\coordinate (s104) at (8,0) {};

\coordinate (s105) at (9,0) {};
\coordinate (s155) at (10,0) {};
\coordinate (s106) at (11,0) {};

\coordinate (s110) at (6,-3) {};
\coordinate (s1111) at (7.5,-3) {};
\coordinate (s111) at (8.5,-3) {};

% NODES %%%%%%%%%%%%%%%%%%%%%%%%%%%%%%%%%%%%%%%%%%%%%%%%%%%%%%%%%%%%%%%%%%

\node[draw, circle, minimum height=0.2cm, minimum width=0.2cm, fill=black] at (p) {};
\draw (p) node[above,scale=1.25,yshift = 2mm] {P-node};
\node[scale = 2] at (s2) {$\ldots$};
\node[draw, circle, minimum height=0.2cm, minimum width=0.2cm, fill=black] at (s1) {};
\draw(s1) node[above,scale=1.25,yshift = 2mm]{Q-node};
%\node[draw, circle, minimum height=0.2cm, minimum width=0.2cm, fill=black] at (s3) {};
%\draw (s3) node[above left,scale=1.25,xshift=-1mm] {$\ell$};
\node[draw, circle, minimum height=0.2cm, minimum width=0.2cm, fill=black] at (s4) {};
\draw(s4) node[above,scale=1.25,yshift = 2mm]{Q-node};
%\node[draw, circle, minimum height=0.2cm, minimum width=0.2cm, fill=black] at (s5) {};
%\node[draw, circle, minimum height=0.2cm, minimum width=0.2cm, fill=black] at (s6) {};
%\draw (s6) node[above right,scale=1.25,xshift=1mm] {$r$};
\node[scale = 2] at (s7) {$\ldots$};
\node[draw, circle, minimum height=0.2cm, minimum width=0.2cm, fill=black] at (s8) {};
\draw(s8) node[above,scale=1.25,yshift = 2mm]{Q-node};
%\foreach \I in {1,...,2} {
%   \coordinate (v\I) at (1.4+0.4*\I,3.2) {};
%   \node[draw, circle, fill=black, scale = 0.5] at (v\I) {};
%}

\node[draw, circle, minimum height=0.2cm, minimum width=0.2cm, fill=white] at (s11) {};
\draw (s11) node[below,scale=1.25,yshift = -2mm] {$\sigma(x_1)$};

\node[scale = 2] at (s115) {$\ldots$};

\node[draw, circle, minimum height=0.05cm, minimum width=0.05cm, fill=white] at (s12) {};
\draw (s12) node[below,scale=1.25,yshift = -2mm] {$\sigma(x_k)$};

\draw[line width = 1.4pt] (s1) -- (s11);
\draw[line width = 1.4pt] (s1) -- (s12);

\node[draw, circle, minimum height=0.2cm, minimum width=0.2cm, fill=white] at (s41) {};
\draw (s41) node[below,scale=1.25,yshift =-2mm] {$z_1$};
\draw[line width = 1.4pt] (s4) -- (s41);

\node[draw, circle, minimum height=0.2cm, minimum width=0.2cm, fill=black] at (s42) {};
\draw (s42) node[above,scale=1.25,yshift = 2mm] {P-node};
\draw[line width = 1.4pt] (s4) -- (s42);

\node[scale = 2] at (s415) {$\ldots$};

\node[draw, circle, minimum height=0.2cm, minimum width=0.2cm, fill=white] at (s52) {};
\draw (s52) node[below,scale=1.25,yshift = -2mm] {$z_2$};
\draw[line width = 1.4pt] (s4) -- (s52);

%\node[draw, circle, minimum height=0.2cm, minimum width=0.2cm, fill=black] at (s62) {};
%\draw (s62) node[above,scale=1.25,yshift = 2mm] {Q-node};
%\draw[line width = 1.4pt] (s8) -- (s62);

\node[draw, circle, minimum height=0.2cm, minimum width=0.2cm, fill=black] at (s81) {};
\draw (s81) node[above,scale=1.25,yshift = 2mm] {P-node};
\draw[line width = 1.4pt] (s81) -- (s8);

\node[scale = 2] at (s815) {$\ldots$};

\node[draw, circle, minimum height=0.2cm, minimum width=0.2cm, fill=white] at (s82) {};
\draw (s82) node[below,scale=1.25,yshift = -2mm] {$b_p$};
\draw[line width = 1.4pt] (s82) -- (s8);

\node[draw, circle, minimum height=0.2cm, minimum width=0.2cm, fill=white] at (s820) {};
\draw (s820) node[below,scale=1.25,yshift = -2mm] {$b_1$};
\draw[line width = 1.4pt] (s820) -- (s8);

\node[draw, circle, minimum height=0.2cm, minimum width=0.2cm, fill=white] at (s100) {};
\draw (s100) node[below,scale=1.25,yshift = -2mm] {$y_1$};
\draw[line width = 1.4pt] (s42) -- (s100);

\node[scale = 2] at (s1005) {$\ldots$};
\node[draw, circle, minimum height=0.2cm, minimum width=0.2cm, fill=white] at (s101) {};
\draw (s101) node[below,scale=1.25,yshift = -2mm] {$y_h$};
\draw[line width = 1.4pt] (s42) -- (s101);

\node[draw, circle, minimum height=0.2cm, minimum width=0.2cm, fill=white] at (s102) {};
\draw (s102) node[below,scale=1.25,yshift = -2mm] {$a_1$};
\draw[line width = 1.4pt] (s8) -- (s102);

\node[draw, circle, minimum height=0.2cm, minimum width=0.2cm, fill=black] at (s103) {};
\draw (s103) node[above,scale=1.25,yshift = 2mm] {P-node};
\draw[line width = 1.4pt] (s8) -- (s103);

\node[draw, circle, minimum height=0.2cm, minimum width=0.2cm, fill=white] at (s104) {};
\draw (s104) node[below,scale=1.25,yshift = -2mm] {$a_p$};
\draw[line width = 1.4pt] (s8) -- (s104);

\node[draw, circle, minimum height=0.2cm, minimum width=0.2cm, fill=white] at (s105) {};
\draw (s105) node[below,scale=1.25,yshift = -2mm] {$u_1$};
\draw[line width = 1.4pt] (s81) -- (s105);

\node[scale = 2] at (s155) {$\ldots$};

\node[draw, circle, minimum height=0.2cm, minimum width=0.2cm, fill=white] at (s106) {};
\draw (s106) node[below,scale=1.25,yshift = -2mm] {$u_l$};
\draw[line width = 1.4pt] (s81) -- (s106);

\node[draw, circle, minimum height=0.2cm, minimum width=0.2cm, fill=white] at (s110) {};
\draw (s110) node[below,scale=1.25,yshift = -2mm] {$t_1$};
\draw[line width = 1.4pt] (s103) -- (s110);

\node[draw, circle, minimum height=0.2cm, minimum width=0.2cm, fill=white] at (s111) {};

\node[scale = 2] at (s1111) {$\ldots$};

\draw (s111) node[below,scale=1.25,yshift = -2mm] {$t_r$};
\draw[line width = 1.4pt] (s103) -- (s111);

% LINKS %%%%%%%%%%%%%%%%%%%%%%%%%%%%%%%%%%%%%%%%%%%%%%%%%%%%%%%%%%%%%%%%%%

 \draw[line width = 1.4pt] (p) -- (s1);
\draw[line width = 1.4pt] (p) -- (s4);
 \draw[line width = 1.4pt] (p) -- (s8);

\end{tikzpicture}
\caption{The associated PQ-tree. Notice the differences with the Modular Decomposition Tree as for example that $\sigma$ must be an Left  Ordering of the underlying prime graph UIG.  }\label{PQT}
\end{figure}

As a consequence:

\begin{proposition}\label{freedom}
The interval representation of an IUG is essentially unique; the only degrees of freedom comes from the true twin vertices.
\end{proposition}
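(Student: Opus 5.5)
We read the statement at the level of vertex orderings: for a connected UIG $G$, any two Left-Right Orderings (equivalently, any two orderings induced by unit interval realizers) coincide up to reversal and up to permuting vertices inside each class of true twins. The plan is to first quotient $G$ by its true twin classes and prove uniqueness for the quotient. Then we lift back, observing that true twins have identical closed neighbourhoods and can therefore be permuted freely. For disconnected $G$, the components occupy consecutive blocks (the right pattern of Figure \ref{intervals} is avoided), so the additional freedom consists of the order of the components and the reversal of each one. We state this as part of the ``essentially'' and restrict the core argument to connected graphs.

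\textbf{Step 1 (reduction).} Let $G'$ be obtained by keeping one representative per true twin class. Every Left-Right Ordering of $G$ restricts to one of $G'$. Conversely, in a Left-Right Ordering of $G$, each true twin class is consecutive. Indeed, if $x,y$ are true twins and $x<z<y$, then $xy\in E$ forces $xz,zy\in E$. Moreover, if $z\notin N[x]$, consecutiveness of closed neighbourhoods (Theorem \ref{caracordres}) gives a contradiction. Hence $z\in N[x]=N[y]$, and the same argument shows $N[z]=N[x]$. So it suffices to prove that a connected twin-free UIG $G'$ has a unique Left-Right Ordering up to mirror.

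\textbf{Step 2 (the first vertex is forced).} In a Left-Right Ordering $\sigma$ of $G'$, the first vertex $\sigma(1)$ is an anchor. A connected UIG has at most two anchors, and these are the endpoints of its realizers. We would argue that, without twins, there are exactly two candidates, interchanged by mirroring. Fix $\sigma(1)=a$.

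\textbf{Step 3 (greedy determination).} We show by induction that $\sigma(i+1)$ is determined by $\sigma(1),\dots,\sigma(i)$. Since closed neighbourhoods are consecutive and $G'$ is connected, $\sigma(i+1)$ is adjacent to $\sigma(i)$ and belongs to the set $U$ of unvisited vertices of $N(\sigma(i))$, which is a clique forming an interval of positions right after $i$. Within $U$, the Left-Right property forces $N(u)\cap\{\text{later vertices}\}$ to grow with position. Equivalently, the next vertex must be the $u\in U$ whose neighbourhood among unvisited vertices is inclusion-minimal; otherwise some $u'<u$ in $U$ would have a neighbour $w$ with $u<w$ and $uw\notin E$, violating the left pattern. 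If two vertices of $U$ have the same unvisited neighbourhood, their visited neighbourhoods must also agree, again by the Left-Right property applied to the earlier prefix. This would make them true twins, which is excluded. Hence the minimum is unique.

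\textbf{Main obstacle.} The hardest part is the tie-breaking argument in Step 3, namely proving that equal forward neighbourhoods force equal backward neighbourhoods, and likewise pinning down the anchor in Step 2. For Step 2 we would use Lemma \ref{chain-graph} and Theorem \ref{structural} to handle the prime-module case: the structure $L,u,R$ is forced, with $L$ and $R$ ordered by their neighbourhood chains. This connects the result to the $Q$-node picture of Figure \ref{PQT}, where each connected component yields a single $Q$-node whose only freedom is reversal, with $P$-nodes collecting true twins.
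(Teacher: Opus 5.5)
Your route differs from the paper's. The paper reads the proposition off the modular decomposition (Lemma~\ref{chain-graph}, Lemma~\ref{K1,3}, Theorem~\ref{structural}) and the $Q$-node picture of Figure~\ref{PQT}, taking uniqueness for the prime pieces as known. You instead quotient by true twins and then show directly that a Left-Right Ordering is fixed by its first vertex, which has only two possible values. This route is workable, but as written Steps~3 and~2 each contain a genuine gap.

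\textbf{Step~3.} The tie-breaking claim is false: equal forward neighbourhoods do not force equal backward neighbourhoods. Take the unit intervals $q=[-0.4,0.6]$, $p=[0,1]$, $s=[0.5,1.5]$, $u=[0.9,1.9]$, $u'=[1.2,2.2]$. The closed neighbourhoods are $N[q]=\{q,p,s\}$, $N[p]=\{q,p,s,u\}$, $N[s]=V$, $N[u]=\{p,s,u,u'\}$ and $N[u']=\{s,u,u'\}$. So the graph is connected and twin-free, and $q,p,s,u,u'$ is a Left-Right Ordering. After the prefix $q,p,s$ you have $U=\{u,u'\}$. Both vertices have forward closed neighbourhood $\{u,u'\}$ (with open neighbourhoods, $\{u'\}$ and $\{u\}$ are incomparable), yet their visited neighbourhoods are $\{p,s\}$ and $\{s\}$. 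Your selection rule therefore does not pick the next vertex. What forces $u$ before $u'$ is the \emph{visited} vertex $p$.

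The fix is to argue directly. Suppose $\sigma,\tau$ share a prefix of length $i$ and $x=\sigma(i+1)\neq y=\tau(i+1)$, and take $z\in N[x]\triangle N[y]$; then $z\neq x,y$ because $U$ is a clique. If $z$ is visited, the neighbour of $z$ must precede the non-neighbour. If $z$ is unvisited, then $z\notin U$, so $z\sigma(i)\notin E$ and $z$ lies after all of $U$. Hence the neighbour of $z$ must come after the non-neighbour. Since $\sigma$ and $\tau$ order $x,y$ oppositely, one of them contains a forbidden triple, so $x,y$ would be true twins.

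\textbf{Step~2.} This step carries the real content and is left unproved. ``At most two anchors'' is false with twins (every vertex of $K_n$ is an anchor), and in the twin-free case it is exactly what must be shown. Lemma~\ref{chain-graph} and Theorem~\ref{structural} cannot supply it. They only describe nontrivial prime modules, while a typical connected twin-free UIG, such as $P_n$ with $n\ge 4$, is itself prime with no nontrivial module. Anchors and realizers are also unnecessary here.

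The missing argument is short. Let $\sigma$ be a Left-Right Ordering, and suppose $v=\sigma(j)$ with $1<j<n$ starts another Left-Right Ordering $\tau$.
\begin{itemize}
\item Then $N[v]$ is a clique. The vertices $x=\sigma(j-1)$ and $y=\sigma(j+1)$ are adjacent to $v$ (consecutive vertices are adjacent, by connectivity) and to each other, and $N[v]\subseteq N[x]\cap N[y]$.
\item Twin-freeness gives $w\in N[x]\setminus N[v]$ and $w'\in N[y]\setminus N[v]$. The Left-Right property puts $w$ left of $N[v]$ and $w'$ right of it, hence $ww'\notin E$.
\item Claw-freeness gives $xw',wy\notin E$. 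So $\{w,x,v,y,w'\}$ induces a bull with nose $v$.
\item No Left-Right Ordering of a bull starts at its nose. The prefix $N[v]$ would be $v,x,y$ or $v,y,x$, and then $w$ or $w'$ creates a forbidden triple.
\end{itemize}
Hence $\tau(1)\in\{\sigma(1),\sigma(n)\}$, and the corrected Step~3 gives $\tau\in\{\sigma,\sigma^d\}$.

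\textbf{Step~1.} The claim is correct, but your sentence about $z\notin N[x]$ is vacuous, since $xz\in E$. What you need is that any vertex of $N[z]\triangle N[x]$ yields a forbidden triple wherever it sits.

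With these repairs your argument becomes a self-contained, purely ordering-based proof.
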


The above structural result applies only for non trivial module in a connected UIG, and not to a prime connected  UIG itself, but it is interesting to notice that
in \cite{HEGGERNES09} some structural results are presented using a generalization of the complement of bipartite chain graphs, namely the \textbf{2-dimensional bubbles structures}. But we will not use this characterization in the following algorithms.

\begin{lemma}\label{2=3}
For a connected graph $G$ a Bisimplicial Elimination Ordering is also a cocomp ordering.
\end{lemma}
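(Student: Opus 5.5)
The plan is to argue by contradiction. Let $\tau$ be a Bisimplicial Elimination Ordering of the connected graph $G$. In pattern form, no triple $x<_\tau y<_\tau z$ has $xz,yz\in E$ with $xy\notin E$ (the chordal pattern), and no triple has $xy,xz\in E$ with $yz\notin E$ (the co-chordal pattern). Equivalently, for every vertex, its earlier neighbours form a clique and its later neighbours form a clique. Suppose $\tau$ contains the cocomparability pattern: $a<_\tau b<_\tau c$ with $ac\in E$ and $ab,bc\notin E$. Among all such triples I would fix one where $dist_G(b,\{a,c\})$ is minimum; it is finite because $G$ is connected. Let $d$ be the neighbour of $b$ on a shortest path from $b$ to $\{a,c\}$.

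\emph{Case 1: $d$ is adjacent to $a$ or to $c$.} Say $da\in E$. If $d$ precedes both $a$ and $b$, then $a,b$ are later neighbours of $d$ and are non-adjacent, which is impossible. If $d$ follows both, they are non-adjacent earlier neighbours of $d$, again impossible. Hence $a<_\tau d<_\tau b<_\tau c$. Then $d$ and $c$ are later neighbours of $a$, so $dc\in E$. But now $b$ and $c$ are later neighbours of $d$ with $bc\notin E$, a contradiction. The case $dc\in E$ is the mirror argument. It forces $b<_\tau d<_\tau c$, then $ad\in E$ because $a,d$ are earlier neighbours of $c$, and finally $a,b$ are non-adjacent earlier neighbours of $d$.

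\emph{Case 2: $d$ is adjacent to neither $a$ nor $c$ and $a<_\tau d<_\tau c$.} Then $(a,d,c)$ is again a cocomparability pattern. Its middle vertex $d$ is strictly closer to $\{a,c\}$ than $b$ is, which contradicts the extremal choice.

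\emph{Case 3: $d$ is adjacent to neither $a$ nor $c$ and $d$ lies outside the interval $[a,c]$ of $\tau$.} This is the main obstacle. Say $d<_\tau a$; the case $c<_\tau d$ is symmetric. We obtain the crossing configuration $d<_\tau a<_\tau b<_\tau c$ with $db, ac\in E$ and all other pairs among these four vertices non-adjacent. Here $(d,a,b)$ is itself a cocomparability pattern, but nothing guarantees its middle vertex $a$ is closer to $\{d,b\}$, so the simple potential does not decrease.

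My first attempt at Case 3 is to strengthen the extremal choice so that crossings are excluded. Concretely, choose the violating triple minimizing first $dist(b,\{a,c\})$ and then the $\tau$-distance between $a$ and $c$. Alternatively, follow the whole shortest path $b=p_0,p_1,\dots,p_k$ from $b$ to $\{a,c\}$ and show, by induction along it, that every $p_i$ with $i<k$ stays strictly between $a$ and $c$ in $\tau$. A first exit through an edge $p_ip_{i+1}$ would create a crossing pair of edges. Case 1 then applies to the last interior vertex $p_{k-1}$, which is adjacent to $a$ or to $c$, and gives the contradiction.

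To rule out such a first exit, I would use the bisimplicial property on the vertices $a$, $c$ and the exiting edge, together with the Case 1 analysis. If this does not close the argument, the fallback is to use Theorem~\ref{caracordres} and Theorem~\ref{basic}: $G$ is a UIG, hence claw-free, net-free and 3-sun-free. I would then look for a short obstruction, such as a claw or an induced cycle of length at least $4$, in the subgraph spanned by the crossing edges and the connecting path.
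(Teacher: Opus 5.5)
\textbf{The gap is Case 3.} Your Cases 1 and 2 are correct, but Case 3 carries the whole difficulty and you leave it open, so the proposal is not yet a proof.

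The potential $\mathrm{dist}(b,\{a,c\})$ handles only half of Case 3. Suppose $d$ exits on the same side as the endpoint your shortest path is heading to, say $d<_\tau a$ and $\mathrm{dist}(d,a)=D-1$. Then $(d,a,b)$ is a violating triple whose middle vertex $a$ is at distance $D-1$ from $\{d,b\}$, which contradicts minimality. But if the path heads to $c$ while $d<_\tau a$, you only get $\mathrm{dist}(a,\{d,b\})\le 1+\mathrm{dist}(c,d)=D$. Your proposed tie-break by the $\tau$-distance between $a$ and $c$ says nothing about the $\tau$-distance between $d$ and $b$.

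Your other two suggestions do not close it either. The ``first exit'' induction just meets the same crossing configuration one step further along the path. No argument local to the crossing can work, because $d<_\tau a<_\tau b<_\tau c$ with exactly the edges $db,ac$ is a bisimplicial ordering of $2K_2$; connectivity must be used globally. The UIG fallback cannot work at all: by Theorem~\ref{caracordres}, a graph with a bisimplicial ordering is a UIG, so it has no claw, net, 3-sun or hole anywhere. The contradiction has to come from $\tau$, not from an induced obstruction.

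The paper's own proof follows the same template with a lexicographically smallest triple. Lexicographic minimality does rule out exits to the left of the left endpoint, since they give a lexicographically smaller triple. It does not really exclude a path passing beyond the right endpoint: the claim that the last path vertex is non-adjacent to the right endpoint is justified by shortest-path minimality, which does not apply to that vertex. So the paper does not supply your missing step.

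\textbf{A way to close it: induction on $n$.} Write $\tau=v_1,\dots,v_n$.
\begin{enumerate}
\item All neighbours of $v_n$ come before it, so they form a clique; hence $v_n$ is simplicial. Therefore $G-v_n$ is still connected, and the restriction of $\tau$ to it is still bisimplicial.
\item By induction, $v_iv_{i+1}\in E$ for all $i\le n-2$.
\item Let $v_j$ be a neighbour of $v_n$. If $j<n-1$, then $v_{j+1}$ and $v_n$ are both later neighbours of $v_j$, so $v_{j+1}v_n\in E$. Iterating gives $v_{n-1}v_n\in E$. Hence any two consecutive vertices of $\tau$ are adjacent.
\item Now take any edge $v_iv_k$ with $i<k$. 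Propagating as in step 3, using that later neighbours of $v_i,v_{i+1},\dots$ form cliques, gives $v_jv_k\in E$ for every $i<j<k$.
\item Then $v_i$ and $v_j$ are both earlier neighbours of $v_k$, so $v_iv_j\in E$.
\end{enumerate}
So no cocomparability pattern occurs, and $\tau$ is in fact a Left-Right Ordering.
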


\begin{proof}
Let $\tau$ be Bisimplicial Elimination Ordering of graph $G$. If $\tau$ is not a cocomp ordering there must exist at least one triple $(a, c, b)$ with $ab \in E(G)$ and $ac, bc \notin E(G)$ and $a <_{\tau} c <_{\tau} b$. Let us take a lexicographic smallest such triple. Since the graph $G$ is supposed to be connected, let us consider a shortest path $\mu$ from $c$ to $\{a, b\}$.

Using symmetry let us say that $\mu=[c, x_1, \dots, x_k, a]$, with $k\geq 1$. Let us now prove by induction that $x_i$ cannot be adjacent to $a$ or $b$ and $c <_{\tau} x_i$ which leads to an immediate contradiction with the existence of $\mu$.

First for i=1. $x_1b \notin E(G)$ else $\mu$ would not be a shortest path. If $x_1a \in E(G)$  If $x_1 <_{\tau} a$, the triple $(x_1, a, c)$ is a forbidden pattern. Else $a <_{\tau} x_1$ but then $(a, x_1, b)$ is a forbidden pattern.
Since $(a, x_1, b)$ is a cocomp pattern our choice of $(a,c,b)$ forces $c<_{\tau} x_1$.

Let us suppose that it is true for $x_i$ and consider $x_{i+1}$.

If $x_{i+1}b \notin E(G)$,  else $\mu$ would not be a shortest path.
Suppose $x_{i+1}a \in E(G)$. The only possible case if $x_{i+1} <_{\tau} a$, because of bisimpliciality. But then the triple $(x_{i+1}, a, x_i)$ is not simplicial.
Since $(a, x_{i+1}, b)$ is a cocomp pattern our choice of $(a,c,b)$ forces $c<_{\tau} x_1$.

Therefore such a path $\mu$ cannot exist and we are done.
\qed

\end{proof}

\begin{theorem}\label{unique}
Let $G$ be a prime  UIG.  The following orderings are the same, up to reversal : 
  \begin{itemize}
\item the Interval ordering of a realizer of $G$,
\item a Consecutive Neighborhood Ordering of $G$,
\item a Left Ordering of $G$,
\item $\texttt{minBFS}(r)$ where $r$ is left anchor of $G$. 
\item  an Indifference Ordering of $G$,
\item  a Bisimplicial Elimination Ordering of $G$, 
  \end{itemize}

\end{theorem}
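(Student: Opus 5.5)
Here ``prime UIG'' means, as in the text, a connected UIG without true twins. The plan is to show that every ordering in the list is a Left-Right Ordering, and then that a connected twin-free graph has at most one Left-Right Ordering up to reversal.

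\emph{Reduction to Left-Right Orderings.} The implications (0)$\to$(i)$\to$(ii)$\to$(iii)$\to$(iv) in the proof of Theorem~\ref{caracordres} give the chain: the Interval ordering of a realizer is a Consecutive Neighbourhood Ordering, which is a Left(-Right) Ordering, which is an Indifference Ordering, which is a Bisimplicial Elimination Ordering. So the first three items are Left-Right Orderings. For the converse direction, Lemma~\ref{2=3} (connectedness holds) shows that a Bisimplicial Elimination Ordering is a cocomp ordering. Corollary~\ref{cestdejacela} then upgrades it to a Left-Right Ordering. The same argument covers Indifference Orderings, since they are bisimplicial. For $\texttt{minBFS}(r)$, with $r$ an anchor, Lemma~\ref{inv} establishes Invariant~\ref{maininv} from any anchor. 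The proofs of Lemma~\ref{indifference} and Theorem~\ref{BFSintervalpropre} then show this ordering is a Left-Right Ordering. Hence all six families consist of Left-Right Orderings, and it suffices to prove uniqueness for those.

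\emph{Uniqueness of the Left-Right Ordering.} Let $\tau$ be a Left-Right Ordering of a connected graph $G$.
\begin{itemize}
\item Consecutive vertices of $\tau$ are adjacent. If $\tau(i)\tau(i+1)\notin E$, the Left-Right condition forbids any edge crossing position $i$, which contradicts connectivity.
\item Every closed neighbourhood $N[v]$ is an interval of $\tau$. This follows because $ac\in E$ with $a<b<c$ forces $ab,bc\in E$.
\item For $u<_\tau v$ we have $\min N[u]\le \min N[v]$ and $\max N[u]\le \max N[v]$. For example, if $w<u$ and $wv\in E$, then $wu\in E$.
\end{itemize}

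Now let $\tau'$ be a second Left-Right Ordering. By reversing $\tau'$ if necessary, we may assume $\tau$ and $\tau'$ have the same first vertex; this is the main obstacle. The first vertex $x$ of any Left-Right Ordering has $N[x]$ as a prefix, so it is simplicial, and it is an anchor. The step to prove is that a twin-free connected UIG has exactly two anchors, and they are swapped by reversal. The intended argument is that the first vertex is determined up to twins by being the extremity with minimal closed neighbourhood in the inclusion chain of the prefix. Suppose two distinct vertices $x,x'$ can both start Left-Right Orderings, and they are not the two ends of a single ordering. Then, comparing neighbourhoods, one should get $N[x]=N[x']$, i.e.\ true twins. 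If this direct argument fails, the fallback is Proposition~\ref{freedom}: in the PQ-tree of a twin-free connected UIG every P-node is trivial, so the tree is a single Q-node, whose only freedom is reversal.

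\emph{Induction once the first vertex is fixed.} Proceed greedily. Suppose $\tau$ and $\tau'$ agree on the prefix $v_1,\dots,v_k$. The next vertex must be a neighbour of the already-placed vertices, by the consecutive-adjacency bullet. Among the unplaced neighbours of $v_k$, the Left-Right condition forces the next vertex to be one whose closed neighbourhood extends least to the right, i.e.\ one with minimal $\max N[\cdot]$. Two candidates $y,y'$ tied for this position would have to be adjacent. Using the interval structure from the bullets above, they would have identical closed neighbourhoods. That makes them true twins, which is excluded, so the next vertex is unique and $\tau=\tau'$. Verifying the tie-implies-twins step carefully, i.e.\ that the constraints on both sides of the ordering pin $N[y]=N[y']$, is the delicate computation in this approach.
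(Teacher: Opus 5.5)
Your reduction of all six items to Left-Right Orderings is sound, and it is a sturdier route than the paper's. The paper stops at cocomp orderings (Lemma~\ref{2=3}) and then invokes uniqueness, up to reversal, of cocomp orderings of connected prime cocomparability graphs. That claim is false as stated: a cocomp ordering is any linear extension of a transitive orientation of $\overline{G}$, and for $P_4$ with edges $12,23,34$ both $1,2,3,4$ and $1,2,4,3$ avoid the cocomparability pattern. So upgrading via Corollary~\ref{cestdejacela} and proving uniqueness for Left-Right Orderings is what is really needed.

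\textbf{Where your proof has a gap.} It lies in the uniqueness part.
\begin{itemize}
\item The alignment of first vertices, which you call the main obstacle, is not proved. ``One should get $N[x]=N[x']$'' is not an argument.
\item Your fallback is circular. Proposition~\ref{freedom} is stated in the paper without proof and is essentially the statement you are proving. Moreover, a PQ-tree orders maximal cliques, not vertices.
\item A smaller gap: ``minimal $\max N[\cdot]$'' is measured in $\tau$ itself, so it cannot by itself compare the choices made by $\tau$ and $\tau'$.
\end{itemize}

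\textbf{How to close both holes.} Your three bullets suffice.

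\emph{First vertex.} Let $x'=\tau'(1)$. Then $K=N[x']$ is a prefix of $\tau'$, hence a clique, and it is a $\tau$-interval $[\tau(p),\tau(q)]$ containing $x'$.
\begin{itemize}
\item Suppose $p>1$ and $q<n$. Both $\tau(p-1)$ and $\tau(q+1)$ come after $K$ in $\tau'$. If $\tau(p-1)<_{\tau'}\tau(q+1)$, the edge $\tau(q)\tau(q+1)$ with $\tau(q)\in K$ forces $\tau(p-1)\tau(q+1)\in E$. But $x'$ lies between these two vertices in $\tau$, so it would be adjacent to $\tau(q+1)$, a contradiction. The other order is symmetric, using the edge $\tau(p)\tau(p-1)$.
\item Hence $p=1$ or $q=n$; say $p=1$. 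Since $K$ is a clique, $\tau(1)\tau(q)\in E$. Monotonicity gives $\max N[\tau(1)]\le \max N[x']=\tau(q)$. Together these give $N[\tau(1)]=N[x']$, so $x'=\tau(1)$ by twin-freeness.
\item Symmetrically, $q=n$ gives $x'=\tau(n)$; in that case compare $\tau'$ with $\tau^d$.
\end{itemize}

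\emph{Greedy step.} Let $P=\{v_1,\dots,v_k\}$, and suppose $\tau$ places $y$ next while $\tau'$ places $y'\neq y$.
\begin{itemize}
\item In $\tau$ we have $y<_\tau y'$. Both $N[y]$ and $N[y']$ are $\tau$-intervals containing $v_k$ and $\tau(k+1)=y$.
\item Monotonicity in $\tau$ then gives $N[y]\cap P\supseteq N[y']\cap P$ and $N[y]\setminus P\subseteq N[y']\setminus P$.
\item The same argument in $\tau'$ gives the reverse inclusions.
\item Hence $N[y]=N[y']$, contradicting the absence of true twins.
\end{itemize}

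With these two steps written out, your proof is complete and self-contained. Unlike the paper's proof, it does not rest on an incorrect uniqueness claim.
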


\begin{proof}
First a prime graph by definition  is necessarily connected, and it is well known from \cite{SF1970} that a connected prime cocomparability graph have a unique cocomp ordering, up to reversal.
Therefore  since using Lemma \ref{2=3} all the orderings yielded by the six items are cocomp orderings we are done.
\qed
\end{proof}

Using our knowledge about modular decomposition of IUG, see Proposition  \ref{freedom}, we can push a little further.

\begin{theorem}\label{uniqueness}
Let $G$ be a connected IUG, The following orderings are the same, up to reversal and permutation of the true twin vertices: 
  \begin{itemize}
\item the Interval ordering of a realizer of $G$,
\item a Consecutive Neighborhood Ordering of $G$,
\item a Left Ordering of $G$,
\item $\texttt{minBFS}(r)$ where $r$ is left anchor of $G$. 
\item  an Indifference Ordering of $G$,
\item  a Bisimplicial Elimination Ordering of $G$, 
  \end{itemize}
\end{theorem}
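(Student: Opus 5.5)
The plan is to reduce to the prime case of Theorem~\ref{unique} by contracting true twin classes. Let $G$ be a connected UIG, and partition $V(G)$ into the classes of the true twin relation, which is an equivalence relation. Each class is a clique module. Let $G'$ be the quotient graph with one representative per class. $G'$ is an induced subgraph of $G$, so it is a connected UIG, and it has no true twins.

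\textbf{Step 1: each ordering family is stable under twin permutation, and every ordering in it induces an ordering of the same type on $G'$.} All six notions are pattern-avoidance properties, or are realized by realizers in the case of the interval ordering. They are therefore hereditary: the restriction of $\tau$ to the representatives is again such an ordering of $G'$. Conversely, permuting true twins inside $\tau$ preserves the property, since true twins are indistinguishable with respect to every third vertex. For $\texttt{minBFS}(r)$, true twins have equal degree and identical neighbourhoods, so they are interchangeable in the tie-breaking. Hence $\texttt{minBFS}$ orderings from $r$ differ only by how tied twins are ordered.

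\textbf{Step 2: twin classes are consecutive, up to permutation of twins.} I will show that any Bisimplicial Elimination Ordering $\tau$ of $G$ can be turned into one where each twin class is consecutive, only by moving twins next to each other. Since every other item implies bisimpliciality by Theorem~\ref{caracordres} and its proof, it suffices to treat this weakest notion. By Lemma~\ref{2=3}, $\tau$ is a cocomp ordering; together with Corollary~\ref{cestdejacela} it is a Left-Right ordering. Take twins $x <_\tau y$ and any $z$ with $x <_\tau z <_\tau y$. The Left-Right property applied to the edge $xy$ gives $xz, zy \in E$. So $z$ is adjacent to both, and $z$ can be swapped past $y$ or $x$ without creating a pattern. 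One checks this by using that $N[x]=N[y]$ and that any violating triple would transfer to one with $x$ or $y$.

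This is the step I expect to be the main obstacle. The intermediate vertices $z$ are not twins of $x$, so in principle they cannot be moved freely. The claim to prove is that in a Left-Right ordering the vertices strictly between two true twins are themselves forced to be true twins with them. The argument: any $w$ adjacent to $z$ but not to $x$ (or the reverse) would violate the Left-Right condition on the triples $(x,z,w)$ or $(w,z,y)$, using $N[x]=N[y]$ and the position of $w$ on one side of both.

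\textbf{Step 3: conclude with the uniqueness for $G'$.} Once the twin classes are consecutive blocks, the ordering is determined by its restriction to $G'$ together with a permutation inside each block. It remains to show that $G'$ has a unique ordering up to reversal. $G'$ need not be prime, but by Lemma~\ref{chain-graph}, Lemma~\ref{K1,3} and Theorem~\ref{structural}, its only nontrivial modules are the prime $L\cup R$ module plus a universal vertex $u$, or pairs of false twins. By the Claim in the proof of Theorem~\ref{structural}, the latter occur only in a clique minus an edge.

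In each case I will argue directly that the ordering is forced. The structure $L, u, R$ is forced by Lemma~\ref{chain-graph}. A clique minus the edge $xy$ forces $x$ and $y$ to the two ends by the Left-Right property, with the interior vertices all true twins. Otherwise I will invoke \cite{SF1970} via Lemma~\ref{2=3} as in Theorem~\ref{unique}. Combining this with Proposition~\ref{freedom} yields the statement.
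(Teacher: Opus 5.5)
Your proposal is correct and follows essentially the paper's route: the same case analysis on nontrivial modules (a false-twin pair pushed to the two ends of a clique minus an edge, a prime module forced into $L,u,R$ by Lemma~\ref{chain-graph}, the prime case handed to Theorem~\ref{unique}, true twins contiguous), organized more explicitly by first reducing all six orderings to Left-Right orderings via Lemma~\ref{2=3} and Corollary~\ref{cestdejacela} and then passing to the true-twin quotient. Two small repairs are needed: $\texttt{minBFS}(r)$ is not a pattern-avoidance family, so its inclusion among Left-Right orderings comes from Theorem~\ref{BFSintervalpropre}, not Theorem~\ref{caracordres} (and heredity is only needed for Left-Right orderings); and in your twin-contiguity claim, a vertex $w$ adjacent to $z$ but not to $x$ is excluded by the triples $(w,x,z)$ and $(z,y,w)$ (a $w$ between $x$ and $y$ is adjacent to $x$ via the edge $xy$), whereas your triples $(x,z,w)$ and $(w,z,y)$ only handle the reverse case.
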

\begin{proof}
Now we suppose that $G$ admits a non trivial module $M$.

If $M$ is made up with false twins, necessarily  exactly 2 say $z_1, z_2$ and their neighbour is a clique $C$.
A Consecutive Neighborhood Ordering of $G$, can only be $z_1, C, z_2$ or its reversal.

If $M$ is a prime module, then using Lemma \ref{chain-graph}  $G[M]$ is necessarily a balanced complement of a 2-chain graph in series with an universal clique see Figures \ref{MDT}, \ref{PQT}. This part is controlled with a Q-node and therefore has only one interval representation up to reversal.

If $M$ is a set of true twins, i.e. a clique, in this case the Consecutive Neighborhood Ordering property forces the twins to be contiguous. Therefore the ordering is unique up to some permutation of the true twins. \qed
\end{proof}

\textbf{Remarks:}
  If a UIG is not connected, a realizer may be built by arbitrarily numbering the connected components, then putting the realizers of the first component, then a gap (no interval), then a translation of the realizer of the second component, and so on.
In reality, one can give an explicit formula of all representations of an IUG depending on its true twins. Going to complement this can also be applied to enumerate all linear extensions of a proper interval order.
  If a UIG contains true twins $x$ and $x'$ then there is a realizer in which $Interval(x)=Interval(x')$.
 For the particular case of complement of bipartite chain graphs as seen in Section \ref{aperitif} only a unique \texttt{minBFS} is necessary for the recognition, since  a  vertex with minimal degree is always an anchor.

%%%%%%%%%%%%%%%%%%%%%%%%%%%%%%%%%%%%%%%%%%%%%%%%%%
%%%%%%%%%%%%%%%%%%%%%%%%%%%%%%%%%%%%%%%%%%%%%%%%%%
\section{Conclusion and Perspectives}
In this paper we show how to use simple BFS or DFS preprocessed with vertex degrees to recognize Split graphs, Bipartite chain graph, Trivially Perfect Graphs and Proper Interval graphs.
It could be possible to generalize Theorem \ref{BFSintervalpropre} can be generalized  to   interval graphs.
Perhaps our approach based on simple graph searches can be generalized to other well structured graph classes such as for example  $P_4$-indifference graphs, see \cite{LO1993} or \cite{HabibPV01}, or used as preprocessing in some multi-sweep algorithms.

Playing with characterizations of graph classes in terms of existing ordering of the vertices avoiding some patterns allows to partition the known characterizations of proper interval graphs between those that need forbidding 3 patterns and those only 2. But in Theorems \ref{unique} we show that non only they characterize the same class of graphs, but also when the graph is connected the several vertex orderings are the same, up to mirroring and permutation of twins. Maybe this kind of counter-intuitive phenomenon can be found elsewhere.
It should also be noticed the interesting work initiated by F. Dragan
\cite{Bhaduri08} about vertex orderings  that select at each step a minimum degree vertex in the remaining graph. They showed that such orderings yield an easy way to compute a maximum clique for a chordal graph.  

In Section~\ref{section.uig} we proved that either \texttt{minBFS} or \texttt{minDFS}, under the common name of \texttt{minXFS}, may be used to recognize Unit Interval Graphs (UIG). But in fact any graph search where the minimum degree vertex, among eligible ones, is visited, would maintain Invariant \ref{maininv} and could be called a \texttt{minXFS} and used to recognize UIG, but we do not consider yet how to implement in linear time these more generic searches.

\noindent 
\textbf{Level Layered Search (LLS) instead of BFS?}
In many applications BFS is not really needed, a standard LLS could do the same work, as for  example to find an anchor in an IUG.
It would be interesting to investigate in which cases we really need a BFS with its queue data structure.
Not only it is easier to parallelize LLS than BFS, but also deciding whether a vertex is the last one in a BFS is NP-complete \cite{CharbitHM14}, while  polynomial for LLS.

\bibliographystyle{splncs04}
\bibliography{references.bib}

\end{document}